\pdfoutput=1
\RequirePackage{ifpdf}
\ifpdf 
\documentclass[pdftex]{sigma}
\else
\documentclass{sigma}
\fi

\newcommand {\C} {{\mathbb{C}}}

\def\nC{\mathbb{C}}
\def\Sing{\text{Sing}}

\usepackage{amscd}
\usepackage[all]{xy}

\numberwithin{equation}{section}

\newtheorem{Theorem}{Theorem}[section]
\newtheorem{Corollary}[Theorem]{Corollary}
\newtheorem{Lemma}[Theorem]{Lemma}
\newtheorem{Proposition}[Theorem]{Proposition}
{ \theoremstyle{definition}
\newtheorem{Definition}[Theorem]{Definition}

\newtheorem{Example}[Theorem]{Example}
\newtheorem{Remark}[Theorem]{Remark} }

\begin{document}
\allowdisplaybreaks

\newcommand{\arXivNumber}{1808.00743}

\renewcommand{\thefootnote}{}

\renewcommand{\PaperNumber}{047}

\FirstPageHeading

\ShortArticleName{Rational KdV Potentials and Differential Galois Theory}

\ArticleName{Rational KdV Potentials \\ and Differential Galois Theory\footnote{This paper is a~contribution to the Special Issue on Algebraic Methods in Dynamical Systems. The full collection is available at \href{https://www.emis.de/journals/SIGMA/AMDS2018.html}{https://www.emis.de/journals/SIGMA/AMDS2018.html}}}

\Author{Sonia JIM\'ENEZ~$^\dag$, Juan J.~MORALES-RUIZ~$^\ddag$, Raquel S\'ANCHEZ-CAUCE~$^\S$\\ and Mar\'ia-\'Angeles ZURRO~$^\S$}

\AuthorNameForHeading{S.~Jim\'enez, J.J.~Morales-Ruiz, R.~S\'anchez-Cauce and M.-A.~Zurro}

\Address{$^\dag$~Junta de Castilla y Le\'on, Salamanca, Spain}
\EmailD{\href{mailto:sonia.jimver@educa.jcyl.es}{sonia.jimver@educa.jcyl.es}}

\Address{$^\ddag$~Departamento de Matem\'atica Aplicada, E.T.S. Edificaci\'on,\\
\hphantom{$^\ddag$}~Universidad Polit\'ecnica de Madrid, Madrid, Spain}
\EmailD{\href{mailto:juan.morales-ruiz@upm.es}{juan.morales-ruiz@upm.es}}

\Address{$^\S$~Departamento de Matem\'aticas, Universidad Aut\'onoma de Madrid, Madrid, Spain}
\EmailD{\href{mailto:raquel.sanchezcauce@predoc.uam.es}{raquel.sanchezcauce@predoc.uam.es}, \href{mailto:mangeles.zurro@uam.es}{mangeles.zurro@uam.es}}

\ArticleDates{Received September 11, 2018, in final form May 29, 2019; Published online June 25, 2019}

\Abstract{In this work, using differential Galois theory, we study the spectral problem of the one-dimensional Schr\"odinger equation for rational time dependent KdV potentials. In particular, we compute the fundamental matrices of the linear systems associated to the Schr\"odinger equation. Furthermore we prove the invariance of the Galois groups with respect to time, to generic values of the spectral parameter and to Darboux transformations.}

\Keywords{differential Galois theory; KdV hierarchy; Schr\"odinger operator; Darboux transformations; spectral curves; rational solitons}

\Classification{12H05; 35Q51; 37K10}

\renewcommand{\thefootnote}{\arabic{footnote}}
\setcounter{footnote}{0}

\section{Introduction}

In 1977 Airault, McKean and Moser studied in \cite{Airault} some special solutions of the KdV equation,
\begin{gather*}
u_t-6uu_x+u_{xxx}=0,
\end{gather*}
like rational and elliptic ones. Then one year later Adler and Moser studied KdV rational solutions of the KdV hierarchy by means of Darboux--Crum transformations, simplifiying the proof of previous results for these solutions \cite{AM}.

One of the goals of the paper is to study the invariance of the Galois group of the linear system
\begin{gather}
\Phi_x = U \Phi = \begin{pmatrix} 0 & 1 \\ u -E & 0 \end{pmatrix} \Phi,\nonumber \\
\Phi_{t_r} = V_r \Phi = \begin{pmatrix} G_r (u) & F_r (u) \\ -H_r (u) & -G_r (u) \end{pmatrix} \Phi,\label{eq:systemkdv}
\end{gather}
associated to the KdV hierarchy, with respect to the Darboux transformations and respect to the KdV flow (i.e., to the time). In fact as a by-product we have obtained more than that: the Galois group is also invariant with respect to generic values of the spectral parameters (see Section~\ref{galo}).

Thus, in some sense this paper can be considered as a continuation of our previous paper~\cite{JMSZ}, where we studied the invariance of the Galois group of the AKNS systems with respect to the Darboux transformations. But one of the essential differences here is that in general we can not use the Darboux invariance result in~\cite{JMSZ}, because the Darboux transformation here is not a well-defined gauge transformation, i.e., it is not inversible. Thus we must use the classical Darboux tranformation of the Schr\"odinger equation, we call it the Darboux--Crum transform; and then to verify the compatibility of this transform with the complete linear system~\eqref{eq:systemkdv}.

In Section \ref{sect-novikovs DT} we study the action of the Darboux transformations over the recursive relations~\eqref{eq:rec_dif_f} inside the KdV hierarchy. We point out that the results in Section~\ref{sect-novikovs DT} hold not only for rational KdV potentials but also for any \textit{arbitrary} KdV potential.

Also, in Section~\ref{sec espec darb} we study the action over the spectral curve of the Darboux transformations for stationary KdV \textit{arbitrary} potentials.

Brezhnev in three papers \cite{BR1,BR2,BR3} also consider the Galois groups associated to spectral problem for some KdV potentials. More specifically the so-called finite-gap potentials, where the spectral curve is non-singular. Here we study a completely different situation, where the spectral curves are cuspidal curves, corresponding to Adler--Moser rational type solutions.

{In some articles, such as \cite{marshall} and~\cite{wilson}, the authors studied the general Schr\"odinger equation, i.e., the potential $u$ is a differential indeterminate which satisfies KdV$_1$ equation. This is not our situation here, since we consider the family of Adler--Moser rational potentials in $1+1$ dimensions. We would like to point out that in the stationary case the results in~\cite{BEG} for algebraically integrable systems proved that the Galois group is contained in a torus at each generic point in the spectral curve, when the field of coefficients is a formal field.}

However, the general results obtained in Sections \ref{sect-novikovs DT} and \ref{sec espec darb} open the door to study more general families of KdV potentials, such as Rosen--Morse potentials or elliptic KdV potentials.

\section{Basic facts on KdV hierarchy}\label{sect2}

{Consider the derivations $\partial_x, \partial_{t_1}, \partial_{t_2}, \ldots, \partial_{t_m}$ with respect to the variables $x$ and $\boldsymbol{t} = (t_1, \ldots,t_m)$. Let $K_r$ be a differential field with compatible derivations $\partial_x$ and $\partial_{t_r}$, with respect to the variables~$x$ and~$t_r$. Let us assume that its field of constants is the field of complex numbers~$\nC$. Let $E \in \nC$ be a complex parameter and $u \in K_r$ be a fixed element of $K_r$.}

Let us consider the differential recursive relations:
\begin{gather} \label{eq:rec_dif_f}
f_0= 1, \qquad f_{j,x} = -\dfrac{1}{4} f_{j-1, xxx} +uf_{j-1,x} + \dfrac{1}{2} u_x f_{j-1},
\end{gather}
see \cite{GD}, where the authors also provided an algorithm to compute $\partial^{-1}_x (f_{j,x})$. Functions $f_j$ are differential polynomials in $u$, see \cite{GD,Ol}. For the first terms one finds
\begin{gather*}
f_0 = 1, \qquad f_1 = \dfrac{1}{2}u +c_1, \qquad f_2 = -\dfrac{1}{8}u_{xx} +\dfrac{3}{8} u^2 +\dfrac{1}{2} c_1 u +c_2,\nonumber\\
f_3 = \dfrac{1}{32}u_{xxxx} -\dfrac{5}{16}u u_{xx} -\dfrac{5}{32} u_x^2 +\dfrac{5}{16} u^3 + c_1 \left (- \dfrac{1}{8}u_{xx} + \dfrac{3}{8} u^2 \right) + \dfrac{1}{2} c_2 u +c_3,
\end{gather*}
for some integration constants $c_i$.

It is well known that the time dependent KdV hierarchy can be constructed as zero curvature condition of the family of integrable systems (see \cite[Chapter~1, Section~2]{GH}):
\begin{gather}\label{eq:systemkdv0}
 {\mathfrak{s}}_r \quad
 \begin{cases}
\Phi_x = U \Phi = \begin{pmatrix} 0 & 1 \\ u -E & 0 \end{pmatrix} \Phi, \\
\Phi_{t_r} = V_r \Phi = \begin{pmatrix} G_r (u) & F_r (u) \\ -H_r (u) & -G_r (u) \end{pmatrix} \Phi,
\end{cases}
\end{gather}
 where $F_r$, $G_r$ and $H_r \in K_r$ are differential polynomials of the potential $u$ defined by
\begin{gather}
\label{eq:F_r} F_r = \sum_{j=0}^r f_{r-j} E^j, \\
\label{eq:G_r} G_r = -\dfrac{F_{r,x}}{2}, \\
\label{eq:H_r} H_r = (E-u)F_r -G_{r,x} = (E-u)F_r + \dfrac{F_{r,xx}}{2}.
\end{gather}
Observe that the degree in $E$ of the matrices $V_r$ and functions $H_r$ is $ r+1$. We point out that the first equation of~\eqref{eq:systemkdv0} is equivalent to the Schr\"odinger equation
\begin{gather}\label{eq:schr0}
(L-E) \phi =(-\partial_{xx} +u -E) \phi = 0
\end{gather}
with $L=-\partial_{xx} +u$.

Its zero curvature condition
\begin{gather*}
U_{t_r} - V_{r,x} + [U,V_r] =0,
\end{gather*}
yields to the KdV$_r$ equation
\begin{gather}\label{eq:kdvn}
\textrm{KdV}_r\colon \quad u_{t_r} = -\dfrac{1}{2} F_{r, xxx} -2 (E-u)F_{r,x} +u_x F_r.
\end{gather}
Using expressions \eqref{eq:rec_dif_f} and \eqref{eq:F_r}, this equation can be rewritten as
\begin{gather}\label{eq:skdvn2}
\textrm{KdV}_r\colon \quad u_{t_r} = 2 f_{r+1,x}.
\end{gather}
We recall that the equation \eqref{eq:skdvn2} is called the level $r$ equation of the KdV hierarchy. {Varying $r \in \mathbb{N}$ we get the KdV hierarchy.} Whenever we want to specify the dependence on the poten\-tial~$u$, we will write $f_j (u)$, $F_j (u)$, $G_j (u)$ and $H_j (u)$ to emphasize this fact.

\subsection{Adler--Moser rational potentials}

In this section we review the KdV$_r$ rational potentials that Adler and Moser constructed in~\cite{AM}. These are a family of rational potentials $u_n $ for Schr\"odinger operator $-\partial_{xx} +u$ of the form $u_n = -2 (\log \theta_n)_{xx} $, where $\theta_n$ are functions in the variables $x$, $t_r$ defined by the differential recursion
\begin{gather}\label{eq:rec_dif}
 \theta_0 = 1, \qquad \theta_1 = x, \qquad \theta_{n+1,x} \theta_{n-1} - \theta_{n+1} \theta_{n-1, x} = (2n+1)\theta^2_n.
\end{gather}

The solutions of this recursion are polynomials in $x$ with coefficients in the field $F=\mathbb{C}(t_r)$. This is an straighforward consequence of the next result, which is an easy extension of the proof of Lemma~2 in~\cite{AM}.

\begin{Lemma}\label{polinomios} Let be $F=\nC(t_r)$, and $a \in \nC^*$, $b \in \nC$. Let $(F[x], \partial_x )$ be the ring of polynomials with derivation $\partial_x $, whose field of constants is~$ F $.
Let consider the sequence defined recursively by
\begin{gather*}
P_0 = 1, \qquad P_1 = ax + b, \qquad P_{n+1,x} P_{n-1} - P_{n+1} P_{n-1,x} = (2n+1)P^2_n.
\end{gather*}
 Then $P_n \in F[x]$ for all $n$.
\end{Lemma}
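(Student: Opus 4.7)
The plan is strong induction on $n$, closely following the argument of \cite[Lemma~2]{AM}. The base cases $P_0 = 1$ and $P_1 = ax+b$ are already in $F[x]$. For the inductive step, assume $P_0,\dots,P_n$ all lie in $F[x]$. Dividing the defining recursion by $P_{n-1}^2$ rewrites it as
\begin{equation*}
\partial_x\!\left(\frac{P_{n+1}}{P_{n-1}}\right) = (2n+1)\,\frac{P_n^2}{P_{n-1}^2},
\end{equation*}
and integrating yields
\begin{equation*}
P_{n+1} = P_{n-1}\!\left(C + (2n+1)\int \frac{P_n^2}{P_{n-1}^2}\,dx\right)
\end{equation*}
for some integration constant $C\in F$. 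The task is thus to show that this rational expression is in fact polynomial in $x$ over $F$.

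To carry the induction through, I would strengthen the hypothesis to keep track of three auxiliary structural facts: the polynomials $P_{n-1}$ and $P_n$ are coprime in $F[x]$; the zeros of $P_{n-1}$ (in an algebraic closure of $F$) are simple; and the residue of $P_n^2/P_{n-1}^2$ at every such zero vanishes. The last condition is exactly the obstruction to the antiderivative being rational rather than logarithmic, and combined with the first two it implies that $\int P_n^2/P_{n-1}^2\,dx$ is a rational function with simple poles supported exactly on the zeros of $P_{n-1}$. Multiplication by $P_{n-1}$ then cancels every pole, so $P_{n+1}\in F[x]$.

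The main obstacle is verifying that these auxiliary properties propagate from $(P_{n-1},P_n)$ to $(P_n,P_{n+1})$. I would handle this by differentiating the recursion once more and evaluating at zeros of $P_n$ or $P_{n-1}$, producing algebraic identities among $P_{n+1}$, $P_n$, $P_{n-1}$ and their derivatives that force the required coprimality, simplicity of zeros, and vanishing of residues, exactly as in the original Adler--Moser argument. The entire computation transfers verbatim from $\nC$ to $F = \nC(t_r)$ because $F$ is the field of constants of $\partial_x$: partial-fraction decomposition, antidifferentiation of rational functions, and residue calculations are unaffected by the presence of the parameter $t_r$. This is why the present lemma is only a mild extension of \cite[Lemma~2]{AM}.
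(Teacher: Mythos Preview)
Your proposal is correct and matches the paper's approach exactly: the paper gives no detailed proof but simply states that the lemma is ``an easy extension of the proof of Lemma~2 in~\cite{AM}'', and what you have outlined is precisely that Adler--Moser argument (induction via the integrated Wronskian identity, with coprimality, simplicity of zeros, and vanishing residues carried along) together with the observation that replacing $\nC$ by the constant field $F=\nC(t_r)$ changes nothing in the rational-function manipulations. There is nothing to add.
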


Now, applying Lemma \ref{polinomios} for $a= 1$ and $ b=0$, we obtain that functions $\theta_n$ are polynomials of $x$ with coefficients in $\nC(t_r)$ for all $n$. We call these polynomials \textsl{Adler--Moser polynomials}.

The first terms of the recursion are
\[ \begin{matrix} n & & \theta_n \\[5pt]
 0 & & 1 \\
 1 & & x \\
 2 & & x^3 + \tau_2 \\
 3 & & x^6 + 5 \tau_2 x^3 + \tau_3 x - 5 \tau_2^2 \end{matrix} \]
with $\tau_j \in \nC(t_r)$ and $\partial_x \tau_j =0$.

\begin{Definition} The functions
\begin{gather}\label{eq:potencial ur}
u_n := -2 (\log \theta_n)_{xx}
\end{gather}
defined by means of Lemma~\ref{polinomios} are called {\it KdV rational solitons}.
\end{Definition}

Adler and Moser proved in Theorem 2 of \cite{AM} that, for each fixed level $r$ of the KdV hierarchy, there exist expressions for $\tau_j \in \nC (t_r)$, $j=2, \ldots,n$, such that each potential $u_n$ defined by means of the formula \eqref{eq:potencial ur} for $\theta_n$ is a solution of the KdV$_r$ equation \eqref{eq:kdvn}, for constants $c_i=0$, $ i=1, \ldots, r$. Hence, the functions $\tau_2,\ldots, \tau_n$ must be adapted in order to get a solution of the KdV$_r$ equation. When this is the case, i.e., when $u_n$ is a solution of the KdV$_r$ equation, we will denote this adjusted potential as $u_{r,n}$ and the corresponding Adler--Moser polynomial as $\theta_{r,n}$ to stress this fact.

\begin{Definition}\label{def-urn}
The functions
\begin{gather*}
u_{r,n} := -2 (\log \theta_{r,n})_{xx}
\end{gather*}
in $\nC (x,t_r)$ defined by means of Lemma~\ref{polinomios} and with the corresponding adjustment of $\tau_j \in \nC (t_r)$, $j=2, \ldots,n$, are called {\it KdV$_r$ rational solitons}.
\end{Definition}

\begin{Example} As an example of adjusted potentials, we show the first Adler--Moser potentials for $r=1$ with the explicit choice of functions $\tau_2,\ldots, \tau_n$. These potentials are solutions of the KdV$_1$ equation for $c_1=0$: $ u_{t_1} = \frac{3}{2} uu_x - \frac{1}{4} u_{xxx} $. The computations were made using SAGE. We have
\[ \begin{matrix} n & \qquad u_{1,n} & \qquad (\tau_2, \dots, \tau_n) \\[7pt]
 0 &\qquad 0 & \\[3pt]
 1 & \qquad \dfrac{2}{x^2} & \\[10pt]
 2 & \qquad \dfrac{6x\big(x^3 -6t_1\big)}{(x^3 +3t_1)^2} & \qquad (3t_1)\\[10pt]
 3 & \qquad \dfrac{6x \big(2x^9 + 675x^3t_1^2 +1350t_1^3\big)}{\big(x^6 + 15x^3t_1 -45t_1^2\big)^2} & \qquad (3t_1,0) \\[10pt]
 4 & \qquad \dfrac{10 p_4 (x,t_1)}{\big(x^{10} + 45 x^7 t_1 + 4725 x t_1^3 \big)^2} & \qquad (3t_1, 0,0) \\[10pt]
 5 & \qquad \dfrac{30x p_5 (x,t_1)}{(\theta_5)^2} & \qquad \big(3t_1, 0,0, 33075 t_1^3 \big)
 \end{matrix} \]
where
\begin{gather*}
 p_4 (x,t_1) = 2 x^{18} + 72 x^{15} t_1 + 2835 x^{12} t_1^2 - 66150 x^9 t_1^3 - 1190700 x^6 t_1^4 + 4465125 t_1^6,\\
 p_5 (x,t_1) = x^{27} + 126 x^{24} t_1 + 7560 x^{21} t_1^2 + 5655825 x^{15} t_1^4 + 500094000 x^{12} t_1^5 \\
\hphantom{p_5 (x,t_1) =}{} +4313310750 x^9 t_1^6 + 11252115000 x^6 t_1^7 + 295368018750 x^3 t_1^8 -590736037500 t_1^9, \\
 \theta_5 = x^{15} + 105 x^{12} t_1 + 1575 x^9 t_1^2 + 33075 x^6 t_1^3 - 992250 x^3 t_1^4 -1488375 t_1^5.
\end{gather*}
We notice that the adjustment of $\tau_i$ is not linear in $t_1$.
\end{Example}

\subsection{Spectral curves for KdV hierarchy}\label{subsect: spec curve}

Next, we consider the stationary KdV hierarchy. Let $u^{(0)} (x) = u (x, t_r =0)$ be an arbitrary stationary potential. The associated linear system, corresponding to system \eqref{eq:systemkdv0}, will be
\begin{gather}
\Phi_x = U^{(0)} \Phi = \begin{pmatrix} 0 & 1 \\ u^{(0)} -E & 0 \end{pmatrix} \Phi,\nonumber \\
\Phi_{t_r} = V^{(0)}_r \Phi = \begin{pmatrix} G_r \big(u^{(0)}\big) & F_r \big(u^{(0)}\big) \\ -H_r \big(u^{(0)}\big) & -G_r \big(u^{(0)}\big) \end{pmatrix} \Phi.\label{eq:systemkdv0 est 1}
\end{gather}
To simplify the notation, from now on we write $F^{(0)}_r$, $G^{(0)}_r$ and $H^{(0)}_r$ instead of $F_r \big(u^{(0)}\big)$, $G_r \big(u^{(0)}\big)$ and $H_r \big(u^{(0)}\big)$.
The zero curvature condition of this system is now the stationary KdV$_r$ equation
\begin{gather}\label{eq:kdv est}
\textrm{s-KdV}_r\colon \quad 0= -\dfrac{1}{2} F^{(0)}_{r, xxx} -2 \big(E-u^{(0)}\big)F^{(0)}_{r,x} +u^{(0)}_{x} F^{(0)}_r.
\end{gather}
After applying expressions \eqref{eq:rec_dif_f} and \eqref{eq:F_r}, this equation can be rewritten as
\begin{gather*}
\textrm{s-KdV}_r\colon \quad 0 = 2 f_{r+1,x} \big(u^{(0)}\big) = 2 f^{(0)}_{r+1,x}.
\end{gather*}

When the potential $u^{(0)}$ is a solution of the zero curvature condition \eqref{eq:kdv est} we will say that it is a s-KdV$_r$ potential. Under this assumption, the spectral curve of system \eqref{eq:systemkdv0 est 1} for this potential is the characteristic polynomial of matrix $iV^{(0)}_r$:
\begin{align}
\Gamma_r \colon \ \det \big(\mu I_2 - iV^{(0)}_r\big) & = \mu^2 + \big(G^{(0)}_r\big)^2 - F^{(0)}_r H^{(0)}_r \nonumber \\
& = \mu^2 - \dfrac{F^{(0)}_r F^{(0)}_{r,xx}}{2} +\big(u^{(0)}-E\big) \big(F^{(0)}_r\big)^2 + \dfrac{\big(F^{(0)}_{r,x}\big)^2}{4} \nonumber \\
& = \mu^2 -R_{2r+1}(E) =0.\label{eq:spectral curve}
\end{align}
(see for instance \cite{GH} for a general definition of spectral curve). We denote by $p_r (E,\mu ) = \mu^2 - R_{2r+1}(E)$ the equation that defines the spectral curve. We will use the following notation
\begin{gather*}
 R_{2r+1}(E) = \sum_{i=0}^{2r+1} C_i E^i,
\end{gather*}
where $C_i$ are differential polynomials in $u^{(0)}$ with constant coefficients.

\begin{Lemma}\label{lem:coeff c0}
We have the following equality
$\partial_x C_0 = -2f_r f_{r+1,x}. $
\end{Lemma}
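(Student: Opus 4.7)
The plan is a direct computation: extract $C_0$ from the spectral curve formula, differentiate it in $x$, and collapse the result using the recursion \eqref{eq:rec_dif_f}.

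First, since $F^{(0)}_r = \sum_{j=0}^{r} f^{(0)}_{r-j} E^j$ by \eqref{eq:F_r}, its coefficient at $E^0$ is $f^{(0)}_r$, and the same applies to $F^{(0)}_{r,x}$ and $F^{(0)}_{r,xx}$, whose $E^0$-coefficients are $f^{(0)}_{r,x}$ and $f^{(0)}_{r,xx}$. Reading off the $E^0$-coefficient of the middle line of \eqref{eq:spectral curve} — the only delicate point is that the $E$ inside $(u^{(0)}-E)\big(F^{(0)}_r\big)^2$ contributes nothing at order $E^0$ — gives
\begin{gather*}
C_0 = -u^{(0)}\big(f^{(0)}_r\big)^2 + \tfrac{1}{2}\, f^{(0)}_r f^{(0)}_{r,xx} - \tfrac{1}{4}\big(f^{(0)}_{r,x}\big)^2.
\end{gather*}

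Second, I would differentiate term by term. The cross-terms $\tfrac{1}{2} f^{(0)}_{r,x} f^{(0)}_{r,xx}$ arising from $\partial_x\!\big(\tfrac{1}{2} f^{(0)}_r f^{(0)}_{r,xx}\big)$ and from $\partial_x\!\big(-\tfrac{1}{4}\big(f^{(0)}_{r,x}\big)^2\big)$ cancel, so that what remains factors as
\begin{gather*}
\partial_x C_0 = -u^{(0)}_x \big(f^{(0)}_r\big)^2 - 2 u^{(0)} f^{(0)}_r f^{(0)}_{r,x} + \tfrac{1}{2} f^{(0)}_r f^{(0)}_{r,xxx} = -2 f^{(0)}_r \left( \tfrac{1}{2} u^{(0)}_x f^{(0)}_r + u^{(0)} f^{(0)}_{r,x} - \tfrac{1}{4} f^{(0)}_{r,xxx} \right).
\end{gather*}

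Third, I would recognise the bracketed expression: the recursion \eqref{eq:rec_dif_f} applied with $j=r+1$ and potential $u=u^{(0)}$ reads $f^{(0)}_{r+1,x} = -\tfrac{1}{4} f^{(0)}_{r,xxx} + u^{(0)} f^{(0)}_{r,x} + \tfrac{1}{2} u^{(0)}_x f^{(0)}_r$, which is exactly the bracket. Substituting yields $\partial_x C_0 = -2 f^{(0)}_r f^{(0)}_{r+1,x}$, which is the claim (the superscript is suppressed in the statement, as the whole subsection is fixed to the stationary potential $u^{(0)}$).

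There is no real obstacle: the proof is essentially a bookkeeping exercise, and the only care needed is in correctly extracting the $E^0$-coefficient and in matching the signs against the recursion. A natural sanity check is that on an s-KdV$_r$ potential one has $f^{(0)}_{r+1,x}=0$, so $C_0$ is then constant in $x$, consistent with the spectral-curve coefficients being scalars.
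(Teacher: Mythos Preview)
Your proof is correct and follows exactly the same approach as the paper: set $E=0$ in \eqref{eq:spectral curve} to read off $C_0$, differentiate in $x$, and identify the result with $-2f_r f_{r+1,x}$ via the recursion \eqref{eq:rec_dif_f}. You have simply spelled out the intermediate cancellations and the factorisation that the paper leaves implicit.
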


\begin{proof}
Replacing $E=0$ in \eqref{eq:spectral curve} we find
\begin{gather*}
R_{2r+1} (0) = C_0 = \frac{- f_{r,x} f_{r,x}}{4} + \frac{f_{r} f_{r,xx}}{2} - u^{(0)} f_{r} f_{r}.
\end{gather*} By derivating with respect to $x$ and using formula \eqref{eq:rec_dif_f} we arrive to the required expression.
\end{proof}

With this matrix presentation it is easy to prove the following result due to Burchnall and Chaundy~\cite{BC1}:

\begin{Proposition}[{\cite[Section~II, p.~560]{BC1}}]
Let $u = u (x)$ be solution of equation~\eqref{eq:kdv est}, we have that $p( E,\mu) = \mu^2 - R_{2r+1}(E) \in \nC[\mu, E]$. Moreover, $R_{2r+1}(E)$ is a polynomial of degree $2r+1$ in~$\nC [E]$.
\end{Proposition}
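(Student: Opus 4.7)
The plan is to derive both claims from the Lax-matrix structure that the stationary hypothesis imposes on $V^{(0)}_r$. First, observe that since $V^{(0)}_r$ is traceless, the characteristic polynomial expansion in \eqref{eq:spectral curve} gives
\begin{gather*}
R_{2r+1}(E) \;=\; F^{(0)}_r H^{(0)}_r - \big(G^{(0)}_r\big)^2 \;=\; \det V^{(0)}_r.
\end{gather*}
The whole argument then splits into showing (i) that this expression is independent of $x$, and (ii) that its degree in $E$ is exactly $2r+1$.

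For (i), I would note that when $u^{(0)}$ solves \eqref{eq:kdv est} and $u^{(0)}_{t_r}=0$, the zero curvature condition collapses to the Lax-type identity $V^{(0)}_{r,x} = \big[U^{(0)}, V^{(0)}_r\big]$. Squaring and applying the Leibniz rule give
\begin{gather*}
\partial_x \big(V^{(0)}_r\big)^2 \;=\; V^{(0)}_{r,x} V^{(0)}_r + V^{(0)}_r V^{(0)}_{r,x} \;=\; \big[U^{(0)}, \big(V^{(0)}_r\big)^2\big],
\end{gather*}
and taking traces yields $\partial_x \,\mathrm{tr}\big(V^{(0)}_r\big)^2 = 0$. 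Since $V^{(0)}_r$ is traceless, $\det V^{(0)}_r = -\tfrac12 \mathrm{tr}\big(V^{(0)}_r\big)^2$, and therefore $R_{2r+1}(E)$, regarded as an element of $K_r[E]$, is $x$-independent. Equivalently each coefficient $C_i$ lies in the field of constants $\nC$, so $R_{2r+1}(E) \in \nC[E]$ and $p(E,\mu) \in \nC[\mu,E]$.

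For (ii), I would inspect the $E$-dependence. By \eqref{eq:F_r}, $F^{(0)}_r$ is a polynomial in $E$ of degree $r$ with leading coefficient $f_0 = 1$; hence $G^{(0)}_r = -F^{(0)}_{r,x}/2$ has degree at most $r-1$ in $E$, so $\big(G^{(0)}_r\big)^2$ has degree at most $2r-2$. By \eqref{eq:H_r}, $H^{(0)}_r = \big(E-u^{(0)}\big)F^{(0)}_r + F^{(0)}_{r,xx}/2$ has degree exactly $r+1$ in $E$ with leading coefficient $1$. Consequently $F^{(0)}_r H^{(0)}_r$ contributes the monomial $E^{2r+1}$ with coefficient $1$, which cannot be cancelled by $\big(G^{(0)}_r\big)^2$, so $\deg_E R_{2r+1}(E) = 2r+1$.

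The main obstacle I anticipate is conceptual rather than computational: recognizing that the stationarity hypothesis is exactly what reduces the full zero-curvature identity to a Lax equation in $x$ alone, since that reduction is precisely what turns a differential relation for $V^{(0)}_r$ into an algebraic conservation law for its determinant. Once this step is in place, both assertions reduce to short algebraic checks, and one could equivalently verify (i) by a direct computation of $\partial_x R_{2r+1}(E)$ using \eqref{eq:rec_dif_f} and the identity $f^{(0)}_{r+1,x}=0$, as already illustrated by Lemma~\ref{lem:coeff c0} for the constant term $C_0$.
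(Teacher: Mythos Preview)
Your proof is correct and is precisely the argument the paper has in mind: the paper does not write out a proof of this proposition but only remarks that ``with this matrix presentation it is easy to prove'' the result, citing Burchnall--Chaundy. Your derivation---reducing the zero-curvature condition to the Lax equation $V^{(0)}_{r,x}=[U^{(0)},V^{(0)}_r]$ via stationarity, then using $\partial_x\,\mathrm{tr}\big(V^{(0)}_r\big)^2=0$ together with $\det V^{(0)}_r=-\tfrac12\mathrm{tr}\big(V^{(0)}_r\big)^2$ to conclude $x$-independence, and finally reading off the $E$-degree from \eqref{eq:F_r}--\eqref{eq:H_r}---is exactly the ``matrix presentation'' argument the paper alludes to.
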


\begin{Remark}
A potential $u$ can be a solution of several equations of the KdV hierarchy. Therefore, for each level considered, there would be a different spectral curve for the same potential. This ambiguity is clarified when the corresponding Schrodinger operator's centralizer is con\-si\-dered. Furthermore, this centralizer is isomorphic to the ring of rational functions of an algebraic plane curve: the spectral curve that corresponds to the first level of the hierarchy of which the potential~$u$ is a solution. See~\cite{MoRuZu2}.
\end{Remark}

This proposition together with Lemma~\ref{lem:coeff c0} and relation~\eqref{eq:rec_dif_f} yields to the following result.

\begin{Corollary}\label{cor:curva y pot}
Let $\mu^2 - R_{2r+1}(E)=0$ be the spectral curve for potential $u^{(0)}$. If the degree of $R_{2r+1}(E)$ is $2r+1$ in $E$ then, $u^{(0)}$ is solution of a s-KdV$_r$ equation.
\end{Corollary}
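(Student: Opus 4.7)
The plan is to apply Lemma~\ref{lem:coeff c0} together with the hypothesis that $R_{2r+1}(E)$ is a bona fide spectral curve polynomial, i.e., that its coefficients $C_i$ are $x$-independent. In particular one has $\partial_x C_0 = 0$. Feeding this into the lemma yields the key factored identity
\[
f_r\big(u^{(0)}\big)\cdot f_{r+1,x}\big(u^{(0)}\big) = 0
\]
in the differential field $K_r$. Since $K_r$ is an integral domain, one of the two factors vanishes identically in $x$.

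In the favourable case $f_{r+1,x}(u^{(0)}) = 0$, the stationary form of reformulation~\eqref{eq:skdvn2}, namely $0 = 2 f_{r+1,x}$, is precisely s-KdV$_r$, which is exactly what we want. The remaining case $f_r(u^{(0)}) = 0$ must be ruled out using the degree hypothesis. Here I would argue as follows: $f_r$ is the $E^0$-coefficient of $F_r = \sum_{j=0}^r f_{r-j} E^j$, so its vanishing forces $F_r = E\cdot F_{r-1}$. One checks from~\eqref{eq:G_r}--\eqref{eq:H_r} that this induces $G_r = E\,G_{r-1}$ and $H_r = E\,H_{r-1}$, and then from~\eqref{eq:spectral curve} every term in $R_{2r+1} = F_rH_r - G_r^2$ picks up a factor $E^2$, yielding the degenerate factorisation
\[
R_{2r+1}(E) = E^2\, R_{2r-1}(E).
\]
Thus $R_{2r+1}$ reduces to the level-$(r-1)$ spectral polynomial dilated by $E^2$, contradicting the hypothesis that the spectral curve is genuinely of degree $2r+1$. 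Hence $f_r \not\equiv 0$, and we conclude $f_{r+1,x} = 0$.

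The main technical subtlety, and the step I expect to be the most delicate, is the precise reading of the degree hypothesis. As a polynomial in $E$, $R_{2r+1}$ always has leading term $E^{2r+1}$ with coefficient $1$ (visible directly from the top term of $(E-u)F_r^2$), so literal polynomial degree $2r+1$ is automatic. The real content of the hypothesis is that $R_{2r+1}$ is genuinely a level-$r$ spectral polynomial and does not admit the degenerate factorisation $R_{2r+1}=E^2 R_{2r-1}$ coming from a lower level of the hierarchy; this is exactly what excludes the case $f_r = 0$ above and lets us conclude that $u^{(0)}$ solves s-KdV$_r$ via~\eqref{eq:skdvn2}.
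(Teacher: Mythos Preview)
Your overall strategy---using Lemma~\ref{lem:coeff c0} together with the constancy of $C_0$ to obtain $f_r\,f_{r+1,x}=0$, then splitting into cases---matches the paper's route. The divergence is in how you handle the case $f_r(u^{(0)})\equiv 0$.

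Your degree argument does not close: as you yourself note, $R_{2r+1}=E^2 R_{2r-1}$ still has degree $2r+1$, so no literal contradiction with the hypothesis arises, and you are forced to reinterpret the statement. This reinterpretation is both unrigorous and unnecessary. The paper's hint is to invoke the recursion~\eqref{eq:rec_dif_f} instead: if $f_r(u^{(0)})\equiv 0$ as a function of $x$, then $f_{r,x}=f_{r,xxx}=0$ as well, and substituting into
\[
f_{r+1,x}=-\tfrac{1}{4}f_{r,xxx}+u^{(0)}f_{r,x}+\tfrac{1}{2}u^{(0)}_x f_r
\]
gives $f_{r+1,x}(u^{(0)})=0$ immediately. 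Thus \emph{both} branches of your dichotomy already yield the s-KdV$_r$ equation $2f_{r+1,x}=0$; there is nothing to exclude. Replace your attempt to rule out $f_r=0$ with this one-line application of~\eqref{eq:rec_dif_f} and the proof is complete, without any need to reinterpret the degree hypothesis.
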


Now, we consider the Adler--Moser potentials $u_{r,n}$. We have the following result in the stationary case \cite{AM}:

\begin{Lemma} \label{lema:pot estacionarios}
For $\tau_j =0$, $ j=2, \ldots,n$, the Adler--Moser polynomials and potentials become
\begin{gather*}
\theta^{(0)}_n (x) = \theta_n (x, 0) = x^{n(n+1)/2} \qquad \textrm{and} \qquad u^{(0)}_{r,n} (x) =u_{r,n} (x,{t}_r = 0)= n(n+1)x^{-2}.
\end{gather*}
\end{Lemma}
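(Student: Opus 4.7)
The plan is to prove by induction on $n$ that $\theta_n^{(0)}(x) = x^{n(n+1)/2}$; the formula $u_{r,n}^{(0)}(x) = n(n+1)x^{-2}$ will then follow by a direct computation from the definition \eqref{eq:potencial ur}. The base cases $n=0,1$ are immediate: $\theta_0 = 1 = x^{0}$ and $\theta_1 = x = x^{1}$, matching the exponents $0(0+1)/2 = 0$ and $1(1+1)/2 = 1$.

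For the inductive step, I would rewrite the Adler--Moser recursion \eqref{eq:rec_dif} as
\[
\left(\frac{\theta_{n+1}}{\theta_{n-1}}\right)_x = \frac{(2n+1)\theta_n^2}{\theta_{n-1}^2},
\]
so that $\theta_{n+1}/\theta_{n-1}$ is obtained from $\theta_n,\theta_{n-1}$ by integration in $x$, up to an additive constant (with respect to $x$) that plays the role of the parameter appearing at this level. Assuming inductively $\theta_{n-1}^{(0)} = x^{(n-1)n/2}$ and $\theta_n^{(0)} = x^{n(n+1)/2}$, the right-hand side simplifies to $(2n+1)x^{2n}$, whose $x$-antiderivative is $x^{2n+1}$ modulo a constant. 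Setting all $\tau_j$ equal to zero corresponds precisely to taking this constant of integration to be zero at every step of the recursion; hence
\[
\theta_{n+1}^{(0)} = x^{2n+1}\cdot x^{(n-1)n/2} = x^{(n+1)(n+2)/2},
\]
which closes the induction.

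For the potential, a direct computation gives
\[
(\log x^{n(n+1)/2})_{xx} = \left(\frac{n(n+1)}{2x}\right)_x = -\frac{n(n+1)}{2x^2},
\]
so that $u_{r,n}^{(0)} = -2(\log \theta_n^{(0)})_{xx} = n(n+1)\, x^{-2}$. The only point requiring care, and which I view as the main (modest) obstacle, is the clean identification of the parameters $\tau_j \in \nC(t_r)$ with the $x$-integration constants arising in the Adler--Moser recursion; once this identification is spelled out from Lemma~\ref{polinomios} and the construction preceding Definition~\ref{def-urn}, the whole argument reduces to the elementary induction above.
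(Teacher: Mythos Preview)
Your proof is correct. The paper does not actually supply its own proof of this lemma; it simply cites the original Adler--Moser paper~\cite{AM} and states the result. Your induction argument is the standard one, and your identification of the parameters $\tau_j$ with the $x$-integration constants arising at each step of the recursion~\eqref{eq:rec_dif} is exactly how they enter the construction (this is visible in the explicit table of $\theta_n$ following Lemma~\ref{polinomios}, where each new $\tau_j$ appears as the additive constant at level~$j$). No gap remains.
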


For a fixed $n$, potential $u^{(0)}_{r,n} (x)= n(n+1)x^{-2}$ defined in the aforementioned lemma is solution of the level~$n$ equation of the stationary KdV hierarchy, the s-KdV$_n$ equation. This implies that in the stationary case we will have $r=n$, i.e., for these s-KdV potentials the iteration level of the recursion~\eqref{eq:rec_dif} is the same as the s-KdV level. For this reason, from now on we will denote the stationary Adler--Moser potentials just by $u^{(0)}_n (x)$ and we will refer to level $n$ stationary KdV equation (instead of level~$r$):
\begin{gather*}
\textrm{s-KdV}_n\colon \quad 0 = 2 f^{(0)}_{n+1,x}.
\end{gather*}

It is well known that the spectral curve associated to system \eqref{eq:systemkdv0 est 1} for these Adler--Moser stationary potentials are
\begin{gather*}
 \Gamma_n \colon \ p_n (E,\mu )= \mu^2 - E^{2n+1} =0.
\end{gather*}
Therefore, we will associate these curves corresponding to the stationary situation, to system~\eqref{eq:systemkdv0} for Adler--Moser potentials~$u_{r,n}$.

\begin{Remark}If we take the potential $u_{r,n}$ solution of KdV$_r$ equation, then the poten\-tial~$u^{(0)}_n (x)$ is a solution of the s-KdV$_n$ equation. Thus, we can link the level~$r$ of the time-dependent KdV hierarchy with the level~$n$ of the stationary KdV hierarchy.
\end{Remark}

\section[Darboux transformations for $f_j$]{Darboux transformations for $\boldsymbol{f_j}$}\label{sect-novikovs DT}

In this section we establish a series of results that will allow us to perform Darboux transformations to KdV differential systems~\eqref{eq:systemkdv0} in the case we have particular solutions at energy level zero. In this way, we can extend the techniques to compute matrix Darboux transformations developed, for instance, in~\cite{GHZ} to the only case where they are not valid: $E=0$.

For that, we will consider the classical Darboux--Crum transformations for the Schr\"odinger equation and we will present the behaviour of these transformations acting on the differential polynomials~$f_j (u)$.

Let us consider the Schr\"odinger equation
\begin{gather}\label{eq:schrE0}
(L-E_0) \phi =(-\partial_{xx} +u -E_0) \phi = 0,
\end{gather}
where $E_0$ is a fixed energy level. Let $\phi_0$ be a solution of such equation. Recall that a {\it Darboux transformation of a function $\phi$ by $\phi_0$} is defined by the formula
\[
{\rm DT}(\phi_0) \phi= \phi_x - \dfrac{\phi_{0,x}}{\phi_0} \phi.
\]
Then the transformed function $\widetilde{\phi}={\rm DT}(\phi_0) \phi$ is a solution of the Schr\"odinger equation for potential $ \widetilde{u}=u - 2 (\log \phi_0)_{xx}$, whenever $\phi$ is a solution of Schr\"odinger equation for potential $u$ and energy level $E \neq E_0$~\cite{CRUM,da1,da2,Salle}. We will denote by ${\rm DT}(\phi_0) u $ the potential $\widetilde{u}$ to point out the fact that it depends on the choice of $\phi_0$.

Next we can observe that the Riccati equation
\begin{gather}\label{eq:riccati sigma}
 \sigma_x = u -E_0 - \sigma^2
\end{gather}
has $\sigma_0 = (\log \phi_0)_x$ as solution, and then
\begin{gather}\label{eq:chapter3 DT}
 {\rm DT}(\phi_0) u = u - 2 \sigma_{0,x}.
\end{gather}
In this way, we retrieve a Riccati equation for $\widetilde{u}$:
\begin{gather*}
 \widetilde{u} = u- 2 \sigma_x = \big(\sigma_x + E_0 + \sigma^2\big) - 2 \sigma_x = \sigma^2 - \sigma_x + E_0.
\end{gather*}
Moreover, whenever we have a solution $\phi$ of the Schr\"odinger equation \eqref{eq:schr0}, the formula $\sigma = (\log \phi)_x$ gives a solution of the Riccati equation \eqref{eq:riccati sigma}. Hence, $\sigma$ satisfies the nonlinear differential equation
\begin{gather}\label{eq:riccati sigma x}
 \sigma_{xx} = u_x -2 \sigma \sigma_x.
\end{gather}

Next, we consider the matrix differential system \eqref{eq:systemkdv0}. Then we perform a Darboux transformation, ${\rm DT}(\phi_0)$, on it obtaing a new differential system, say $\Phi_x = \widetilde{U} \Phi$, $\Phi_{t_r} = \widetilde{V}_r \Phi$, whose zero curvature condition is still equation~\eqref{eq:kdvn}. Let $F_r ( \widetilde{u})$, $G_r( \widetilde{u})$ and $H_r ( \widetilde{u})$ be the corresponding entries of the matrix $ \widetilde{V}_r$. These differential polynomials are given by expressions~\eqref{eq:F_r}, \eqref{eq:G_r} and~\eqref{eq:H_r} in terms on $f_j (\widetilde{u})$. We will establish the relation between $f_j (\widetilde{u})$ and $f_j ({u})$ in the next theorem.

\begin{Theorem}\label{lemma: darboux fi}Let $\phi$ be a solution of Schr\"odinger equation \eqref{eq:schrE0}. Let be $\sigma = (\log \phi)_x $ and $\widetilde{u} = u - 2 \sigma_x$ the Darboux transformed of $u$ by $\phi$. Then, we have
\[
f_j (\widetilde{u}) = f_j (u) + A_j, \qquad \text{for} \quad j=0, 1, 2, \ldots,
\]
where $A_j $ is a differential polynomial in $u$ and $\sigma$. Moreover, $A_j$ satisfies the recursive differential relations
\begin{enumerate}\itemsep=0pt
\item[$1)$] $A_j = -\frac{1}{4} A_{j-1,xx}+ u A_{j-1} - \frac{3}{2} \sigma_x A_{j-1} - \sigma_x f_{j-1} (u)$ and
\item[$2)$] $A_{j,x}+2 \sigma A_j +2 f_{j,x} (u) =0 $.
\end{enumerate}
\end{Theorem}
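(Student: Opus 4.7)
My plan is to proceed by simultaneous induction on $j\geq 0$, establishing at each stage both (i) the existence of $A_j$, given by formula~1, such that $f_j(\widetilde u)=f_j(u)+A_j$, and (ii) the auxiliary identity~2. The base case $j=0$ is immediate: set $A_0=0$, so that $f_0(\widetilde u)=1=f_0(u)+A_0$, formula~1 is used only from $j\geq 1$ on, and identity~2 reduces to $0=0$. (One may also check $j=1$ directly: relation~1 gives $A_1=-\sigma_x$, in agreement with $f_1(\widetilde u)=\tfrac12\widetilde u+c_1=f_1(u)-\sigma_x$, while relation~2 collapses to $-\sigma_{xx}-2\sigma\sigma_x+u_x=0$, which is exactly the Riccati derivative~\eqref{eq:riccati sigma x}.)

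For the inductive step I would apply the defining recursion \eqref{eq:rec_dif_f} with $u$ replaced by $\widetilde u=u-2\sigma_x$, and substitute the inductive hypothesis $f_{j-1}(\widetilde u)=f_{j-1}(u)+A_{j-1}$. Expanding and grouping one obtains $f_{j,x}(\widetilde u)=f_{j,x}(u)+D_j$, where $D_j$ is a differential polynomial in $u$, $\sigma$ and $A_{j-1}$. On the other hand, if $A_j$ is declared by relation~1, differentiating that formula yields an explicit expression for $A_{j,x}$. A direct comparison, after using \eqref{eq:riccati sigma x} to substitute $\sigma_{xx}=u_x-2\sigma\sigma_x$, collapses the difference $D_j-A_{j,x}$ to $-\tfrac{\sigma_x}{2}\bigl(A_{j-1,x}+2\sigma A_{j-1}+2f_{j-1,x}(u)\bigr)$, which vanishes by the inductive hypothesis on~2. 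This proves $f_j(\widetilde u)-f_j(u)=A_j$ (the integration constants being absorbed into the $c_i$'s in the $f_j$'s) and establishes relation~1 at level~$j$.

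To close the induction I would verify relation~2 at level~$j$. Plugging formula~1 and its $x$-derivative into $A_{j,x}+2\sigma A_j+2f_{j,x}(u)$, using \eqref{eq:rec_dif_f} to expand $2f_{j,x}(u)$ and \eqref{eq:riccati sigma x} to eliminate $\sigma_{xx}$, the expression reduces to a linear combination of $A_{j-1}$ and its derivatives $A_{j-1,x}, A_{j-1,xx}, A_{j-1,xxx}$ weighted by differential polynomials in $u$, $\sigma$ and $f_{j-1}(u)$. Eliminating the higher derivatives of $A_{j-1}$ via the inductive identity $A_{j-1,x}=-2\sigma A_{j-1}-2f_{j-1,x}(u)$ and its successive $x$-derivatives produces the cancellation to zero.

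The main obstacle will be purely bookkeeping: each step involves several monomials in $u$, $\sigma$, $A_{j-1}$ and their derivatives, and the cancellations are delicate. Conceptually, however, the proof rests on a single idea, namely that relation~2 is precisely the compatibility condition allowing the recursion on relation~1 to close, and the only external identity required throughout is the Riccati equation \eqref{eq:riccati sigma x} for~$\sigma$.
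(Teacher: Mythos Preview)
Your proposal is correct and follows essentially the same approach as the paper: a simultaneous induction on $j$ using the recursion \eqref{eq:rec_dif_f} for $\widetilde u$, the Riccati derivative \eqref{eq:riccati sigma x}, and the interplay between relations~1 and~2. The only organizational difference is that the paper first derives the raw expression \eqref{eq:chapter3 A_j} for $A_{j+1,x}$ and then separately proves statements~1 and~2, whereas you take relation~1 as the definition of $A_j$ from the outset and verify directly that $D_j-A_{j,x}=-\tfrac{\sigma_x}{2}\bigl(A_{j-1,x}+2\sigma A_{j-1}+2f_{j-1,x}(u)\bigr)$; the underlying computations and cancellations are identical.
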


\begin{proof} We will proceed by induction on $n$.

First, we prove by induction that $f_j (\widetilde{u}) = f_j (u)+ A_j$. For $j=0$ we have $f_0 (\widetilde{u}) = 1 = f_0 (u) + A_0$, where $A_0 =0$. We suppose it is true for $j$ and we prove it for $j+1$. By applying equation \eqref{eq:rec_dif_f} and induction hypothesis we find
\begin{align*}
 f_{j+1,x} (\widetilde{u}) ={}& -\dfrac{1}{4} f_{j,xxx} (\widetilde{u}) + \widetilde{u} f_{j,x} (\widetilde{u}) + \dfrac{1}{2} \widetilde{u}_x f_{j} (\widetilde{u}) \\
 ={}& -\dfrac{1}{4} f_{j,xxx} (u) +u f_{j,x} (u) + \dfrac{1}{2} u_x f_j (u) -\dfrac{1}{4} A_{j,xxx} +u A_{j,x} -2f_{j,x} (u) \sigma_x \\
 & {}-2 A_{j,x} \sigma_x + \dfrac{1}{2} u_x A_j - f_j (u) \sigma_{xx} - A_j \sigma_{xx} = f_{j+1,x} (u) + A_{j+1,x},
\end{align*}
for
\begin{gather}\label{eq:chapter3 A_j}
 A_{j+1,x} = -\dfrac{A_{j,xxx}}{4} +u A_{j,x} -2f_{j,x} (u) \sigma_x -2 A_{j,x} \sigma_x + \dfrac{ u_x A_j}{2} - f_j (u) \sigma_{xx} - A_j \sigma_{xx}.
\end{gather}
Thus, $f_{j+1} (\widetilde{u}) = f_{j+1} (u) + A_{j+1}$ as we wanted to prove.

Now, we prove statements 1 and 2. We do it by induction and simultaneously. Since $A_0 =0$ and $f_0 (u) = f_0 (\widetilde{u}) = 1$, the case $j=0$ is the trivial one. So, we start the induction process in $j=1$. For this, by using recursion formula~\eqref{eq:rec_dif_f} we have
\[ f_{1,x} (\widetilde{u})= -\dfrac{1}{4} f_{0,xxx} (\widetilde{u}) + \widetilde{u} f_{0,x} (\widetilde{u}) + \dfrac{1}{2} \widetilde{u}_x f_{0} (\widetilde{u}) =\dfrac{1}{2} \widetilde{u}_x. \]
Hence, $f_1 (\widetilde{u}) = \frac{\widetilde{u}}{2} + c_1 = \frac{u}{2} - \sigma_x + c_1 = f_1 (u) - \sigma_x$, then $A_1 = - \sigma_x$. For $j=1$ statements~1 and~2 read
\begin{enumerate}\itemsep=0pt
 \item[1)] $ -\frac{1}{4}A_{0,xx} + u A_0 - \frac{3}{2} \sigma_x A_0 - \sigma_x f_0 (u) = - \sigma_x = A_1 $ and
 \item[2)] $ -2 f_{1,x} (u) - A_{1,x} = -u_x + \sigma_{xx} = -2 \sigma \sigma_x = 2 \sigma A_1 $,
\end{enumerate}
by equation \eqref{eq:riccati sigma x}. Now, we suppose the both statements are true for $j$ and we prove them for $j+1$. Derivation with respect to $x$ in the right hand side of statement~1 yields to
\begin{gather*}
 -\dfrac{A_{j,xxx}}{4} + u_x A_{j} + u A_{j,x} - \dfrac{3}{2} \sigma_{xx} A_{j} - \dfrac{3}{2} \sigma_{x} A_{j,x} - \sigma_{xx} f_{j} (u) - \sigma_x f_{j,x} (u) \\
\qquad{} = -\dfrac{A_{j,xxx}}{4} + u A_{j,x} - \sigma_{xx} f_{j} (u) -\sigma_{xx} A_{j} - \dfrac{ \sigma_{xx} A_j}{2} + u_x A_j - \dfrac{3}{2} \sigma_{x} A_{j,x} - \sigma_x f_{j,x} (u).
\end{gather*}
Applying equality \eqref{eq:riccati sigma x} to the term $\sigma_{xx} A_j /2$ we get
\begin{gather*}
 -\dfrac{A_{j,xxx}}{4} + u A_{j,x} - \sigma_{xx} f_{j} (u) -\sigma_{xx} A_{j} - \dfrac{u_x A_j - 2\sigma \sigma_x A_j}{2} + u_x A_j - \dfrac{3}{2} \sigma_{x} A_{j,x} - \sigma_x f_{j,x} (u) \\
 = - \dfrac{A_{j,xxx}}{4} + u A_{j,x} - \sigma_{xx} f_{j} (u) -\sigma_{xx} A_{j} + \sigma \sigma_x A_j + \dfrac{u_x A_j}{2} - \dfrac{3}{2} \sigma_{x} A_{j,x} - \sigma_x f_{j,x} (u) \\
 = - \dfrac{A_{j,xxx}}{4} + u A_{j,x} - \sigma_{xx} f_{j} (u) -\sigma_{xx} A_{j} + \dfrac{u_x A_j}{2} - 2 \sigma_{x} A_{j,x} - \sigma_x f_{j,x} (u) +\sigma_x ( \sigma A_j + \tfrac{1}{2} A_{j,x} ).
\end{gather*}
Applying induction hypothesis for statement 2 we have
\begin{gather*}
 -\dfrac{A_{j,xxx}}{4} + u A_{j,x} - \sigma_{xx} f_{j} (u) -\sigma_{xx} A_{j} + \dfrac{u_x A_j}{2} - 2 \sigma_{x} A_{j,x} - \sigma_x f_{j,x} (u) -\sigma_x f_{j,x} (u) \\
\qquad{} = -\dfrac{A_{j,xxx}}{4} + u A_{j,x} - \sigma_{xx} f_{j} (u) -\sigma_{xx} A_{j} + \dfrac{u_x A_j}{2} - 2 \sigma_{x} A_{j,x} -2 \sigma_x f_{j,x} (u),
\end{gather*}
which is exactly expression \eqref{eq:chapter3 A_j} for $A_{j+1,x}$. So, we can assume that{\samepage
\[ A_{j+1} = -\dfrac{A_{j,xx}}{4} + u A_{j} - \dfrac{3}{2} \sigma_x A_{j} - \sigma_x f_{j} (u). \]
Thus, statement 1 is proved.}

Finally, by equations \eqref{eq:rec_dif_f}, \eqref{eq:chapter3 A_j}, \eqref{eq:riccati sigma x} and induction hypothesis we find for statement~2
\begin{align*}
 -2 f_{j+1,x} - A_{j+1,x} ={}& \dfrac{f_{j,xxx} (u)}{2} -2u f_{j,x} (u) - u_x f_j (u) + \dfrac{A_{j,xxx}}{4} -u A_{j,x} +2f_{j,x} (u) \sigma_x \\
 &{}- \dfrac{ u_x A_j}{2} +2 A_{j,x} \sigma_x + f_j (u) \sigma_{xx} + A_j \sigma_{xx} \\
 ={}& \left ( \dfrac{f_{j,x} (u)}{2} + \dfrac{A_{j,x}}{4} \right )_{xx} + (-2f_{j,x} (u) - A_{j,x}) (u - \sigma_x) -u_x f_j (u) \\
 & {}- \dfrac{u_x A_j}{2} + A_{j,x} \sigma_x + f_j (u) \sigma_{xx} + A_j \sigma_{xx} \\
={} & - \dfrac{\sigma A_{j,xx}}{2} +2u \sigma A_j + A_j \left ( \dfrac{\sigma_{xx}}{2}- \dfrac{u_x }{2} - 2 \sigma \sigma_x \right ) + f_j (u) (\sigma_{xx} -u_x) \\
={} & - \dfrac{\sigma A_{j,xx}}{2} +2u \sigma A_j -3 A_j \sigma \sigma_x -2 f_j (u) \sigma \sigma_x \\
={} & 2 \sigma \left ( -\dfrac{A_{j,xx}}{4} + u A_{j} - \dfrac{3}{2} \sigma_x A_{j} - \sigma_x f_{j} (u) \right) = 2 \sigma A_{j+1}
\end{align*}
by statement 1. Therefore, statement 2 is also proved. This completes the proof.
\end{proof}

\begin{Example}To illustrate the previous theorem we will consider the following KdV$_2$ potentials in the system~\eqref{eq:systemkdv0}.

Let us take
\begin{gather*}
u= \frac{6\big(2x^{10} +270x^5 t_2 + 675t_2^2\big)}{x^2 \big(x^5 - 45t_2\big)^2}
\end{gather*} and the solution $\phi_0= \frac{x^2}{x^5 - 45t_2}$. Then $ \widetilde{u}=\frac{6}{x^2}$. Observe that
\begin{gather*}
 f_1 (u) = \dfrac{u}{2} = \dfrac{3\big(2x^{10} +270x^5 t_2 + 675t_2^2\big)}{x^2 \big(x^5 - 45t_2\big)^2}, \qquad
 f_2 (u)= -\dfrac{u_{xx}}{8} + \dfrac{3}{8}u^2 = \dfrac{45x\big(x^5 +30 t_2\big)}{\big(x^5 - 45t_2\big)^2},
\end{gather*}
and also
\begin{gather*}
 f_1 (\widetilde{u}) = \dfrac{\widetilde{u}}{2} = \dfrac{3}{x^2}, \qquad
 f_2 (\widetilde{u})= -\dfrac{\widetilde{u}_{xx}}{8} + \dfrac{3}{8}\widetilde{u}^2 = \dfrac{9}{x^4}.
\end{gather*}
Hence, in this case
\begin{gather*}
A_1 = f_1 (\widetilde{u}) - f_1 (u) = \dfrac{-3\big(x^{10} + 360x^5 t_2 - 1350t_2^2\big)}{x^2\big(x^5 - 45t_2\big)^2},\\
 A_2 = f_2 (\widetilde{u}) - f_2 (u) = \dfrac{-9\big(4x^{10} +240 x^5t_2 -2025t_2^2\big)}{x^4 \big(x^5 -45t_2\big)^2}.
\end{gather*}
By a direct computation we can verify that the $A_j$ satisfy the relations~1 and~2 of Theorem \ref{lemma: darboux fi}.
\end{Example}

\begin{Corollary}For $i\geq j$ we have the following equality
\begin{gather*}
\sum_{j=0}^i (2 \sigma A_{i-j} + 2f_{i-j,x} (u) + A_{i-j,x}) E^j =0.
\end{gather*}
\end{Corollary}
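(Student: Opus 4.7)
The plan is to observe that the bracketed summand is, for each fixed $j$, an instance of statement~(2) of Theorem~\ref{lemma: darboux fi}, so that every coefficient of $E^j$ in the polynomial vanishes on its own. There is no algebraic work to do; the corollary is a repackaging of the scalar identities into a single polynomial identity in the spectral parameter.

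More precisely, I would first recall that statement~(2) of Theorem~\ref{lemma: darboux fi} asserts, for every nonnegative integer $k$, the identity
\begin{gather*}
A_{k,x} + 2\sigma A_k + 2 f_{k,x}(u) = 0
\end{gather*}
in the differential field under consideration. Applying this identity with $k = i-j$ for each $j \in \{0, 1, \ldots, i\}$ shows that each coefficient $2\sigma A_{i-j} + 2 f_{i-j,x}(u) + A_{i-j,x}$ appearing in the sum equals zero, so the polynomial in $E$ has all coefficients equal to zero.

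To make the geometric content transparent, I would conclude by introducing the polynomial $\mathcal{A}_i(E) := \sum_{j=0}^i A_{i-j} E^j$, which by the first part of Theorem~\ref{lemma: darboux fi} is precisely $F_i(\widetilde{u}) - F_i(u)$ when combined with the definition \eqref{eq:F_r}. Since $F_i(u)_x = \sum_{j=0}^i f_{i-j,x}(u) E^j$ (derivation in $x$ commutes with the polynomial structure in $E$), the corollary can be rewritten as the Riccati-type polynomial relation
\begin{gather*}
\mathcal{A}_{i,x} + 2\sigma \mathcal{A}_i + 2 F_i(u)_x = 0,
\end{gather*}
which is the $E$-polynomial lift of statement~(2). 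No real obstacle is anticipated; the only care needed is the bookkeeping ensuring the index $i-j$ is matched correctly to the $E^j$ factor in the summation.
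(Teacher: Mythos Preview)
Your proposal is correct and matches the paper's intent: the corollary is stated without proof precisely because each summand vanishes by statement~(2) of Theorem~\ref{lemma: darboux fi}. Your additional observation that the identity repackages as $\mathcal{A}_{i,x} + 2\sigma \mathcal{A}_i + 2 F_{i,x}(u) = 0$ is exactly the content of Proposition~\ref{cor: darboux fi}(2), where the paper writes $P_i$ for your $\mathcal{A}_i$.
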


Theorem~\ref{lemma: darboux fi} has several interesting consequences. The main ones are the relations that the transformed potential $\widetilde{u}$ produce for functions $F_r (u)$. Next we stablish some of them, which will be used in the following sections. In particular, Proposition~\ref{PROP-sigmaT} is specially interesting since it gives a relation between $\sigma_x$ and $\sigma_{t_r }$.

\begin{Proposition}\label{PROP-potT}
\label{cor: darboux fi}
Let $A_i$ and $\sigma$ be as in Theorem~{\rm \ref{lemma: darboux fi}}. For $i= 0, 1, 2, \ldots$ we have
\begin{enumerate}\itemsep=0pt
 \item[$1.$] $ F_i (\widetilde{u}) = F_i (u) + P_i$, where $P_i = \sum\limits_{j=0}^i E^j A_{i-j} $.
 \item[$2.$] Moreover $ P_{i,x} + 2 \sigma P_i + 2F_{i,x} (u) =0 $.
\end{enumerate}
\end{Proposition}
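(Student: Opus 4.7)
The plan is to derive both statements directly from Theorem~\ref{lemma: darboux fi} together with the defining formula $F_r = \sum_{j=0}^r f_{r-j} E^j$ from \eqref{eq:F_r}. Nothing deeper seems to be needed: Proposition~\ref{PROP-potT} is essentially a repackaging of the $f_j$-level identities in $E$-graded form.

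For statement~1, I would just expand
\begin{gather*}
F_i(\widetilde{u}) = \sum_{j=0}^i f_{i-j}(\widetilde{u}) E^j,
\end{gather*}
apply Theorem~\ref{lemma: darboux fi} term by term to replace $f_{i-j}(\widetilde{u})$ by $f_{i-j}(u) + A_{i-j}$, and split the sum into the $F_i(u)$ piece plus the definition of $P_i$. This is one line of computation.

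For statement~2, differentiate $P_i = \sum_{j=0}^i E^j A_{i-j}$ with respect to $x$ (noting $E$ is a constant with respect to $\partial_x$) to get $P_{i,x} = \sum_{j=0}^i E^j A_{i-j,x}$, then substitute the recursive identity $A_{k,x} = -2\sigma A_k - 2 f_{k,x}(u)$ from part~2 of Theorem~\ref{lemma: darboux fi}. Grouping the $-2\sigma A_{i-j}$ terms gives $-2\sigma P_i$ and the remaining $-2 f_{i-j,x}(u) E^j$ terms sum to $-2 F_{i,x}(u)$, yielding the required relation.

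I do not anticipate a genuine obstacle here; both parts are formal consequences of Theorem~\ref{lemma: darboux fi} once one observes that $P_i$ is the generating polynomial in $E$ built from the $A_k$ in exact parallel with how $F_i$ is built from the $f_k$. The only thing to be careful about is keeping the index of $A$ matched with the exponent of $E$ (i.e., $A_{i-j}$ pairs with $E^j$), which is forced by the analogous indexing in \eqref{eq:F_r}.
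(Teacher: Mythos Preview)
Your proposal is correct and matches the paper's approach exactly: the paper's proof is simply the single sentence ``It is an immediate consequence of Theorem~\ref{lemma: darboux fi},'' and what you have written is precisely the unpacking of that immediate consequence. Your indexing and the substitution of $A_{k,x} = -2\sigma A_k - 2 f_{k,x}(u)$ from part~2 of Theorem~\ref{lemma: darboux fi} are handled correctly.
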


\begin{proof}It is an immediate consequence of Theorem~\ref{lemma: darboux fi}.
\end{proof}

\begin{Proposition}\label{PROP-sigmaT}
Let $u$ be a solution of KdV$_r$ equation. Let $\phi$ be a solution of Schr\"odinger equation~\eqref{eq:schr0} for potential $u$ and energy $E_0$. Let be $\sigma = (\log \phi)_x $. Consider $A_{r+1}$ as defined in Theorem~{\rm \ref{lemma: darboux fi}} and $P_r$ as defined in Proposition~{\rm \ref{cor: darboux fi}}. Then, we have
\begin{gather}\label{eq:chapter3 sigma_tr}
 \sigma_{t_r} = -A_{r+1} = \dfrac{1}{4} P_{r,xx} + E P_{r} + \sigma_{x} F_r (u) + \frac{1}{2} P_r (-2u + 3 \sigma_x).
\end{gather}
\end{Proposition}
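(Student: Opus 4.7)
The plan is to verify the two equalities in turn: first derive the displayed closed form for $\sigma_{t_r}$ directly from the linear system~\eqref{eq:systemkdv0}, then identify that expression with $-A_{r+1}$ through a polynomial-in-$E$ identity extracted from Theorem~\ref{lemma: darboux fi}.

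For the second equality, the starting point is that $\phi$ solves~\eqref{eq:systemkdv0} at the fixed energy $E=E_0$, so the top row of $\Phi_{t_r}=V_r\Phi$ gives $\phi_{t_r}/\phi = G_r(u)+F_r(u)\sigma$. Differentiating in $x$ and inserting $G_r=-F_{r,x}/2$ from~\eqref{eq:G_r} yields $\sigma_{t_r}= -\tfrac12 F_{r,xx} + F_{r,x}\sigma + F_r\sigma_x$. To recast this in terms of $P_r$, I would apply Proposition~\ref{cor: darboux fi}(2) to replace $F_{r,x}=-\tfrac12 P_{r,x} - \sigma P_r$ and, after one more derivative, $F_{r,xx}$; the two $\tfrac12\sigma P_{r,x}$ contributions cancel, leaving
\begin{gather*}
\sigma_{t_r} = \tfrac14 P_{r,xx} + \tfrac12\sigma_x P_r - \sigma^2 P_r + F_r\sigma_x.
\end{gather*}
A final appeal to the Riccati equation~\eqref{eq:riccati sigma}, in the form $-\sigma^2 = \sigma_x - u + E$, then produces precisely $\tfrac14 P_{r,xx} + EP_r + \sigma_x F_r + \tfrac12 P_r(-2u+3\sigma_x)$, which is the second equality in~\eqref{eq:chapter3 sigma_tr}.

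For the first equality $\sigma_{t_r}=-A_{r+1}$, the idea is to compress that right-hand side using Theorem~\ref{lemma: darboux fi}(1). Writing it as $A_{j+1} = -\tfrac14 A_{j,xx} + uA_j - \tfrac32\sigma_x A_j - \sigma_x f_j(u)$, I would multiply by $E^{r-j}$ and sum over $j=0,\dots,r$; the four weighted sums collapse respectively to $P_{r+1}$, $P_{r,xx}$, $P_r$ and $F_r$ (the endpoint mismatch on the left being absorbed by $A_0=0$), giving
\begin{gather*}
P_{r+1} = -\tfrac14 P_{r,xx} + uP_r - \tfrac32\sigma_x P_r - \sigma_x F_r.
\end{gather*}
Combined with the trivial shift $EP_r = P_{r+1}-A_{r+1}$ (immediate from $P_{r+1}=A_{r+1}+EP_r$), the right-hand side obtained in the previous paragraph rearranges to $-P_{r+1}+EP_r = -A_{r+1}$, closing the chain of equalities.

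The main obstacle is the careful reindexing in the generating-function step: one has to verify that $\sum_{j=0}^{r} A_{j+1}E^{r-j}$ really equals $P_{r+1}=\sum_{k=0}^{r+1}E^{k} A_{r+1-k}$, and analogously that the three sums on the right reproduce $P_{r,xx}$, $P_r$ and $F_r$ after reindexing. Everything hinges on $A_0=0$ compensating for the boundary index mismatch; once this polynomial identity in $E$ is secured, its combination with the Riccati-based computation of $\sigma_{t_r}$ is purely mechanical.
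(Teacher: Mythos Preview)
Your argument is correct, and it is genuinely different from the paper's.

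For the second equality, you go straight through the $t_r$-equation of~\eqref{eq:systemkdv0}: from $(\log\phi)_{t_r}=G_r+F_r\sigma$ you obtain $\sigma_{t_r}=-\tfrac12F_{r,xx}+F_{r,x}\sigma+F_r\sigma_x$, then rewrite via Proposition~\ref{cor: darboux fi}(2) and Riccati. The paper instead expands the zero curvature form of $\widetilde{u}_{t_r}$ with $F_r(\widetilde{u})=F_r(u)+P_r$, manipulates until a total $x$-derivative appears, and integrates. Your route is shorter and avoids the tacit ``integration constant is zero'' that the paper's two integrations require. For the first equality, you obtain $-A_{r+1}=\tfrac14P_{r,xx}+EP_r+\sigma_xF_r+\tfrac12P_r(-2u+3\sigma_x)$ as a purely algebraic polynomial-in-$E$ identity by summing Theorem~\ref{lemma: darboux fi}(1) and using $P_{r+1}=A_{r+1}+EP_r$; the paper instead subtracts the KdV$_r$ equations for $u$ and $\widetilde{u}$ (thereby assuming that $\widetilde{u}$ also solves KdV$_r$) and integrates once more. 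Your generating-function computation is self-contained and actually shows the second equality of~\eqref{eq:chapter3 sigma_tr} holds identically in $E$, a slight strengthening.

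One point worth flagging: your first step assumes that $\phi$ satisfies the full system~\eqref{eq:systemkdv0} at $E=E_0$, not merely the Schr\"odinger equation as literally stated. This is unavoidable---without fixing the $t_r$-evolution of $\phi$ the quantity $\sigma_{t_r}$ is not determined---and the paper's proof makes the equivalent implicit assumption that $\widetilde{u}$ is again a KdV$_r$ solution. You are just more explicit about it.
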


\begin{proof}We compare the zero curvature conditions for $u$ and $\widetilde{u}$:
\begin{gather*}
 u_{t_r} = 2 f_{r+1,x} (u)= -\frac{1}{2} F_{r,xxx} (u) +2 (u -E) F_{r,x} (u) + u_x F_r (u), \\
 \widetilde{u}_{t_r} = 2 f_{r+1,x} (\widetilde{u})= -\frac{1}{2} F_{r,xxx} (\widetilde{u}) +2 (\widetilde{u} -E) F_{r,x} (\widetilde{u}) + \widetilde{u}_x F_r (\widetilde{u}).
\end{gather*}
We prove the first equality. For this, we have
\begin{gather*} \widetilde{u}_{t_r} = (u -2 \sigma_x)_{t_r} = u_{t_r} - 2 \sigma_{x,t_r} \qquad \text{and} \qquad
 2 f_{r+1,x} (\widetilde{u}) = 2 f_{r+1,x} (u) + 2 A_{r+1,x}\end{gather*} by Theorem \ref{lemma: darboux fi}. Then
\[ 2 \sigma_{x,t_r} = u_{t_r} - \widetilde{u}_{t_r} = 2 f_{r+1,x} (u) - 2 f_{r+1,x} (\widetilde{u}) = -2 A_{r+1,x}. \]
Thus, $\sigma_{t_r} = -A_{r+1} $.

Now, we prove the second equality.
Using expression \eqref{eq:chapter3 DT} for $\widetilde{u}$ and applying \ref{cor: darboux fi} (1), we obtain
\begin{gather*}
\widetilde{u}_{t_r} = -\dfrac{1}{2} F_{r,xxx} (u) +2 (u -E) F_{r,x} (u) + u_x F_r (u) - \dfrac{1}{2} P_{r,xxx} -2 (E-u) P_{r,x} \\
 \hphantom{\widetilde{u}_{t_r} =}{} -4 \sigma_x F_{r,x} (u) -4 \sigma_x P_{r,x} + u_x P_r -2 \sigma_{xx} F_r (u) -2 \sigma_{xx}P_r.
\end{gather*}
Since $2\sigma_{x,t_r} =u_{t_r} - \widetilde{u}_{t_r}$, we have
\[ 2\sigma_{x,t_r} = \dfrac{1}{2} P_{r,xxx} + 2 E P_{r,x} - 2 u P_{r,x} +4 \sigma_x F_{r,x} (u) +4 \sigma_x P_{r,x} - u_x P_r +2 \sigma_{xx} F_r (u) +2 \sigma_{xx}P_r.\]
Applying (2) of Proposition~\ref{cor: darboux fi} to the expresion $\sigma_x P_{r,x}$, we find
\begin{gather*}
 2\sigma_{x,t_r} = \dfrac{1}{2} P_{r,xxx} + 2 E P_{r,x} - 2 u P_{r,x} +4 \sigma_x F_{r,x} (u) + 3 \sigma_x P_{r,x} + \sigma_x (-2 \sigma P_r - 2 F_{r,x} (u)) \\
\hphantom{2\sigma_{x,t_r} =}{} - u_x P_r +2 \sigma_{xx} F_r (u) +2 \sigma_{xx}P_r \\
\hphantom{2\sigma_{x,t_r}}{}= \dfrac{1}{2} P_{r,xxx} + 2 E P_{r,x} +2 (\sigma_{xx} F_r (u) + \sigma_x F_{r,x} (u)) + P_{r,x} (-2u + 3 \sigma_x) \\
\hphantom{2\sigma_{x,t_r} =}{} + P_r (-2 \sigma \sigma_x -u_x +2 \sigma_{xx}).
\end{gather*}
Moreover, for the coefficient of $P_r$ we have
\[ -2 \sigma \sigma_x -u_x +2 \sigma_{xx} = \big({-}\sigma^2 -u +2 \sigma_x \big)_x = (-2u +3 \sigma_x )_x \]
by \eqref{eq:riccati sigma}. Thus, we obtain
\[ 2\sigma_{x,t_r} = \left ( \dfrac{1}{2} P_{r,xx} + 2 E P_{r} +2 \sigma_{x} F_r (u) + P_r (-2u + 3 \sigma_x) \right )_x.\]
Therefore, we have proved the statement.
\end{proof}

We finish this section with the following technical result. It makes a connection between differential polynomials $f_r (u)$ and some differential polynomials~$g_r (\sigma) $ defined by
\begin{gather}\label{eq:tr sigma}
 g_r (\sigma) := -A_{r+1} = \dfrac{1}{2} P_{r,xx} + 2 E P_{r} +2 \sigma_{x} F_r (u) + P_r (-2u + 3 \sigma_x).
\end{gather}

\begin{Proposition}\label{prop:gr y fr}We have the following relations:
\begin{enumerate}\itemsep=0pt
 \item[$1)$] $ (2 \sigma +\partial_x ) g_r (\sigma) = 2 f_{r+1, x} (u) = -\frac{1}{2} F_{r, xxx} (u) +2 (u-E)F_{r,x}(u) +u_x F_r (u)$ and
 \item[$2)$] $ (2 \sigma - \partial_x) g_r (\sigma) = 2 f_{r+1, x} (\widetilde{u}) = -\frac{1}{2} F_{r, xxx}(\widetilde{u}) +2 (\widetilde{u} -E)F_{r,x}(\widetilde{u}) +\widetilde{u}_x F_r(\widetilde{u}) $.
\end{enumerate}
\end{Proposition}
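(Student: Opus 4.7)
The plan is to derive both identities as direct consequences of Theorem~\ref{lemma: darboux fi}, after identifying $g_r(\sigma)=-A_{r+1}$ through the defining formula~\eqref{eq:tr sigma} (which is precisely the expanded form that appears in Proposition~\ref{PROP-sigmaT}). Once this identification is fixed, each of the two parts becomes a short algebraic manipulation, and I expect the only subtle point to be the sign bookkeeping in $g_r=-A_{r+1}$ and in the operators $2\sigma\pm\partial_x$.

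For statement~(1), I would apply item~2 of Theorem~\ref{lemma: darboux fi} at index $j=r+1$, namely $A_{r+1,x}+2\sigma A_{r+1}+2f_{r+1,x}(u)=0$. Rewriting in terms of $g_r=-A_{r+1}$ immediately yields $(\partial_x+2\sigma)g_r(\sigma)=2f_{r+1,x}(u)$, which is the middle equality of~(1). For statement~(2), I would combine both items of the theorem: the induction step gives $f_{r+1,x}(\widetilde u)=f_{r+1,x}(u)+A_{r+1,x}$, and substituting $A_{r+1,x}=-2\sigma A_{r+1}-2f_{r+1,x}(u)$ from item~2 produces $2f_{r+1,x}(\widetilde u)=-2f_{r+1,x}(u)-4\sigma A_{r+1}$. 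A parallel computation of $(2\sigma-\partial_x)g_r(\sigma)=-2\sigma A_{r+1}+A_{r+1,x}$, once again using item~2, yields the same right-hand side, establishing the middle equality of~(2).

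The rightmost equalities in both statements amount to the polynomial identity $2f_{r+1,x}(w)=-\tfrac12 F_{r,xxx}(w)+2(w-E)F_{r,x}(w)+w_x F_r(w)$ for an arbitrary potential $w$, applied once to $w=u$ and once to $w=\widetilde u$. This is not a KdV equation but a formal identity following from the recurrence~\eqref{eq:rec_dif_f} together with the definition~\eqref{eq:F_r} of $F_r$: expanding $F_r$ in powers of $E$ and applying the recurrence coefficient-wise, the intermediate coefficients telescope and the would-be top coefficient $-2f_{0,x}E^{r+1}$ vanishes because $f_0=1$ is constant. The substantive work has already been packaged into Theorem~\ref{lemma: darboux fi}, so Proposition~\ref{prop:gr y fr} is essentially a formal corollary with no serious obstacle beyond tracking signs.
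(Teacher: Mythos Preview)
Your proposal is correct and follows essentially the same route as the paper: statement~(1) is precisely item~2 of Theorem~\ref{lemma: darboux fi} at $j=r+1$ rewritten via $g_r=-A_{r+1}$, and statement~(2) is obtained by combining $f_{r+1,x}(\widetilde u)=f_{r+1,x}(u)+A_{r+1,x}$ with that same item. Your treatment of the rightmost equalities (the formal identity $2f_{r+1,x}=-\tfrac12 F_{r,xxx}+2(w-E)F_{r,x}+w_xF_r$) is a welcome extra bit of care; the paper simply takes this for granted from the passage between~\eqref{eq:kdvn} and~\eqref{eq:skdvn2}.
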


\begin{proof}The statement 1 is the statement 2 of Theorem~\ref{lemma: darboux fi} rewritten. For statement 2 we have
\begin{align*}
2 f_{r+1, x} (\widetilde{u}) & = 2 f_{r+1, x} (u) + 2A_{r+1,x} = 2\sigma g_r (\sigma) + g_{r,x} (\sigma) - 2 g_{r,x} (\sigma) = 2\sigma g_r (\sigma) - g_{r,x} (\sigma) \\
& = (2\sigma - \partial_x) g_r (\sigma)
\end{align*}
by statement 1 and equation \eqref{eq:tr sigma}.
 \end{proof}

\section[Fundamental matrices for KdV$_r$ rational Schr\"odinger operators]{Fundamental matrices for KdV$\boldsymbol{{}_r}$ rational\\ Schr\"odinger operators}\label{sec-fundamental}

In this section we obtain a fundamental matrix for the system~\eqref{eq:systemkdv0} depending on the energy level~$E$. The spectral curve is the tool that will allow us to understand why fundamental matrices present different behaviours according to the values of the energy.

For stationary rational potentials $u^{(0)}_n =n(n+1)x^{-2}$, it is well known that the spectral curve associated to the following system
\begin{gather*}
\Phi_x = U^{(0)} \Phi = \begin{pmatrix} 0 & 1 \\ u^{(0)}_n -E & 0 \end{pmatrix} \Phi, \\
\Phi_{t_n} = V^{(0)}_n \Phi = \begin{pmatrix} G_n (u^{(0)}_n) & F_n (u^{(0)}_n) \\ -H_n (u^{(0)}_n) & -G_n (u^{(0)}_n) \end{pmatrix} \Phi
\end{gather*}
is the algebraic plane curve in $\nC^2$ given by
\begin{gather*}
 \Gamma_n \colon \ p_n (\mu,E)= \mu^2 - E^{2n+1} =0.
\end{gather*}
Whenever an Adler--Moser potential $u_{r,n}(x,t)$ is time dependent, we will consider $\Gamma_n$ as the spectral curve associated to its corresponding linear differential system~\eqref{eq:systemkdv0}. Observe that $(E,\mu)=(0,0)$ is the unique affine singular point of~$\Gamma_n$. It turns out that for $E\not=0$ the behaviour of the fundamental matrix associated to the system
\begin{gather}
\Phi_x = U \Phi = \begin{pmatrix} 0 & 1 \\ u_{r,n} -E & 0 \end{pmatrix} \Phi, \nonumber\\
\Phi_{t_r} = V_r \Phi = \begin{pmatrix} -\dfrac{F_{r,x} (u_{r,n})}{2} & F_r (u_{r,n}) \\ (u_{r,n} -E) F_r (u_{r,n}) - \dfrac{F_{r,xx} (u_{r,n})}{2} & \dfrac{F_{r,x} (u_{r,n})}{2} \end{pmatrix} \Phi \label{eq:systemkdv1E rn}
\end{gather}
presents the same {algebraic structure} since the point $P=(E,\mu)$ is a regular point of~$\Gamma_n$. A~fundamental matrix for $E = 0$ can be also computed. However, it is not obtained by a~specialization process from the fundamental matrix obtained for a regular point. We include some examples in this section.

\subsection[Fundamental matrices for $E=0$]{Fundamental matrices for $\boldsymbol{E=0}$}

In this section, we compute explicitly fundamental matrices of system~\eqref{eq:systemkdv0} when the potential~$u$ is $u_{r,n} = -2 (\log \theta_{r,n})_{xx}$ and $E=0$. Recall that $u_{r,n}$ is a solution of KdV$_r$. Hence, we study the system
 \begin{gather}
\Phi_x = U \Phi = \begin{pmatrix} 0 & 1 \\ u_{r,n} & 0 \end{pmatrix} \Phi, \nonumber\\
\Phi_{t_r} = V_r \Phi = \begin{pmatrix} -\dfrac{f_{r,x} (u_{r,n})}{2} & f_r (u_{r,n}) \\ u_{r,n} f_r (u_{r,n}) - \dfrac{f_{r,xx} (u_{r,n})}{2} & \dfrac{f_{r,x} (u_{r,n})}{2} \end{pmatrix} \Phi.\label{eq:systemkdvr}
\end{gather}
It is obvious that the zero curvature condition of this system is the KdV$_r$ equation for $c_i=0$, $i=1, \ldots,r$:
\begin{gather*}
 \partial_{t_r} (u_{r,n}) = 2 f_{r+1,x} (u_{r,n}).
\end{gather*}
From now on we will denote $u_{r,n, t_r} = \partial_{t_r} (u_{r,n})$.

We have the following result:
\begin{Theorem}\label{soluciones1} Let $n$ be a non negative integer. For $E =0$ and $ u=u_{r,n}$, a fundamental matrix for system~\eqref{eq:systemkdvr} is
\begin{gather*}
\mathcal{B}^{(r)}_{n,0} = \begin{pmatrix} \phi_{1,r,n} & \phi_{2,r,n} \\ \phi_{1,r,n,x} & \phi_{2,r,n,x} \end{pmatrix},
\end{gather*}
where
\begin{gather*}
\phi_{1,r,n} (x, t_r,0)= \dfrac{\theta_{r,n-1}}{\theta_{r,n}} \qquad \textrm{and} \qquad \phi_{2,r,n} (x, t_r,0)= \dfrac{\theta_{r,n+1}}{\theta_{r,n}}.
\end{gather*}
\end{Theorem}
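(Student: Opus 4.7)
My plan is induction on the level $n$, iterating the Darboux--Crum transformation at energy $E_0=0$ with seed $\phi_0:=\phi_{2,r,n}=\theta_{r,n+1}/\theta_{r,n}$ to pass from the fundamental matrix at level $n$ to the one at level $n+1$. The base case $n=0$ is direct: $u_{r,0}=0$ reduces the Schr\"odinger equation to $\phi_{xx}=0$ with basis $\{1,x\}$; moreover the recursion~\eqref{eq:rec_dif_f} with $c_i=0$ yields $f_j(0)=0$ for every $j\ge 1$, so that $V_r\equiv 0$ for $r\ge 1$ and the $t_r$-equation is trivially satisfied.

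For the inductive step I split the verification into the Schr\"odinger and the time parts. The Crum identity $\widetilde u=u_{r,n}-2(\log\phi_0)_{xx}=-2(\log\theta_{r,n+1})_{xx}=u_{r,n+1}$ identifies the transformed potential with the next Adler--Moser potential, and the classical fact that $1/\phi_0$ solves the transformed Schr\"odinger equation at $E_0$ gives $\phi_{1,r,n+1}=\theta_{r,n}/\theta_{r,n+1}$ as one solution. For a linearly independent second solution I apply reduction of order: rewriting the Adler--Moser recursion~\eqref{eq:rec_dif} at index $n+1$ as $(\theta_{r,n+2}/\theta_{r,n})_x=(2n+3)(\theta_{r,n+1}/\theta_{r,n})^2=(2n+3)/\phi_{1,r,n+1}^2$, an integration followed by multiplication by $\phi_{1,r,n+1}$ expresses $\phi_{1,r,n+1}\int dx/\phi_{1,r,n+1}^2$ as a $\nC(t_r)$-linear combination of $\phi_{2,r,n+1}=\theta_{r,n+2}/\theta_{r,n+1}$ and $\phi_{1,r,n+1}$, so $\phi_{2,r,n+1}$ also solves the $E=0$ Schr\"odinger equation.

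The $t_r$-equation for $\phi_{1,r,n+1}=1/\phi_0$ follows by a direct computation: differentiating $1/\phi_0$ in $t_r$ and invoking the inductive time equation $\phi_{0,t_r}=f_r(u_{r,n})\phi_{0,x}-\tfrac12 f_{r,x}(u_{r,n})\phi_0$, the identity to be proved reduces, after clearing the common factor $\phi_0^2$ and introducing $\sigma=(\log\phi_0)_x$, to the two Darboux identities $f_r(\widetilde u)=f_r(u_{r,n})+A_r$ and $A_{r,x}+2\sigma A_r+2f_{r,x}(u_{r,n})=0$ supplied by Theorem~\ref{lemma: darboux fi}. For $\phi_{2,r,n+1}$ I exploit that $u_{r,n+1}$ solves KdV$_r$ (this is precisely Adler--Moser's Theorem~2 in~\cite{AM}, which \emph{defines} the adjustment of the $\tau_j(t_r)$): the full $(x,t_r)$-system at $u_{r,n+1}$ is then compatible, and the reduction-of-order prescription $g_x=1/\phi_{1,r,n+1}^2$, $g_{t_r}=f_r(u_{r,n+1})/\phi_{1,r,n+1}^2$ is consistent (the cross-derivatives match automatically because $\phi_{1,r,n+1}$ satisfies the full system, by the previous step), so $\phi_{1,r,n+1}g$ solves both equations. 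The $x$-part of this prescription together with the AM recursion gives $g=\theta_{r,n+2}/[(2n+3)\theta_{r,n}]+C(t_r)$, and the adjustment of the coefficients $\tau_j$ in $\theta_{r,n+2}$ is precisely what forces $C(t_r)$ to be a constant in $t_r$, so that $\phi_{2,r,n+1}$ equals $(2n+3)\phi_{1,r,n+1}g$ up to a constant multiple of $\phi_{1,r,n+1}$, and therefore satisfies the full system.

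Linear independence is read off from the Wronskian computation
\[
W(\phi_{1,r,n+1},\phi_{2,r,n+1})=\frac{\theta_{r,n}\,\theta_{r,n+2,x}-\theta_{r,n,x}\,\theta_{r,n+2}}{\theta_{r,n+1}^2}=2n+3\neq 0,
\]
obtained once more from the Adler--Moser recursion~\eqref{eq:rec_dif}. The main obstacle I anticipate is the $t_r$-compatibility for $\phi_{2,r,n+1}$: Darboux covariance handles $\phi_{1,r,n+1}$ almost for free via Theorem~\ref{lemma: darboux fi}, but for the second solution one must reconcile the integration constant $C(t_r)$ appearing in the reduction-of-order construction with the specific $t_r$-adjustment of the Adler--Moser polynomials, which ultimately amounts to using that $u_{r,n+1}$ is a bona fide KdV$_r$ solution.
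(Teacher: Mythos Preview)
Your approach is correct and essentially coincides with the paper's: both argue by induction on $n$, pass from level $n$ to $n+1$ via the Darboux--Crum transformation with seed $\phi_{2,r,n}$, and invoke Theorem~\ref{lemma: darboux fi} (specifically the identity $A_{r,x}+2\sigma A_r+2f_{r,x}(u)=0$) to push the $t_r$-equation through the transformation for $\phi_{1,r,n+1}$. The only cosmetic difference is that the paper obtains $\phi_{1,r,n+1}$ as ${\rm DT}(\phi_{2,r,n})\phi_{1,r,n}=-(2n+1)\theta_{r,n}/\theta_{r,n+1}$, whereas you use $1/\phi_0$ directly; since $W(\phi_{1,r,n},\phi_{2,r,n})=2n+1$ these are the same function up to a scalar, and the subsequent verification via Theorem~\ref{lemma: darboux fi} is identical. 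For $\phi_{2,r,n+1}$ the paper simply asserts the existence of a second solution of the full compatible system with prescribed Wronskian $2n+3$ and reads off its form from the Adler--Moser recursion, absorbing the residual $x$-constant into the free parameter $\tau_{n+2}$; your reduction-of-order argument with the joint prescription $g_x=1/\phi_{1,r,n+1}^2$, $g_{t_r}=f_r(u_{r,n+1})/\phi_{1,r,n+1}^2$ makes this step more explicit but is the same mechanism, and your remark that the adjustment of $\tau_{n+2}$ is exactly what kills the $t_r$-dependent integration constant is precisely the point the paper leaves implicit.
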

For $n=0$ we define $\theta_{r,-1} :=1$. We notice that $\phi_{2,r,n} =(\phi_{1,r,n+1})^{-1}$.

\begin{proof} We prove it by induction on $n$.
For $n=0$ the definition $\theta_{r,0} = 1$ gives $u_{r,0} =0$. So, the system \eqref{eq:systemkdvr} reads
 \begin{gather*}
\begin{pmatrix} \phi_{1,r,0,x} & \phi_{2,r,0,x} \\ \phi_{1,r,0,xx} & \phi_{2,r,0,xx} \end{pmatrix} = \begin{pmatrix} 0 & 1 \\ 0 & 0 \end{pmatrix} \begin{pmatrix} \phi_{1,r,0} & \phi_{2,r,0} \\ \phi_{1,r,0,x} & \phi_{2,r,0,x} \end{pmatrix} = \begin{pmatrix} \phi_{1,r,0,x} & \phi_{2,r,0,x} \\ 0 & 0 \end{pmatrix}, \\
\begin{pmatrix} \phi_{1,r,0,t_r} & \phi_{2,r,0,t_r} \\ \phi_{1,r,0,xt_r} & \phi_{2,r0,xt_r} \end{pmatrix} = \begin{pmatrix} 0 & 0 \\ 0 & 0 \end{pmatrix} \begin{pmatrix} \phi_{1,r,0} & \phi_{2,r,0} \\ \phi_{1,r,0,x} & \phi_{2,r,0,x} \end{pmatrix} = \begin{pmatrix} 0 & 0 \\ 0 & 0 \end{pmatrix}.
\end{gather*}
Thus, $\phi_{1,r,0} = 1$ and $\phi_{2,r,0} = x$ generate $\mathcal{B}^{(r)}_{0,0}$. Since $\theta_{r,1} = x$ we have that $\phi_{1,r,0} = \frac{\theta_{r,-1}}{\theta_{r,0}}$ and $\phi_{2,r,0} = \frac{\theta_{r,1}}{\theta_{r,0}}$.

Now, we suppose the statement is true for $n$ and prove it for $n+1$. For $n$ we know that $\phi_{1,r,n} = \frac{\theta_{r,n-1}}{\theta_{r,n}}$ and $ \phi_{2,r,n} = \frac{\theta_{r,n+1}}{\theta_{r,n}}$ generate $\mathcal{B}^{(r)}_{n,0}$. Therefore, $\phi_{1,r,n}$ and $\phi_{2,r,n}$ are solutions of Schr\"odinger equation $\phi_{xx} = u_{r,n} \phi$. We apply a Darboux transformation with $\phi_{2,r,n}$ to this Schr\"odinger equation and we obtain
\begin{gather}
{\rm DT}(\phi_{2,r,n}) {u_{r,n}} = u_{r,n} - 2 (\log \phi_{2,r,n})_{xx} = -2 (\log \theta_{r,n})_{xx} - 2 (\log \phi_{2,r,n})_{xx} \nonumber\\
\hphantom{{\rm DT}(\phi_{2,r,n}) {u_{r,n}}}{} = -2 (\log \phi_{2,r,n}\theta_{r,n} )_{xx} = -2 (\log \theta_{r,n+1})_{xx} = u_{r,n+1}, \label{eq:crum unr}\\
\label{eq:crum phi1r} {\rm DT}(\phi_{2,r,n})\phi_{1,r,n} = \phi_{1,r,n,x} - \dfrac{\phi_{2,r,n,x}}{\phi_{2,r,n}} \phi_{1,r,n} = -(2n+1)\dfrac{\theta_{n}}{\theta_{r,n+1}} = -(2n+1) \phi_{1,r,n+1}.
\end{gather}
So, $\phi_{1,r,n+1}= \frac{\theta_{r,n}}{\theta_{r,n+1}}$ is a solution of $\phi_{xx} =u_{r,n+1} \phi$ and, obviously, $ ( \phi_{1,r,n+1}, \phi_{1,r,n+1,x} )^t$ is a~column solution of the first equation of the system for $u_{r,n+1}$.

Now we verify that this column matrix is also a solution of the second equation
\begin{align*}
\begin{pmatrix} \phi_{1,r,n+1,t_r} \\ \phi_{1,r,n+1,xt_r} \end{pmatrix} &= \begin{pmatrix} -\dfrac{f_{r,x} (u_{r,n+1})}{2} & f_r (u_{r,n+1}) \\ u_{r,n+1} f_r (u_{r,n+1}) - \dfrac{f_{r,xx} (u_{r,n+1})}{2} & \dfrac{f_{r,x} (u_{r,n+1})}{2} \end{pmatrix} \begin{pmatrix} \phi_{1,r,n+1} \\ \phi_{1,r,n+1,x} \end{pmatrix} \\
& = \begin{pmatrix} -\dfrac{f_{r,x} (u_{r,n+1})}{2} \phi_{1,r,n+1} + f_r (u_{r,n+1}) \phi_{1,r,n+1,x} \\ \left(u_{r,n+1} f_r (u_{r,n+1}) - \dfrac{f_{r,xx} (u_{r,n+1})}{2} \right) \phi_{1,r,n+1} + \dfrac{f_{r,x} (u_{r,n+1})}{2} \phi_{1,r,n+1,x} \end{pmatrix}.
\end{align*}
We notice that the second row is just the partial derivative with respect to $x$ of the first one. Hence, we just have to verify that expressions \eqref{eq:crum unr} and \eqref{eq:crum phi1r} satisfy the equation
\begin{gather}\label{eq:chapter4 phirn+1}
 \phi_{1,r,n+1,t_r} = -\dfrac{f_{r,x} (u_{r,n+1})}{2} \phi_{1,r,n+1} + f_r (u_{r,n+1}) \phi_{1,r,n+1,x}.
\end{gather}

Applying expression \eqref{eq:crum phi1r} and the induction hypothesis we obtain for the left hand side of this equation
\begin{gather}
 \label{eq:chapter4 lefthandside} \phi_{1,r,n+1,t_r} = \dfrac{1}{2n+1} \left ( \phi_{1,r,n}\dfrac{\phi_{2,r,n,x}}{\phi_{2,r,n}} -\phi_{1,r,n,x} \right ) \left( \dfrac{f_{r,x} (u_{r,n})}{2} - f_r (u_{r,n}) \dfrac{\phi_{2,r,n,x}}{\phi_{2,r,n}} \right),
\end{gather}
and for the right hand side
\begin{gather}
 - \dfrac{f_{r,x} (u_{r,n+1})}{2} \phi_{1,r,n+1} + f_r (u_{r,n+1}) \phi_{1,r,n+1,x} \nonumber \\
 \qquad{} = \dfrac{1}{2n+1} \left ( \phi_{1,r,n}\dfrac{\phi_{2,r,n,x}}{\phi_{2,r,n}} -\phi_{1,r,n,x} \right ) \cdot \left( - \dfrac{f_{r,x} (u_{r,n+1})}{2} - f_r (u_{r,n+1}) \dfrac{\phi_{2,r,n,x}}{\phi_{2,r,n}} \right).\label{eq:chapter4 righthandside}
\end{gather}
Now, we prove that both expressions are equal. By applying the statement 2 of Theorem \ref{lemma: darboux fi} for $\sigma = \frac{\phi_{2,r,n,x}}{\phi_{2,r,n}} $, expression \eqref{eq:chapter4 righthandside} turns into{\samepage
\begin{gather*}
 - \dfrac{f_{r,x} (u_{r,n+1})}{2} - f_r (u_{r,n+1}) \dfrac{\phi_{2,r,n,x}}{\phi_{2,r,n}} = - \dfrac{f_{r,x} (u_{r,n}) + A_{r,x}}{2} - ( f_r (u_{r,n}) +A_r ) \dfrac{\phi_{2,r,n,x}}{\phi_{2,r,n}} \\
\qquad{} = - \dfrac{f_{r,x}(u_{r,n})}{2} - f_r (u_{r,n}) \dfrac{\phi_{2,r,n,x}}{\phi_{2,r,n}} - \dfrac{A_{r,x}}{2} - A_r \dfrac{\phi_{2,r,n,x}}{\phi_{2,r,n}} \\
\qquad{} = - \dfrac{f_{r,x}(u_{r,n})}{2} - f_r (u_{r,n}) \dfrac{\phi_{2,r,n,x}}{\phi_{2,r,n}} - \dfrac{A_{r,x}}{2} + f_{r,x} (u_{r,n}) + \dfrac{A_{r,x}}{2} \\
 \qquad{} = \dfrac{f_{r,x}(u_{r,n})}{2} - f_r (u_{r,n})\dfrac{\phi_{2,r,n,x}}{\phi_{2,r,n}},
\end{gather*}
which is equal to expression \eqref{eq:chapter4 lefthandside}. Therefore, both sides of expression \eqref{eq:chapter4 phirn+1} coincide.}

Now we proceed as in \cite{AM}. We take another column solution $( \phi_{2,r,n+1}, \phi_{2,r,n+1,x} )^t$ of this system for potential $u_{r,n+1}$ which is linearly independent of the one we have just computed, i.e., $\det \mathcal{B}^{(r)}_{n+1,0} $ is a nontrivial constant. We take $\phi_{2,r,n+1}$ such that
\[ \det \mathcal{B}^{(r)}_{n+1,0} =2(n+1)+1. \]
We notice that with this condition we have
\[ \det \mathcal{B}^{(r)}_{n+1,0} = \phi_{2,r,n+1,x} \dfrac{\theta_{r,n}}{\theta_{r,n+1}} - \phi_{2, r,n+1} \dfrac{\theta_{r,n,x} \theta_{r,n+1} - \theta_{r,n} \theta_{r,n+1,x}}{\theta_{r,n+1}^2} = 2(n+1)+1,\]
multiplying both sides by $\theta_{r,n+1}^2$ and using the recursion formula \eqref{eq:rec_dif} we get
\[ \phi_{2,r,n+1,x} \theta_{r,n} \theta_{r,n+1} - \phi_{2,r, n+1}( \theta_{r,n,x} \theta_{r,n+1} - \theta_{r,n} \theta_{r,n+1,x}) = \theta_{r,n+2,x} \theta_{r,n} - \theta_{r,n+2} \theta_{r,n,x}. \]
Setting $\phi_{2,r,n+1}= \dfrac{\alpha_{2,r,n+1}}{\theta_{r,n+1}}$ yields to
\[ \alpha_{2,r,n+1,x} \theta_{r,n} - \alpha_{2,r,n+1} \theta_{r,n,x} = \theta_{r,n+2,x} \theta_{r,n} - \theta_{r,n+2} \theta_{r,n,x}, \]
thus, $\alpha_{2,r,n+1} = \theta_{r,n+2} $ and $\phi_{2,r,n+1} = \frac{\theta_{r,n+2}}{\theta_{r,n+1}}$. This concludes the proof.
\end{proof}

Adler and Moser proved in \cite{AM} that matrix $\mathcal{B}^{(r)}_{n,0}$ is a fundamental matrix for the Schr\"odinger equation~\eqref{eq:schr0} for $E=0$. But they did not prove there that this matrix is also a fundamental matrix for the second equation of the system~\eqref{eq:systemkdvr}. To do that, it is necessary to control the action of the Darboux transformations over the differential polynomials $f_j$, as we did in Section~\ref{sect-novikovs DT}.

\begin{Remark} Since $\phi_{1,r,n} = \frac{\theta_{r,n-1}}{\theta_{r,n}}$ and $\phi_{2,r,n} = \frac{\theta_{r,n+1}}{\theta_{r,n}}$ are solutions of Schr\"odinger equation~\eqref{eq:schr0} for~$E=0$, this translate into the following equation for polynomials $\theta_{r,n}$:
\begin{gather}\label{eq:rec2}
 \theta_{r,n+1,xx} \theta_{r,n} + \theta_{r,n+1} \theta_{r,n,xx} - 2\theta_{r,n,x} \theta_{r,n+1,x} =0.
\end{gather}
\end{Remark}

\begin{Theorem}We have that
\begin{gather*}
 \det \mathcal{B}^{(r)}_{n,0} = 2n+1.
\end{gather*}
\end{Theorem}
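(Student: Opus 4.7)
The plan is to evaluate $\det \mathcal{B}^{(r)}_{n,0}$ as the Wronskian $W(\phi_{1,r,n},\phi_{2,r,n})$ and reduce it directly to the Adler--Moser recursion \eqref{eq:rec_dif}. Substituting $\phi_{1,r,n} = \theta_{r,n-1}/\theta_{r,n}$ and $\phi_{2,r,n} = \theta_{r,n+1}/\theta_{r,n}$ and applying the quotient rule, I would expand
\begin{gather*}
\det \mathcal{B}^{(r)}_{n,0} = \phi_{1,r,n}\phi_{2,r,n,x} - \phi_{2,r,n}\phi_{1,r,n,x}.
\end{gather*}
Both summands share a denominator $\theta_{r,n}^3$, and the terms involving $\theta_{r,n,x}$ pick up opposite signs and cancel pairwise, leaving
\begin{gather*}
\det \mathcal{B}^{(r)}_{n,0} = \frac{\theta_{r,n-1}\theta_{r,n+1,x} - \theta_{r,n+1}\theta_{r,n-1,x}}{\theta_{r,n}^{2}}.
\end{gather*}

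The numerator is exactly the left-hand side of the Adler--Moser recursion \eqref{eq:rec_dif}, so it equals $(2n+1)\theta_{r,n}^2$; dividing by $\theta_{r,n}^2$ yields $2n+1$.

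I do not expect any real obstacle, since the argument is a one-line reduction to \eqref{eq:rec_dif}. The only point worth noting is the boundary case $n=0$ with the convention $\theta_{r,-1}:=1$: then $\phi_{1,r,0}=1$, $\phi_{2,r,0}=x$ as computed in the proof of Theorem~\ref{soluciones1}, so the Wronskian equals $1 = 2\cdot 0 + 1$, consistent with the general formula. (One could alternatively prove the identity by induction on $n$ using the Darboux covariance formula \eqref{eq:crum phi1r} together with the constancy of $\det \mathcal{B}^{(r)}_{n,0}$ in $x$ and $t_r$, which follows from $\operatorname{tr} U = \operatorname{tr} V_r = 0$; but the direct reduction above is shorter.)
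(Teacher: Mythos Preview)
Your proof is correct. The paper does not give a separate proof of this theorem; the computation is already embedded in the inductive step of the proof of Theorem~\ref{soluciones1}, where $\phi_{2,r,n+1}$ is \emph{chosen} so that $\det\mathcal{B}^{(r)}_{n+1,0}=2(n+1)+1$ and then identified with $\theta_{r,n+2}/\theta_{r,n+1}$ via exactly the Adler--Moser recursion~\eqref{eq:rec_dif}. Your argument is the same reduction run in the forward direction, so the approaches coincide.
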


\begin{Example}\label{ex:E=0}To illustrate the results, we present explicit computations using SAGE of fundamental solutions of the system for the first values of $n$.

1. First, we show the first examples of unadjusted fundamental solutions
 \begin{gather*}\label{eq:tabla E=0 taus}
 \begin{matrix} n & \phi_{1,r,n} & \phi_{2,r,n} & u_{r,n} \\[5pt]
 0 & 1 & x & 0 \\[3pt]
 1 & \dfrac{1}{x} & \dfrac{x^3 + \tau_2}{x} & \dfrac{2}{x^2} \\[13pt]
 2 & \dfrac{x}{x^3 + \tau_2} & \dfrac{x^6 + 5x^3 \tau_2 + x \tau_3 -5 \tau_2^2}{x^3 + \tau_2} & \dfrac{6x(x^3 -2\tau_2 )}{(x^3 + \tau_2)^2} \\[13pt]
 3 & \dfrac{x^3 + \tau_2}{x^6 + 5x^3 \tau_2 + x \tau_3 -5 \tau_2^2} & \dfrac{p_1 (x, \tau_2, \tau_3, \tau_4)}{x^6 + 5x^3 \tau_2 + x \tau_3 -5 \tau_2^2} & \dfrac{p_2 (x, \tau_2, \tau_3)}{(x^6 + 5x^3 \tau_2 + x \tau_3 -5 \tau_2^2)^2} \end{matrix}
\end{gather*}
where
\begin{gather*}
\begin{split}& p_1 (x, \tau_2, \tau_3, \tau_4) = x^{10} + 15 x^7 \tau_2 +7 x^5 \tau_3 -35 x^2 \tau_2 \tau_3 +175 x \tau_2^3 - \dfrac{7}{3} \tau_3^2 + x^3 \tau_4 + \tau_2 \tau_4,\\
& p_2 (x, \tau_2, \tau_3) = 12x^{10} -36 x^5 \tau_3 +450 x^4 \tau_2^2 + 300x \tau_2^3 +2\tau_3^2.
\end{split}
\end{gather*}

2. Next, we compute fundamental solutions for potentials which are solutions of the first level of the KdV hierarchy, KdV$_1$ equation: $ u_{t_1} = \frac{3}{2} uu_x - \frac{1}{4} u_{xxx}$. We also show the explicit choice of the functions $\tau_i$
\begin{gather*}\label{eq:tabla E=0}
 \begin{matrix} n & \phi_{1,1,n} & \phi_{2,1,n} & u_{1,n} & \!(\tau_2, \ldots,\tau_{n})\! \\[5pt]
 0 & 1 & x & 0 & \\[3pt]
 1 & \dfrac{1}{x} & \dfrac{x^3 +3t_1}{x} & \dfrac{2}{x^2} & (3t_1) \\[13pt]
 2 & \dfrac{x}{x^3 + 3t_1} & \dfrac{x^6 + 15x^3t_1 -45t_1^2}{x^3 + 3t_1} & \dfrac{6x(x^3 -6t_1)}{(x^3 +3t_1)^2} & (3t_1,0) \\[13pt]
 3\! & \!\dfrac{x^3 + 3t_1}{x^6 + 15x^3t_1 -45t_1^2}\! & \!\dfrac{x^{10} +45x^7 t_1 + 4725xt_1^3}{x^6 + 15x^3t_1 -45t_1^2}\! & \!\dfrac{6x \big(2x^9 + 675x^3t_1^2 +1350t_1^3\big)}{\big(x^6 + 15x^3t_1 -45t_1^2\big)^2}\! & \!(3t_1,0,0)\! \end{matrix}
\end{gather*}
\end{Example}

\subsection[Fundamental matrices for $E\neq0$]{Fundamental matrices for $\boldsymbol{E\neq0}$}

In this section, we compute explicitly fundamental matrices of system \eqref{eq:systemkdv0} when $u=u_{r,n} = -2 (\log \theta_{r,n})_{xx}$ and $E \neq 0$. In this case, the system is
 \begin{gather}
\Phi_x = U \Phi = \begin{pmatrix} 0 & 1 \\ u_{r,n} -E & 0 \end{pmatrix} \Phi, \nonumber\\
\Phi_{t_r} = V_r \Phi = \begin{pmatrix} -\dfrac{F_{r,x} (u_{r,n})}{2} & F_r (u_{r,n}) \\ (u_{r,n} -E) F_r (u_{r,n}) - \dfrac{F_{r,xx} (u_{r,n})}{2} & \dfrac{F_{r,x} (u_{r,n})}{2} \end{pmatrix} \Phi.\label{eq:systemkdv1E r}
\end{gather}
The zero curvature condition of this system is still the KdV$_r$ equation for $c_i=0$, $i=1, \ldots,r$:
\begin{gather*}
 u_{r,n, t_r} = 2 f_{r+1,x} (u_{r,n}).
\end{gather*}
When $E \neq 0$, we take $\lambda \in \nC$ a parameter over $K$ such that $E+ \lambda^2 =0$.

Next, we consider the differential systems
\begin{gather}
\label{eq:q+1,x r} Q^+_{n,xx} = Q^+_{n,x} \left (-2 \lambda + 2 \dfrac{\theta_{r,n,x}}{\theta_{r,n}} \right ) + Q^+_{n} \left ( 2 \lambda \dfrac{\theta_{r,n,x}}{\theta_{r,n}} -\dfrac{\theta_{r,n,xx}}{\theta_{r,n}} \right ), \\
Q^+_{n,t_r} = Q^+_{n,x} F_r (u_{r,n}) \nonumber \\
\hphantom{Q^+_{n,t_r} =}{} + Q^+_{n} \left( -(-1)^r \lambda^{2r+1} + \lambda F_r (u_{r,n}) + \dfrac{\theta_{r,n,t_r}}{\theta_{r,n}} - \dfrac{F_{r,x} (u_{r,n})}{2} - F_r (u_{r,n}) \dfrac{\theta_{r,n,x}}{\theta_{r,n}} \right), \label{eq:q+1,t r}\\
\label{eq:q-1,x r} Q^-_{n,xx} = Q^-_{n,x} \left (2 \lambda + 2 \dfrac{\theta_{r,n,x}}{\theta_{r,n}} \right ) - Q^-_{n} \left ( 2 \lambda \dfrac{\theta_{r,n,x}}{\theta_{r,n}} + \dfrac{\theta_{r,n,xx}}{\theta_{r,n}} \right ), \\
Q^-_{n,t_r} = Q^-_{n,x} F_r (u_{r,n}) \nonumber \\
\hphantom{Q^-_{n,t_r} =}{} +Q^-_{n} \left( (-1)^r \lambda^{2r+1} - \lambda F_r (u_{r,n}) + \dfrac{\theta_{r,n,t_r}}{\theta_{r,n}} - \dfrac{F_{r,x} (u_{r,n})}{2} - F_r (u_{r,n}) \dfrac{\theta_{r,n,x}}{\theta_{r,n}} \right).\label{eq:q-1,t r}
\end{gather}

We have the following relations for the solutions of the differential systems \eqref{eq:q+1,x r}--\eqref{eq:q+1,t r} and \eqref{eq:q-1,x r}--\eqref{eq:q-1,t r}.

\begin{Lemma} \label{lem-Qn} Functions $Q^+_{n}$ and $Q^-_{n}$ recursively defined by
\begin{gather}
\label{eq:q+2 r}
Q^+_{0} =1, \qquad Q^+_{n+1} = \dfrac{\lambda Q^+_{n} \theta_{r,n+1} + Q^+_{n,x} \theta_{r,n+1} - Q^+_{n} \theta_{r,n+1, x}}{\theta_{r,n}}, \\
\label{eq:q-2 r}
Q^-_0 =1, \qquad Q^-_{n+1} = \dfrac{\lambda Q^-_{n} \theta_{r,n+1} - Q^-_{n,x} \theta_{r,n+1} + Q^-_{n} \theta_{r,n+1, x}}{\theta_{r,n}}
\end{gather}
are solutions of the differential systems \eqref{eq:q+1,x r}--\eqref{eq:q+1,t r} and \eqref{eq:q-1,x r}--\eqref{eq:q-1,t r}.
\end{Lemma}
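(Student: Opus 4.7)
The plan is to reinterpret equations \eqref{eq:q+1,x r}--\eqref{eq:q+1,t r} and \eqref{eq:q-1,x r}--\eqref{eq:q-1,t r} as the statement that the Baker--Akhiezer type functions
\[
\Psi^\pm_n := \frac{Q^\pm_n}{\theta_{r,n}}\, \exp\!\bigl(\pm (\lambda x + (-1)^r \lambda^{2r+1} t_r)\bigr)
\]
furnish column solutions $\bigl(\Psi^\pm_n,\Psi^\pm_{n,x}\bigr)^t$ of the linear system \eqref{eq:systemkdv1E r} at potential $u_{r,n}$ and spectral parameter $E=-\lambda^2$. Direct substitution of the ansatz for $\Psi^+_n$ into the Schr\"odinger equation $-\Psi^+_{n,xx}+u_{r,n}\Psi^+_n = -\lambda^2\Psi^+_n$, clearing denominators and using $u_{r,n} = -2(\log\theta_{r,n})_{xx}$, reproduces \eqref{eq:q+1,x r} exactly; the same substitution into the $t_r$-equation of \eqref{eq:systemkdv1E r} reproduces \eqref{eq:q+1,t r}. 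The $-$ case is identical upon replacing $\lambda$ by $-\lambda$ throughout.

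Granted this equivalence, I will proceed by induction on $n$. The base case $n=0$ is immediate: $Q^\pm_0=1$ and $\theta_{r,0}=1$ give $u_{r,0}=0$, so that $f_j(0)=0$ for $j\ge 1$ (using $c_i=0$), hence $F_r(0) = (-1)^r\lambda^{2r}$ and $F_{r,x}(0)=0$, and the pure exponentials $\Psi^\pm_0$ manifestly satisfy the linear system.

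For the inductive step, assume that $\Psi^+_n$ solves \eqref{eq:systemkdv1E r} at $u_{r,n}$. By Theorem~\ref{soluciones1}, $\phi_{2,r,n} = \theta_{r,n+1}/\theta_{r,n}$ is a solution of the Schr\"odinger equation at $u_{r,n}$ with energy $E_0=0 \neq -\lambda^2$, so the Darboux--Crum transform applies. A short calculation using $\phi_{2,r,n,x}/\phi_{2,r,n} = \theta_{r,n+1,x}/\theta_{r,n+1} - \theta_{r,n,x}/\theta_{r,n}$ shows that
\[
{\rm DT}(\phi_{2,r,n})\Psi^+_n \;=\; \Psi^+_{n,x} - \frac{\phi_{2,r,n,x}}{\phi_{2,r,n}}\,\Psi^+_n \;=\; \Psi^+_{n+1},
\]
where $Q^+_{n+1}$ is precisely the quantity produced by the recursion \eqref{eq:q+2 r}. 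Since classical Darboux--Crum transformations send a solution of Schr\"odinger at $u_{r,n}$, $E\ne E_0$, to a solution at the transformed potential $u_{r,n}-2(\log\phi_{2,r,n})_{xx} = -2(\log\theta_{r,n+1})_{xx} = u_{r,n+1}$, this gives \eqref{eq:q+1,x r} at level $n+1$.

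The crux, and the reason for all the preparatory work of Section~\ref{sect-novikovs DT}, is the compatibility of the Darboux--Crum transform with the \emph{full} KdV linear system, not merely with Schr\"odinger: only then does the inductive hypothesis at the $t_r$-level descend to $n+1$. This compatibility is encoded in Proposition~\ref{cor: darboux fi} together with Proposition~\ref{PROP-sigmaT}: the former expresses $F_r(u_{r,n+1})$ and its $x$-derivatives in terms of $F_r(u_{r,n})$ plus the correction $P_r$ satisfying $P_{r,x}+2\sigma P_r+2F_{r,x}(u_{r,n})=0$ with $\sigma = \phi_{2,r,n,x}/\phi_{2,r,n}$, while the latter encodes the identity $\sigma_{t_r}=-A_{r+1}$ needed to convert the $t_r$-equation for $\Psi^+_n$ into the $t_r$-equation for $\Psi^+_{n+1}$. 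Assembling these ingredients yields \eqref{eq:q+1,t r} at level $n+1$, completing the induction. The $Q^-_n$ case follows by the same argument with $\lambda \mapsto -\lambda$. The main obstacle is purely a bookkeeping one: all the substantive content is already supplied by Theorem~\ref{lemma: darboux fi} and its consequences.
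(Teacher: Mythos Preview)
Your proposal is correct and follows essentially the same strategy as the paper's own proof: induction on $n$, identical base case, and for the inductive step the recognition that the recursion \eqref{eq:q+2 r} is precisely the Darboux--Crum transform by $\phi_{2,r,n}$ applied to the Baker--Akhiezer function, with the $t_r$-compatibility handled via Proposition~\ref{cor: darboux fi} and Proposition~\ref{PROP-sigmaT}. The paper works directly with the $Q$-equations \eqref{eq:q+1,x r}--\eqref{eq:q+1,t r} rather than repackaging them as the linear system for $\Psi^\pm_n$, and it carries out the ``assembling these ingredients'' step as an explicit (lengthy) computation, whereas you leave that step as a one-line summary; but the ingredients invoked, and the logical structure, coincide.
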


\begin{proof}
We prove it by induction on $n$. For $n=0$ we have $\theta_{r,0} =1$, hence, $u_{r,0} =0$ and $F_r (u_{r,0}) = (-1)^r \lambda^{2r}$. So, $Q^+_0 =1$ and $Q^-_0 =1$ are solutions of the systems \eqref{eq:q+1,x r}--\eqref{eq:q+1,t r} and \eqref{eq:q-1,x r}--\eqref{eq:q-1,t r}.

Now, we suppose it is true for $n$ and prove it for $n+1$. We have to prove that expressions
\begin{gather*}
 Q^+_{n+1} = \dfrac{\lambda Q^+_{n} \theta_{r,n+1} + Q^+_{n,x} \theta_{r,n+1} - Q^+_{n} \theta_{r,n+1, x}}{\theta_{r,n}},\\
 Q^-_{n+1} = \dfrac{\lambda Q^-_{n} \theta_{r,n+1} - Q^-_{n,x} \theta_{r,n+1} + Q^-_{n} \theta_{r,n+1, x}}{\theta_{r,n}}
\end{gather*}
satisfy equations \eqref{eq:q+1,x r}, \eqref{eq:q+1,t r}, \eqref{eq:q-1,x r} and \eqref{eq:q-1,t r} respectively, for $n+1$. First, we prove that $Q^+_{n+1}$ satisfies \eqref{eq:q+1,x r} and \eqref{eq:q+1,t r}. By induction hypothesis, we know that $Q^+_{n}$ satisfies \eqref{eq:q+1,x r}, using this expression and \eqref{eq:rec2} we have
\begin{gather*}
Q^+_{n+1,x} = \dfrac{\lambda Q^+_{n} \theta_{r,n+1,x} -\lambda Q^+_{n,x} \theta_{r,n+1}}{\theta_{r,n}} + \dfrac{(\lambda Q^+_{n} \theta_{r,n+1} + Q^+_{n,x} \theta_{r,n+1} - Q^+_{n} \theta_{r,n+1,x}) \theta_{r,n,x}}{\theta_{r,n}^2},\\
Q^+_{n+1,xx} = \dfrac{Q^+_{n,x}}{\theta_{r,n}^3} p_1 (x,t_r, \lambda) + \dfrac{Q^+_{n}}{\theta_{r,n}^3} p_2 (x,t_r, \lambda),
\end{gather*}
and
\begin{gather*}
 Q^+_{n+1,x} \left( -2 \lambda + 2 \dfrac{\theta_{r,n+1,x}}{\theta_{r,n+1}} \right) + Q^+_{n+1} \left ( 2 \lambda \dfrac{\theta_{r,n+1,x}}{\theta_{r,n+1}} -\dfrac{\theta_{r,n+1,xx}}{\theta_{r,n+1}} \right)\\
 \qquad{} = \dfrac{Q^+_{n,x}}{\theta_{r,n}^3} p_1 (x,t_r, \lambda) + \dfrac{Q^+_{n}}{\theta_{r,n}^3} p_2 (x,t_r, \lambda),
\end{gather*}
where
\begin{gather*}
p_1 (x,t_r, \lambda) = 2\lambda^2 \theta^2_{r,n} \theta_{r,n+1} - 2\lambda \theta_{r,n}\theta_{r,n,x} \theta_{r,n+1} + 2\theta_{r,n} \theta_{r,n,x} \theta_{r,n+1,x} - \theta_{r,n}^2 \theta_{r,n+1,xx},\\
p_2 (x,t_r,\lambda) = - 2\lambda^2 \theta_{r,n}\theta_{r,n,x} \theta_{r,n+1} + 2\lambda \theta_{r,n}\theta_{r,n,xx} \theta_{r,n+1} + \theta_{r,n}^2 \theta_{r,n+1,xx} - \theta_{r,n}\theta_{r,n,xx} \theta_{r,n+1,x}.
\end{gather*}

Thus, both expressions coincide and $Q^+_{n+1}$ is solution of equation \eqref{eq:q+1,x r}.

On the other hand, by induction hypothesis, we know that $Q^+_{n}$ satisfies \eqref{eq:q+1,t r}. Using this equation, expressions
\begin{gather*}
\sigma_{2,r,n} =( \log \phi_{2,r,n})_x = \dfrac{\theta_{r,n+1,x} \theta_{r,n} - \theta_{r,n+1} \theta_{r,n,x}}{\theta_{r,n} \theta_{r,n+1}}, \\
\sigma_{2,r,n, t_r} = \dfrac{\theta_{r,n+1,x t_r}}{\theta_{r,n+1}} - \dfrac{\theta_{r,n,xt_r}}{\theta_{r,n}} + \dfrac{\theta_{r,n,x} \theta_{r,n,t_r}}{\theta_{r,n}^2} - \dfrac{\theta_{r,n+1,x} \theta_{r,n+1, t_r}}{\theta_{r,n+1}^2}, \\
Q^+_{n,xt_r} = Q^+_{n,x} \left( -(-1)^r \lambda^{2r+1} - \lambda F_r (u_{r,n}) + \dfrac{F_{r,x} (u_{r,n})}{2} + F_r (u_{r,n}) \dfrac{\theta_{r,n,x}}{\theta_{r,n}} + \dfrac{\theta_{r,n,t_r}}{\theta_{r,n}} \right) \\
\hphantom{Q^+_{n,xt_r} =}{} + Q^+_{n} \left( 2 \lambda F_r (u_{r,n}) \dfrac{\theta_{r,n,x}}{\theta_{r,n}} + \lambda F_{r,x} (u_{r,n}) -2 F_r (u_{r,n}) \dfrac{\theta_{r,n,xx}}{\theta_{r,n}} + F_r (u_{r,n}) \dfrac{\theta^2_{r,n,x}}{\theta_{r,n}^2} \right. \\
 \left. \hphantom{Q^+_{n,xt_r} =}{} - \dfrac{F_{r,xx} (u_{r,n})}{2} - F_{r,x} (u_{r,n}) \dfrac{\theta_{r,n,x}}{\theta_{r,n}} - \dfrac{\theta_{r,n,x} \theta_{r,n,t_r}}{\theta^2_{r,n}} + \dfrac{\theta_{r,n,xt_r}}{\theta_{r,n}} \right),
\end{gather*}
the derivative with respect to $x$ of statement~2 of Corollary~\ref{cor: darboux fi} and expression~\eqref{eq:chapter3 sigma_tr} for $\sigma_{2,r,n,t_r}$, we obtain
\[ Q^+_{n+1,t_r} = Q^+_{n,x} \dfrac{p_3 (x,t_r, \lambda)}{\theta_{r,n}^2} + Q^+_{n} \dfrac{p_4 (x,t_r, \lambda)}{\theta_{r,n}^2}, \]
where
\begin{gather*}
 p_3 (x,t_r, \lambda) = -(-1)^r \lambda^{2r+1} \theta_{r,n} \theta_{r,n+1} + F_r (u_{r,n}) \theta_{r,n,x} \theta_{r,n+1} - F_r (u_{r,n}) \theta_{r,n} \theta_{r,n+1,x} \\
\hphantom{p_3 (x,t_r, \lambda) =}{} + F_{r,x} (u_{r,n}) \dfrac{\theta_{r,n} \theta_{r,n+1}}{2} + \theta_{r,n} \theta_{r,n+1,t_r},\\
 p_4 (x,t_r, \lambda) = -(-1)^r \lambda^{2r+2} \theta_{r,n} \theta_{r,n+1} +(-1)^r \lambda^{2r+1} \theta_{r,n} \theta_{r,n+1,x} + \lambda^2 F_r (u_{r,n}) \theta_{r,n} \theta_{r,n+1} \\
\hphantom{p_4 (x,t_r, \lambda) =}{} + \lambda^2 P_r \theta_{r,n} \theta_{r,n+1} + \lambda \theta_{r,n} \theta_{r,n+1,t_r} + \lambda F_r (u_{r,n}) \theta_{r,n,x} \theta_{r,n+1} \\
\hphantom{p_4 (x,t_r, \lambda) =}{} + \lambda F_{r,x} (u_{r,n}) \dfrac{\theta_{r,n} \theta_{r,n+1}}{2}- \lambda F_r (u_{r,n}) \theta_{r,n} \theta_{r,n+1,x} + F_{r,x} (u_{r,n}) \dfrac{\theta_{r,n} \theta_{r,n+1,x}}{2} \\
\hphantom{p_4 (x,t_r, \lambda) =}{} - P_r \theta_{r,n,x} \theta_{r,n+1,x}- \dfrac{\theta_{r,n} \theta_{r,n+1,x} \theta_{r,n+1, t_r}}{\theta_{r,n+1}} - F_r (u_{r,n}) \theta_{r,n,x} \theta_{r,n+1,x} \\
\hphantom{p_4 (x,t_r, \lambda) =}{} + F_r (u_{r,n}) \dfrac{\theta_{r,n} \theta_{r,n+1,x}^2}{\theta_{r,n+1}} + P_r \dfrac{\theta_{r,n} \theta_{r,n+1,x}^2}{\theta_{r,n+1}} + P_{r,x} \dfrac{\theta_{r,n} \theta_{r,n+1,x}}{2}.
\end{gather*}

Finally, using relation \eqref{eq:q+1,x r} for $Q^+_{n}$ and statements~1 and~2 of Corollary \ref{cor: darboux fi}, the right hand side of equation~\eqref{eq:q+1,t r} for $Q^+_{n+1}$ reads
\begin{gather*} Q^+_{n+1,x} F_r (u_{r,n+1}) + Q^+_{n+1} \left( \lambda^3 + \lambda F_r (u_{r,n+1}) + \dfrac{\theta_{r,n,t_r}}{\theta_{r,n}} - \dfrac{F_{r,x} (u_{r,n+1})}{2} - F_r (u_{r,n+1}) \dfrac{\theta_{r,n,x}}{\theta_{r,n}} \right) \\
\qquad{} = Q^+_{n,x} \dfrac{p_3 (x,t_r, \lambda)}{\theta_{r,n}^2} + Q^+_{n} \dfrac{p_4 (x,t_r, \lambda)}{\theta_{r,n}^2}.
\end{gather*}
Therefore, both expressions coincide and $Q^+_{n+1}$ is a solution of equation \eqref{eq:q+1,t r}.

The proof for $Q^-_{n+1}$ is analogous.
\end{proof}

As a consequence, we have the following result:

\begin{Theorem}\label{soluciones2 r}
Let $n$ be a non negative integer, then, for $E= -\lambda^2 \neq 0$ and $u=u_{r,n}$, a~fundamental matrix for system~\eqref{eq:systemkdv1E r} is
\begin{gather*}
\mathcal{B}^{(r)}_{n, \lambda}= \begin{pmatrix} \phi^+_{r,n} & \phi^-_{r,n} \\ \phi^+_{r,n,x} & \phi^-_{r,n,x} \end{pmatrix},
\end{gather*}
where
\begin{gather*}
\phi^+_{r,n} (x,t_r,\lambda)= e^{\lambda x +(-1)^r \lambda^{2r+1} t_r} \dfrac{Q^+_{r,n}(x,t_r, \lambda)}{\theta_{r,n}}, \\
 \phi^-_{r,n} (x,t_r,\lambda)= e^{-\lambda x - (-1)^r \lambda^{2r+1} t_r}\dfrac{Q^-_{r,n} (x,t_r, \lambda)}{\theta_{r,n}},
\end{gather*}
where $Q^+_{r,n}$ and $Q^-_{r,n}$ are functions in $x$, $t_r$, $\lambda$ defined by means of Lemma~{\rm \ref{lem-Qn}}.
\end{Theorem}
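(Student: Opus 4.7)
The plan is to reduce the theorem to the content of Lemma~\ref{lem-Qn} plus a simple non-degeneracy check. Write $\phi^{\pm}_{r,n}=e^{\pm\alpha}\psi^{\pm}$, where $\alpha=\lambda x+(-1)^r\lambda^{2r+1}t_r$ and $\psi^{\pm}=Q^{\pm}_{r,n}/\theta_{r,n}$. Because the second row of $\Phi_{t_r}=V_r\Phi$ is obtained from the first by differentiating in $x$ and using the Schr\"odinger equation (one checks $\partial_x(-\tfrac12 F_{r,x}\phi+F_r\phi_x)=((u-E)F_r-\tfrac12 F_{r,xx})\phi+\tfrac12 F_{r,x}\phi_x$ after substituting $\phi_{xx}=(u-E)\phi$), it is enough to verify three things:
\begin{enumerate}\itemsep=0pt
\item[(i)] the Schr\"odinger equation $\phi^{\pm}_{xx}=(u_{r,n}-E)\phi^{\pm}$;
\item[(ii)] the first-row time equation $\phi^{\pm}_{t_r}=-\tfrac12 F_{r,x}(u_{r,n})\phi^{\pm}+F_r(u_{r,n})\phi^{\pm}_x$;
\item[(iii)] $\det\mathcal{B}^{(r)}_{n,\lambda}\neq 0$.
\end{enumerate}

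For (i), factoring $\phi^{+}_{r,n}=e^{\alpha}\psi^{+}$ reduces the Schr\"odinger equation to $\psi^{+}_{xx}+2\lambda\psi^{+}_x=u_{r,n}\psi^{+}$. Expanding $\psi^{+}=Q^{+}_{r,n}/\theta_{r,n}$ and substituting the second-order relation \eqref{eq:q+1,x r} for $Q^{+}_{r,n,xx}$ supplied by Lemma~\ref{lem-Qn}, every term containing $Q^{+}_{r,n,x}$ or $\lambda$ cancels, and what remains is $-2Q^{+}_{r,n}(\theta_{r,n}\theta_{r,n,xx}-\theta_{r,n,x}^{2})/\theta_{r,n}^{3}$; this equals $u_{r,n}\psi^{+}$ since $u_{r,n}=-2(\log\theta_{r,n})_{xx}$. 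The computation for $\phi^{-}_{r,n}$ is identical with $\lambda\mapsto-\lambda$ and using \eqref{eq:q-1,x r}.

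For (ii), the first-row time equation is equivalent to $\psi^{+}_{t_r}=F_r(u_{r,n})\psi^{+}_x+\bigl(\lambda F_r(u_{r,n})-\tfrac12 F_{r,x}(u_{r,n})-(-1)^r\lambda^{2r+1}\bigr)\psi^{+}$. Computing $\psi^{+}_{t_r}=Q^{+}_{r,n,t_r}/\theta_{r,n}-Q^{+}_{r,n}\theta_{r,n,t_r}/\theta_{r,n}^{2}$ and substituting the first-order relation \eqref{eq:q+1,t r} for $Q^{+}_{r,n,t_r}$ from Lemma~\ref{lem-Qn}, the two occurrences of $\theta_{r,n,t_r}/\theta_{r,n}$ cancel, and the remaining terms regroup as $F_r\psi^{+}_x$ plus precisely the asserted multiple of $\psi^{+}$. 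The proof for $\phi^{-}_{r,n}$ is symmetric, using \eqref{eq:q-1,t r}.

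Finally, for (iii), I would observe that the recursion \eqref{eq:q+2 r} is nothing but the classical Darboux--Crum formula applied to $\phi^{+}_{r,n}$ with kernel $\phi_{2,r,n}=\theta_{r,n+1}/\theta_{r,n}$: using $\phi_{2,r,n,x}/\phi_{2,r,n}=\theta_{r,n+1,x}/\theta_{r,n+1}-\theta_{r,n,x}/\theta_{r,n}$ a two-line calculation gives $\phi^{+}_{r,n+1}=\mathrm{DT}(\phi_{2,r,n})\phi^{+}_{r,n}$, and likewise $\phi^{-}_{r,n+1}=\mathrm{DT}(\phi_{2,r,n})\phi^{-}_{r,n}$ from \eqref{eq:q-2 r}. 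A standard Wronskian identity shows that a Darboux transformation with kernel at energy $E_0=0$ multiplies the Wronskian of two solutions at energy $E=-\lambda^2$ by the factor $E-E_0=-\lambda^2$; hence, starting from $\det\mathcal{B}^{(r)}_{0,\lambda}=W(e^{\alpha},e^{-\alpha})=-2\lambda$, an induction gives $\det\mathcal{B}^{(r)}_{n,\lambda}=-2(-1)^n\lambda^{2n+1}\neq 0$. I expect the main obstacle to be bookkeeping in (i)--(ii): the identities follow cleanly only after one recognizes that the cancellations split neatly into contributions coming from Lemma~\ref{lem-Qn} and from the logarithmic derivative identity for $u_{r,n}$; the reductions in (iii) and the removal of the second-row equation are mere shortcuts.
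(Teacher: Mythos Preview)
Your approach is correct and, if anything, cleaner than the paper's own argument. Both proofs reduce everything to Lemma~\ref{lem-Qn}; the difference is structural. The paper runs an induction on $n$: for the inductive step it recognizes the recursions \eqref{eq:q+2 r}--\eqref{eq:q-2 r} as the Darboux--Crum transformation $\mathrm{DT}(\phi_{2,r,n})$ acting on $\phi^{\pm}_{r,n}$, and then (implicitly via Lemma~\ref{lem-Qn}) concludes that $\phi^{\pm}_{r,n+1}$ solve the full system. You instead treat Lemma~\ref{lem-Qn} as a black box supplying the identities \eqref{eq:q+1,x r}--\eqref{eq:q-1,t r} and verify (i) and (ii) directly; this avoids a second induction that is already buried in the proof of Lemma~\ref{lem-Qn}. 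Your observation that the second row of $\Phi_{t_r}=V_r\Phi$ follows from the first by $x$-differentiation and the Schr\"odinger equation is a genuine shortcut the paper does not state explicitly.

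One small slip in (iii): from \eqref{eq:q-2 r} one gets $\mathrm{DT}(\phi_{2,r,n})\phi^{-}_{r,n}=-\phi^{-}_{r,n+1}$, not $+\phi^{-}_{r,n+1}$ (the paper records this sign in \eqref{eq:DT phi-}). This extra sign cancels the $(-1)$ coming from the Wronskian factor $E-E_0=-\lambda^2$, so the induction actually yields $\det\mathcal{B}^{(r)}_{n,\lambda}=-2\lambda^{2n+1}$ rather than $-2(-1)^n\lambda^{2n+1}$; cf.\ Theorem~\ref{thm: det E}. Either way the determinant is nonzero for $\lambda\neq 0$, so your conclusion stands.
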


\begin{proof} We prove it by induction on $n$. For $n=0$ the definition $\theta_{r,0} =1$ leads to $u_{r,0}=0$. So, the system \eqref{eq:systemkdv1E r} becomes
\begin{gather*}
\begin{pmatrix} \phi^+_{r,0,x} & \phi^-_{r,0,x} \\ \phi^+_{r,0,xx} & \phi^-_{r,0,xx} \end{pmatrix} = \begin{pmatrix} 0 & 1 \\ \lambda^2 & 0 \end{pmatrix} \begin{pmatrix} \phi^+_{r,0} & \phi^-_{r,0} \\ \phi^+_{r,0,x} & \phi^-_{r,0,x} \end{pmatrix}, \\[10pt]
\begin{pmatrix}\phi^+_{r,0,t_r} & \phi^-_{r,0,t_r} \\ \phi^+_{r,0,xt_r} & \phi^-_{r,0,xt_r} \end{pmatrix}
 = \begin{pmatrix} 0 & (-1)^r \lambda^{2r} \\ (-1)^r \lambda^{2r+2} & 0 \end{pmatrix} \begin{pmatrix} \phi^+_{r,0} & \phi^-_{r,0} \\ \phi^+_{r,0,x} & \phi^-_{r,0,x} \end{pmatrix}.
\end{gather*}
Hence, $\phi^+_{r,0} = e^{\lambda x +(-1)^r \lambda^{2r+1} t_r}$ and $\phi^-_{r,0}= e^{-\lambda x -(-1)^r \lambda^{2r+1} t_r}$ generate $\mathcal{B}^{(r)}_{0,\lambda}$. Since $\theta_{r,0} =1$, we find $Q^\pm_{r,0} =1$, as in Lemma~\ref{lem-Qn}.

Next, we suppose it true for $n$ and prove it for $n+1$. Since
\begin{gather*}
\phi^+_{r,n} (x,t_r,\lambda)= e^{\lambda x +(-1)^r \lambda^{2r+1} t_r} \frac{Q^+_{r,n}}{\theta_{r,n}}, \qquad \phi^-_{r,n} (x,t_r,\lambda)= e^{-\lambda x -(-1)^r \lambda^{2r+1} t_r}\frac{Q^-_{r,n}}{\theta_{r,n}}
\end{gather*} are solutions of Schr\"odinger equation $\phi_{xx} =\big(u_{r,n} + \lambda^2\big) \phi$, we apply a Darboux transformation with $\phi_{2,r,n}= \frac{\theta_{r,n+1}}{\theta_{r,n}}$ to this equation and we obtain
\begin{gather}
{\rm DT}(\phi_{2,r,n}) {u_{r,n}} = u_{r,n} - 2 (\log \phi_{2,r,n})_{xx} = u_{r,n} - 2 \sigma_{2,r,n,x} = u_{r,n+1}, \nonumber \\
{\rm DT}(\phi_{2,r,n}){\phi}^+_{r,n} = \phi^+_{r,n,x} - \dfrac{\phi_{2,r,n,x}}{\phi_{2,r,n}} \phi^+_{r,n} \nonumber \\
\hphantom{{\rm DT}(\phi_{2,r,n}){\phi}^+_{r,n}}{}
= \dfrac{e^{\lambda x +(-1)^r \lambda^{2r+1} t_r}}{\theta_{r,n+1}} \cdot \dfrac{\lambda Q^+_{r,n} \theta_{r,n+1} + Q^+_{r,n,x} \theta_{r,n+1} - Q^+_{r,n} \theta_{r,n+1, x}}{\theta_{r,n}}\nonumber \\
\hphantom{{\rm DT}(\phi_{2,r,n}){\phi}^+_{r,n}}{} = e^{\lambda x +(-1)^r \lambda^{2r+1} t_r} \dfrac{Q^+_{r,n+1}}{\theta_{r,n+1}} = \phi^+_{r,n+1} (x,t_r,\lambda), \label{eq:DT phi+}\\
{\rm DT}(\phi_{2,r,n}){\phi}^-_{r,n} = \phi^-_{r,n,x} - \dfrac{\phi_{2,r,n,x}}{\phi_{2,r,n}} \phi^-_{r,n} \nonumber \\
\hphantom{{\rm DT}(\phi_{2,r,n}){\phi}^-_{r,n}}{} = \dfrac{e^{-\lambda x - (-1)^r \lambda^{2r+1} t_r}}{\theta_{r,n+1}} \cdot \dfrac{-\lambda Q^-_{r,n} \theta_{r,n+1} + Q^-_{r,n,x} \theta_{r,n+1} - Q^-_{r,n} \theta_{r,n+1, x}}{\theta_{r,n}}\nonumber \\
\hphantom{{\rm DT}(\phi_{2,r,n}){\phi}^-_{r,n}}{} = e^{-\lambda x -(-1)^r \lambda^{2r+1} t_r}\dfrac{(-Q^-_{r,n+1})}{\theta_{r,n+1}} = - \phi^-_{r,n+1} (x,t_r,\lambda),\label{eq:DT phi-}
\end{gather}
by Lemma \ref{lem-Qn}. Hence, ${\rm DT}(\phi_{2,r,n}){\phi}^+_{r,n} = \phi^+_{r,n+1} (x,t_r,\lambda) $ and $
 {\rm DT}(\phi_{2,r,n}){\phi}^-_{r,n} = - \phi^-_{r,n+1} (x,t_r,\lambda)$ generate $\mathcal{B}^{(r)}_{n+1,\lambda}$. This ends the proof.
\end{proof}

As far as we know, a general expression for fundamental matrices for system~\eqref{eq:systemkdv1E r} has never been computed when $E\neq 0$. {In the stationary case, i.e., in the case we only have the Schr\"odinger equation with Adler--Moser potentials, P.~Clarkson showed in~\cite{Clarkson} an expression for the fundamental solutions of this equation when $E\neq 0$. However these expressions are not explicit, so it is not convenient for studying the Galois groups.}

As in Theorem \ref{soluciones1}, the key to compute these solutions is to control the action of the Darboux transformations over the differential polynomials $f_j$, as we showed in Section~\ref{sect-novikovs DT}. In Section \ref{sect:Q+ Q-} we will give some examples of these fundamental solutions both in the general framework of unadjusted functions $\tau_i$ and in the particular case $r=1$, in the same line as in Example~\ref{ex:E=0}.

\begin{Proposition}\label{cor: relac phi+ y phi-}The functions $ Q^+_{r,n}$, $Q^-_{r,n}$ and the solutions $ \phi^+_{r,n}, \phi^-_{r,n}$ defined in Theorem~{\rm \ref{soluciones2 r}} satisfy the relations
\begin{gather*}
Q^+_{r,n} (x,t_r, - \lambda) = (-1)^n Q^-_{r,n} (x,t_r, \lambda) \qquad \text{and} \qquad \phi^+_{r,n} (x,t_r, - \lambda) = (-1)^n \phi^-_{r,n} (x,t_r, \lambda).
\end{gather*}
\end{Proposition}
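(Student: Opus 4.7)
The plan is to prove both identities by induction on $n$, starting with the claim for $Q^{\pm}_{r,n}$ and then deducing the corresponding identity for $\phi^{\pm}_{r,n}$ directly from the definitions in Theorem~\ref{soluciones2 r}.

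First, I would verify the base case $n=0$: the recursions \eqref{eq:q+2 r} and \eqref{eq:q-2 r} both start with $Q^{\pm}_{r,0}=1$, so $Q^+_{r,0}(x,t_r,-\lambda)=1=(-1)^0 Q^-_{r,0}(x,t_r,\lambda)$ trivially. Next, assume $Q^+_{r,n}(x,t_r,-\lambda) = (-1)^n Q^-_{r,n}(x,t_r,\lambda)$ and substitute $\lambda\mapsto-\lambda$ into the recursion \eqref{eq:q+2 r} defining $Q^+_{r,n+1}$. Since the Adler--Moser polynomials $\theta_{r,n}$ do not depend on $\lambda$, and since $\partial_x$ commutes with the substitution, the induction hypothesis gives
\begin{align*}
Q^+_{r,n+1}(x,t_r,-\lambda) &= \frac{-\lambda\,(-1)^n Q^-_{r,n}\,\theta_{r,n+1} + (-1)^n Q^-_{r,n,x}\,\theta_{r,n+1} - (-1)^n Q^-_{r,n}\,\theta_{r,n+1,x}}{\theta_{r,n}}\\
&= (-1)^{n+1}\cdot\frac{\lambda Q^-_{r,n}\,\theta_{r,n+1} - Q^-_{r,n,x}\,\theta_{r,n+1} + Q^-_{r,n}\,\theta_{r,n+1,x}}{\theta_{r,n}}\\
&= (-1)^{n+1} Q^-_{r,n+1}(x,t_r,\lambda),
\end{align*}
which closes the induction. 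The factor $(-1)^{n+1}$ arises precisely because negating $\lambda$ turns the $+$-recursion into the $-$-recursion up to the global sign, so the alternating sign is forced.

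For the second identity, I would simply plug the first one into the formula from Theorem~\ref{soluciones2 r}. Since $2r+1$ is odd, $(-\lambda)^{2r+1}=-\lambda^{2r+1}$, so the exponential factor $e^{\lambda x+(-1)^r\lambda^{2r+1}t_r}$ becomes $e^{-\lambda x-(-1)^r\lambda^{2r+1}t_r}$ under $\lambda\mapsto -\lambda$, matching the exponential in $\phi^-_{r,n}$. Combined with the $Q$-identity this yields
\begin{equation*}
\phi^+_{r,n}(x,t_r,-\lambda) = e^{-\lambda x-(-1)^r\lambda^{2r+1}t_r}\,\frac{(-1)^n Q^-_{r,n}(x,t_r,\lambda)}{\theta_{r,n}} = (-1)^n\phi^-_{r,n}(x,t_r,\lambda).
\end{equation*}

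No real obstacle is expected; the only point one must handle with care is the parity of $2r+1$ in the exponential, which ensures both the exponential part and the rational part transform by the same overall sign pattern. All other ingredients are the structural asymmetry between the two recursions \eqref{eq:q+2 r} and \eqref{eq:q-2 r}, which is built so that negating $\lambda$ interchanges them up to sign.
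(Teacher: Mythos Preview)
Your proof is correct and follows essentially the same approach as the paper's: induction on $n$ using the recursions \eqref{eq:q+2 r} and \eqref{eq:q-2 r} for the $Q^\pm$-identity, then reading off the $\phi^\pm$-identity from the definition. The paper presents the reduction in the opposite order (first observing that the $\phi$-relation is equivalent to the $Q$-relation since $\theta_{r,n}$ is independent of $\lambda$, then doing the induction), but the substance is identical.
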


\begin{proof}We notice that
\[ \phi^+_{r,n} (x,t_r, - \lambda) = e^{- \lambda x -(-1)^r \lambda^{2r+1} t_r}\dfrac{Q^+_{r,n}(x,t_r, -\lambda)}{\theta_{r,n}}, \]
since $\theta_{r,n}$ does not depend on $\lambda$. So, both relations are equivalent and it suffices to prove that $Q^+_{r,n} (x,t_r, - \lambda) = (-1)^n Q^-_{r,n} (x,t_r, \lambda) $. We prove it by induction on $n$.
For $n=0$, we have that $Q^+_{r,0} = 1 = Q^-_{r,0}$. Hence, $Q^+_{r,0} (x,t_r, - \lambda) = (-1)^0 Q^-_{r,0} (x,t_r, \lambda)$.

Using the expresions (\ref{eq:q+2 r}) and (\ref{eq:q-2 r}), we obtain
\begin{align*}
Q^+_{r,n+1} (x,t_r, -\lambda) &= \dfrac{(-\lambda \theta_{r,n+1} - \theta_{r,n+1, x}) Q^+_{r,n} (x,t_r, -\lambda) + Q^+_{r,n,x}(x,t_r, -\lambda) \theta_{r,n+1}}{\theta_{r,n}} \\
&= \dfrac{(-1)^n ((-\lambda \theta_{r,n+1} - \theta_{r,n+1, x} ) Q^-_{r,n} (x,t_r, \lambda) + Q^-_{r,n,x} (x,t_r, \lambda) \theta_{r,n+1} )}{\theta_{r,n}} \\
&= \dfrac{(-1)^{n+1} ((\lambda \theta_{r,n+1} + \theta_{r,n+1, x} ) Q^-_{r,n} (x,t_r, \lambda) -Q^-_{r,n,x} (x,t_r, \lambda) \theta_{r,n+1})}{\theta_{r,n}} \\
&= (-1)^{n+1} Q^-_{r,n+1} (x,t_r, \lambda),
\end{align*}
as we wanted to prove.
\end{proof}

This corollary allows us to compute the determinant of $ \mathcal{B}^{(r)}_{n,\lambda } $. First observe that
\begin{align}\label{eq:det Er}
\det \mathcal{B}^{(r)}_{n,\lambda } &= W( \phi^+_{r,n}, \phi^-_{r,n}) = (-1)^n W( \phi^+_{r,n} (x,t_r, \lambda), \phi^+_{r,n} (x,t_r, -\lambda)) \\
& = (-1)^{n+1} \dfrac{2 \lambda Q^+_{r,n} (x,t_r, \lambda) Q^+_{r,n} (x,t_r, -\lambda) + W (Q^+_{r,n} (x,t_r, -\lambda), Q^+_{r,n} (x,t_r, \lambda))}{\theta^2_{r,n}},\nonumber
\end{align}
where $W( \phi_1, \phi_2) = \phi_1 \phi_{2,x} - \phi_{1,x} \phi_{2} $ denotes the Wronskian of $\phi_1$ and $\phi_2$.

\begin{Theorem}\label{thm: det E} We have
\begin{gather*}
\det \mathcal{B}^{(r)}_{n,\lambda } = -2 \lambda^{2n+1}.
\end{gather*}
\end{Theorem}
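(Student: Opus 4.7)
The plan is to show that $\det \mathcal{B}^{(r)}_{n,\lambda}$ is independent of $x$ and $t_r$, so it is a function of $\lambda$ and $n$ alone, and then compute that function by induction on $n$ via the Darboux covariance of the Wronskian.

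First, since the coefficient matrices $U$ and $V_r$ of the system~\eqref{eq:systemkdv1E r} are traceless, Liouville's formula gives $\partial_x \det \mathcal{B}^{(r)}_{n,\lambda} = 0$ and $\partial_{t_r} \det \mathcal{B}^{(r)}_{n,\lambda} = 0$, so the determinant reduces to a function of $\lambda$ and $n$ alone. For the base case $n=0$, the definitions give $\theta_{r,0} = Q^{\pm}_{r,0} = 1$, hence $\phi^{\pm}_{r,0} = e^{\pm(\lambda x + (-1)^r \lambda^{2r+1} t_r)}$, and a one-line computation yields $W(\phi^+_{r,0}, \phi^-_{r,0}) = -2\lambda$.

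For the inductive step, I would use equations~\eqref{eq:DT phi+} and~\eqref{eq:DT phi-} from the proof of Theorem~\ref{soluciones2 r}: namely $\phi^+_{r,n+1} = {\rm DT}(\phi_{2,r,n})\phi^+_{r,n}$ and $-\phi^-_{r,n+1} = {\rm DT}(\phi_{2,r,n})\phi^-_{r,n}$, where $\phi_{2,r,n}$ solves the Schr\"odinger equation at energy $E_0 = 0$ while the $\phi^{\pm}_{r,n}$ solve it at energy $E = -\lambda^2$. The key computational step is the classical Darboux covariance of the Wronskian: for any two solutions $\psi_1,\psi_2$ at energy $E$ and any solution $\phi_0$ at energy $E_0$,
\[
W\bigl({\rm DT}(\phi_0)\psi_1,\, {\rm DT}(\phi_0)\psi_2\bigr) = (E - E_0)\, W(\psi_1, \psi_2).
\]
This identity is a short direct expansion, where one substitutes the Riccati equation $\sigma_{0,x} = u - E_0 - \sigma_0^2$ for $\sigma_0 = (\log \phi_0)_x$ and the Schr\"odinger equations satisfied by $\psi_1, \psi_2$ to cancel every term not proportional to $W(\psi_1,\psi_2)$.

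Applying this identity with $E_0 = 0$ and $E = -\lambda^2$, the extra sign from $-\phi^-_{r,n+1}$ cancels once, producing $\det \mathcal{B}^{(r)}_{n+1,\lambda} = \lambda^2 \det \mathcal{B}^{(r)}_{n,\lambda}$. Combined with the base case, this yields $\det \mathcal{B}^{(r)}_{n,\lambda} = -2\lambda \cdot \lambda^{2n} = -2\lambda^{2n+1}$. The only genuinely nontrivial piece is checking the Wronskian identity above; once it is in place, the argument reduces to a two-line induction. A purely computational alternative would be to exploit the explicit formula~\eqref{eq:det Er} together with the recursion~\eqref{eq:q+2 r} for $Q^+_{r,n}$ by matching powers of $\lambda$, but the Darboux route avoids any manipulation of those polynomials and exhibits the conceptual reason for the power $\lambda^{2n+1}$.
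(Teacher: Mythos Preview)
Your argument is correct, and it takes a genuinely different route from the paper's. The paper proves the same recursion $\det \mathcal{B}^{(r)}_{n+1,\lambda}=\lambda^2\det \mathcal{B}^{(r)}_{n,\lambda}$, but it does so by a direct polynomial computation: it first invokes Proposition~\ref{cor: relac phi+ y phi-} to rewrite the Wronskian in terms of $Q^+_{r,n}(x,t_r,\lambda)$ and $Q^+_{r,n}(x,t_r,-\lambda)$ via formula~\eqref{eq:det Er}, then substitutes the explicit recursion~\eqref{eq:q+2 r} for $Q^+_{r,n+1}$ and simplifies. Your approach bypasses the $Q$-polynomials entirely, using only the Darboux covariance identity $W({\rm DT}(\phi_0)\psi_1,{\rm DT}(\phi_0)\psi_2)=(E-E_0)\,W(\psi_1,\psi_2)$ together with the sign in~\eqref{eq:DT phi-}. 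This is cleaner and exposes the structural reason for the exponent $2n+1$: each Darboux step at $E_0=0$ contributes a factor $E-E_0=-\lambda^2$, with the sign from~\eqref{eq:DT phi-} flipping it to $+\lambda^2$. The paper's computation, on the other hand, ties the result more tightly to the explicit $Q^\pm$-recursions developed in Lemma~\ref{lem-Qn}, which is useful later when those polynomials are studied in their own right in Section~\ref{sect:Q+ Q-}. Two minor remarks: your Liouville step is logically redundant once the Wronskian identity is in place (the right-hand side is already constant by induction), though it is a reasonable sanity check; and the Wronskian identity you state is indeed classical and follows in two lines from the Riccati equation $\sigma_{0,x}+\sigma_0^2=u-E_0$, exactly as you indicate.
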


\begin{proof} We proceed by induction on $n$. For $n=0$ we obtain $Q^+_{r,0} = 1$ and $\theta_{r,0} = 1$, so $\det \mathcal{B}_{0,\lambda}^{(r)}= -2 \lambda$. Now, we suppose it is true for $n$ and prove it for $n+1$. Replacing expression~\eqref{eq:q+2 r} for $Q^+_{r,n+1} (x,t_r, \lambda)$ and $Q^+_{r,n+1} (x,t_r, -\lambda)$ in formula~\eqref{eq:det Er} and using Proposition \ref{cor: relac phi+ y phi-} and the induction hypothesis, we get
\[ \det \mathcal{B}^{(r)}_{n+1,\lambda } = -2 \lambda^{2n+3}= -2 \lambda^{2(n+1) +1}. \]
As we wanted to prove.
\end{proof}

\begin{Remark}\label{rk:spec process}Theorem \ref{thm: det E} implies that the matrix $\mathcal{B}^{(r)}_{n,\lambda}$ is not a fundamental matrix of system~\eqref{eq:systemkdv0} for $\lambda =E=0$, since it is not invertible for that value of $E$. The reason of this is that, by Proposition~\ref{cor: relac phi+ y phi-}, when $\lambda =0$ we have $\phi^+_{r,n} (x,t_r, 0) = (-1)^{n} \phi^-_{r,n} (x,t_r, 0)$, so, both column solutions are linearly dependent. We will detail this phenomenon in Section~\ref{sec espec darb}. In fact, we will show that it is not the same to set $E=0$ in~\eqref{eq:systemkdv0} and then solve the system, than to solve the system for a generic $E$ and then replace $E=0$ in the solution obtained, i.e., there is not a~specialization process in this sense.
\end{Remark}

\begin{Example}\label{rem:tabla E no 0 taus}
For $n=0$ and $n=1$ we obtain by direct computations the following solutions:
\begin{gather*}
\begin{matrix}n & \phi^+_{r,n} & \phi^-_{r,n} \\[5pt]
 0 & e^{\lambda x +(-1)^r \lambda^{2r+1} t_r} & e^{-\lambda x -(-1)^r \lambda^{2r+1} t_r} \\[3pt]
 1 & \qquad e^{\lambda x +(-1)^r \lambda^{2r+1} t_r} \dfrac{\lambda x -1}{x} & \qquad e^{-\lambda x -(-1)^r \lambda^{2r+1} t_r} \dfrac{\lambda x +1}{x}
\end{matrix}
\end{gather*}
In next section we will show a method to compute functions $Q^+_{r,n}$ and $Q^-_{r,n}$ more efficient than solving explicitly equations \eqref{eq:q+1,x r}, \eqref{eq:q+1,t r}, \eqref{eq:q-1,x r} and \eqref{eq:q-1,t r}. This allow us to obtain fundamental matrices $\mathcal{B}^{(r)}_{n,\lambda } $. In particular $\phi^+_{r,1}$ and $\phi^-_{r,1}$ are linearly independent solutions for the Schr\"odinger operator $-\partial^2 +u_{r,1}-E=0$ where $u_{r,1}=2/x^2$ is the constructed rational KdV$_r$ potential, as long as $E\not=0$.
\end{Example}

\section[Examples of fundamental matrices for the case $E \neq 0$]{Examples of fundamental matrices for the case $\boldsymbol{E \neq 0}$}\label{sect:Q+ Q-}

Along this section we will prove that the funtions $Q^{\pm}_{r,n}$ defined in Theorem \ref{soluciones2 r} satisfy the recursion formula \eqref{eq:rec_dif}. This implies in particular that they are polynomials of $x$ with coefficients in $\nC(\lambda, t_r)$. Thus, they generalize the family of Adler--Moser polynomials~$\theta_n$.

For the following computations we do not suppose that functions $\theta_n$ and $Q^{\pm}_n$ and poten\-tials~$u_n$ are adjusted to any level of the KdV hierarchy.

\subsection{Generalized Adler--Moser polynomials}

In Lemma \ref{lem-Qn} we have obtained the recursive formulas \eqref{eq:q+2 r} and \eqref{eq:q-2 r} for $Q^{\pm}_{r,n}$. As we have seen in the proof of Theorem \ref{soluciones2 r}, these expressions are obtained by applying Darboux--Crum transformations with $\phi_{2,r,n}$ to $\phi^+_{r,n}$ and $\phi^-_{r,n}$, see expressions \eqref{eq:DT phi+} and \eqref{eq:DT phi-}. For our present discussion, we consider the unadjusted relations given in Lemma \ref{lem-Qn}:
\begin{gather}
\label{eq:q+2 sin r} Q^+_{n+1} = \dfrac{\lambda Q^+_{n} \theta_{n+1} + Q^+_{n,x} \theta_{n+1} - Q^+_{n} \theta_{n+1, x}}{\theta_{n}}, \\
\label{eq:q-2 sin r} Q^-_{n+1} = \dfrac{\lambda Q^-_{n} \theta_{n+1} - Q^-_{n,x} \theta_{n+1} + Q^-_{n} \theta_{n+1, x}}{\theta_{n}}.
\end{gather}

If we proceed in the same way performing Darboux transformations with $\phi_{1,r,n}$ we obtain that functions
\begin{gather*}
{\rm DT}(\phi_{1,r,n}){\phi}^+_{r,n} = \phi^+_{r,n,x} - \dfrac{\phi_{1,r,n,x}}{\phi_{1,r,n}} \phi^+_{r,n} \\
\hphantom{{\rm DT}(\phi_{1,r,n}){\phi}^+_{r,n}}{} = \dfrac{e^{\lambda x+(-1)^r \lambda^{2r+1} t_r}}{\theta_{r,n-1}} \dfrac{\lambda Q^+_{r,n} \theta_{r,n-1} + Q^+_{r,n,x} \theta_{r,n-1} - \theta_{r,n-1, x} Q^+_{r,n}}{\theta_{r,n}}, \\
{\rm DT}(\phi_{1,r,n}){\phi}^-_{r,n} = \phi^-_{r,n,x} - \dfrac{\phi_{1,r,n,x}}{\phi_{1,r,n}} \phi^-_{r,n} \\
\hphantom{{\rm DT}(\phi_{1,r,n}){\phi}^-_{r,n}}{} = \dfrac{e^{-\lambda x - (-1)^r \lambda^{2r+1} t_r}}{\theta_{r,n-1}} \dfrac{-\lambda Q^-_{r,n} \theta_{r,n-1} + Q^-_{r,n,x} \theta_{r,n-1} - \theta_{r,n-1, x} Q^-_{r,n}}{\theta_{r,n}},
\end{gather*}
are solutions of Schr\"odinger equation for $E \neq 0$ and potential
\begin{gather}\label{eq:crum un-1r}
 {\rm DT}(\phi_{1,r,n}){u_{r,n}} = u_{r,n} - 2 (\log \phi_{1,r,n})_{xx} = u_{r,n-1}.
\end{gather}
In the same way that we did for the functions \eqref{eq:q+2 r} and \eqref{eq:q-2 r}, we can prove that the expressions
\begin{gather*}
Q^+_{r,n-1} := \dfrac{\lambda Q^+_{r,n} \theta_{r,n-1} + Q^+_{r,n,x} \theta_{r,n-1} - \theta_{r,n-1, x} Q^+_{r,n}}{\lambda^2 \theta_{r,n}}, \\
Q^-_{r,n-1} := \dfrac{\lambda Q^-_{r,n} \theta_{r,n-1}- Q^-_{r,n,x} \theta_{r,n-1} + \theta_{r,n-1, x} Q^-_{r,n}}{\lambda^2 \theta_{r,n}}
\end{gather*}
satisfy differential systems (\ref{eq:q+1,x r})--(\ref{eq:q+1,t r}) and (\ref{eq:q-1,x r})--(\ref{eq:q-1,t r}), respectively, for $n-1$. So, we obtain
\begin{gather*}
{\rm DT}(\phi_{1,r,n}){\phi}^+_{r,n} = \phi^+_{r,n,x} - \dfrac{\phi_{1,r,n,x}}{\phi_{1,r,n}} \phi^+_{r,n} = \lambda^2 \phi^+_{r,n-1}, \\
{\rm DT}(\phi_{1,r,n}){\phi}^-_{r,n} = \phi^-_{r,n,x} - \dfrac{\phi_{1,r,n,x}}{\phi_{1,r,n}} \phi^-_{r,n} =- \lambda^2 \phi^-_{r,n-1}.
\end{gather*}
For our present discussion, we just write
\begin{gather}
\label{eq:q+3 sin r} Q^+_{n-1} = \dfrac{\lambda Q^+_{n} \theta_{n-1} + Q^+_{n,x} \theta_{n-1} - \theta_{n-1, x} Q^+_{n}}{\lambda^2 \theta_{n}}, \\
\label{eq:q-3 sin r} Q^-_{n-1} = \dfrac{\lambda Q^-_{n} \theta_{n-1}- Q^-_{n,x} \theta_{n-1} + \theta_{n-1, x} Q^-_{n}}{\lambda^2 \theta_{n}}.
\end{gather}

Now, we can prove the following result:

\begin{Theorem}\label{thm-recursionQn}
Functions $Q^+_{n} (x, t_r, \lambda)$ and $Q^-_{n} (x, t_r, \lambda)$ satisfy the differential recursions
\begin{gather}
\label{eq:recQ+}
Q^+_{0} = 1, \qquad Q^+_{1} =\lambda x - 1, \qquad Q^+_{n+1,x} Q^+_{n-1} - Q^+_{n+1} Q^+_{n-1,x} = (2n+1) Q^{+ \, 2}_{n}, \\
\label{eq:recQ-} Q^-_{0} = 1, \qquad Q^-_{1} = \lambda x + 1, \qquad Q^-_{n+1,x} Q^-_{n-1} - Q^-_{n+1} Q^-_{n-1,x} = (2n+1) Q^{- \, 2}_{n}.
\end{gather}
\end{Theorem}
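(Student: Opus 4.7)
The plan is to derive the recursion (\ref{eq:recQ+}) as a direct algebraic consequence of the two Darboux-type formulas (\ref{eq:q+2 sin r}) and (\ref{eq:q+3 sin r}), which express $Q^+_{n+1}$ and $Q^+_{n-1}$ in terms of $Q^+_n$ and $Q^+_{n,x}$, combined with the Adler--Moser recursion (\ref{eq:rec_dif}). The recursion (\ref{eq:recQ-}) for $Q^-_n$ will then follow by the symmetry $\lambda \mapsto -\lambda$ using Proposition \ref{cor: relac phi+ y phi-}. The base values $Q^+_0=1$ and $Q^+_1 = \lambda x - 1$ are immediate from $\theta_0=1$, $\theta_1=x$ and (\ref{eq:q+2 sin r}).

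The first step is to view (\ref{eq:q+2 sin r}) and (\ref{eq:q+3 sin r}) as a $2\times 2$ linear system for the pair $(\lambda Q^+_n + Q^+_{n,x},\, Q^+_n)$ in terms of $(Q^+_{n+1},\, \lambda^2 Q^+_{n-1})$, whose coefficient matrix has determinant $\theta_{n+1,x}\theta_{n-1} - \theta_{n+1}\theta_{n-1,x} = (2n+1)\theta_n^2$ by (\ref{eq:rec_dif}). Cramer's rule then produces the algebraic identity
\begin{gather*}
(\star)\qquad (2n+1)\theta_n Q^+_n = \lambda^2\theta_{n+1}Q^+_{n-1} - \theta_{n-1}Q^+_{n+1},
\end{gather*}
together with the analogous formula
\begin{gather*}
(2n+1)\theta_n(\lambda Q^+_n + Q^+_{n,x}) = \lambda^2\theta_{n+1,x}Q^+_{n-1} - \theta_{n-1,x}Q^+_{n+1}.
\end{gather*}
Differentiating $(\star)$ with respect to $x$ and subtracting the second identity above yields, after cancellation,
\begin{gather*}
(\star\star)\qquad \lambda^2\theta_{n+1}Q^+_{n-1,x} - \theta_{n-1}Q^+_{n+1,x} = (2n+1)(\theta_{n,x} - \lambda\theta_n)Q^+_n.
\end{gather*}

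Next I would compute $\lambda^2\theta_{n+1}(Q^+_{n+1,x}Q^+_{n-1} - Q^+_{n+1}Q^+_{n-1,x})$ by substituting $\lambda^2\theta_{n+1}Q^+_{n-1}$ and $\lambda^2\theta_{n+1}Q^+_{n-1,x}$ from $(\star)$ and $(\star\star)$. The cross terms $\theta_{n-1}Q^+_{n+1}Q^+_{n+1,x}$ cancel, leaving
\begin{gather*}
\lambda^2\theta_{n+1}\bigl(Q^+_{n+1,x}Q^+_{n-1} - Q^+_{n+1}Q^+_{n-1,x}\bigr) = (2n+1)Q^+_n\bigl[\theta_n Q^+_{n+1,x} + \lambda\theta_n Q^+_{n+1} - \theta_{n,x}Q^+_{n+1}\bigr].
\end{gather*}
The bracketed expression equals $\lambda^2\theta_{n+1}Q^+_n$ by formula (\ref{eq:q+3 sin r}) applied with $n$ replaced by $n+1$, so dividing both sides by $\lambda^2\theta_{n+1}$ yields the desired identity $(2n+1)(Q^+_n)^2$. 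For (\ref{eq:recQ-}), substituting $\lambda\mapsto -\lambda$ in (\ref{eq:recQ+}) and using $Q^+_n(x,t_r,-\lambda) = (-1)^n Q^-_n(x,t_r,\lambda)$ from Proposition \ref{cor: relac phi+ y phi-} produces the $Q^-$ recursion, since the global sign factor $(-1)^{(n+1)+(n-1)} = 1$ matches on both sides.

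The main obstacle I anticipate is the careful algebraic bookkeeping needed to derive $(\star\star)$ from $(\star)$ and to execute the final cancellation of $Q^+_{n+1}$-quadratic terms. Once one recognizes that the residual bracket is precisely the ``backward'' Darboux formula (\ref{eq:q+3 sin r}) shifted by one index, the identification is immediate and the proof is purely computational, bypassing induction.
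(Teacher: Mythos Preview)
Your argument is correct and genuinely different from the paper's. The paper proceeds by brute-force substitution: it differentiates (\ref{eq:q+2 sin r}) and (\ref{eq:q+3 sin r}) to obtain explicit expressions for $Q^+_{n+1,x}$ and $Q^+_{n-1,x}$, plugs all four expressions into the Wronskian, and then simplifies the resulting sum of $\theta$-bilinears using (\ref{eq:rec_dif}), its $x$-derivative (\ref{eq:rec_dif derivada}), the auxiliary relation (\ref{eq:rec2}), and finally the second-order equation (\ref{eq:q+1,x r}) to eliminate $Q^+_{n,xx}$. Your approach instead reads (\ref{eq:q+2 sin r}) and (\ref{eq:q+3 sin r}) as a linear system for $(\lambda Q^+_n+Q^+_{n,x},\,Q^+_n)$ and inverts it via Cramer's rule, so that the Adler--Moser recursion (\ref{eq:rec_dif}) enters exactly once as the determinant $(2n+1)\theta_n^2$; the identities $(\star)$ and $(\star\star)$ then reduce the Wronskian to a single bracket that you recognize as the shifted backward formula (\ref{eq:q+3 sin r}). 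This is cleaner: it never touches $Q^+_{n,xx}$, does not require (\ref{eq:rec2}) or (\ref{eq:q+1,x r}), and makes transparent why only the single relation (\ref{eq:rec_dif}) among the $\theta_n$ is actually needed. The paper's route, in exchange, is more mechanical and does not require spotting the linear-system structure or the final identification of the bracket.

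One small remark: Proposition \ref{cor: relac phi+ y phi-} is stated for the adjusted functions $Q^\pm_{r,n}$, while Theorem \ref{thm-recursionQn} concerns the unadjusted $Q^\pm_n$. This is harmless since the proof of that proposition uses only the recursive formulas (\ref{eq:q+2 r})--(\ref{eq:q-2 r}), which coincide with (\ref{eq:q+2 sin r})--(\ref{eq:q-2 sin r}); alternatively, your entire $Q^+$ argument applies verbatim to $Q^-$ with the sign changes built into (\ref{eq:q-2 sin r}) and (\ref{eq:q-3 sin r}), which is what the paper does.
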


\begin{proof}In Remark \ref{rem:tabla E no 0 taus} we have computed $\phi^+_{n}$ and $\phi^-_{n}$ for $n=0$ and $1$. We have obtained $Q^\pm_{0} = 1$, $Q^+_{1} = \lambda x -1$ and $Q^-_{1} = \lambda x +1$.
So, we just have to prove the recursion formulas. First, we prove \eqref{eq:recQ+}. For this, we compute $Q^+_{n+1,x}$ and $Q^+_{n-1,x}$ using expressions \eqref{eq:q+2 sin r} and \eqref{eq:q+3 sin r}:
\begin{gather*}
Q^+_{n+1,x} = \frac{1}{\theta^2_{n}}\big(( \lambda Q^+_{n,x} \theta_{n+1} + \lambda Q^+_{n} \theta_{n+1,x} + Q^+_{n,xx} \theta_{n+1} -Q^+_{n} \theta_{n+1,xx} ) \theta_{n} \\
\hphantom{Q^+_{n+1,x} =}{} + ( Q^+_{n} \theta_{n+1,x} - \lambda Q^+_{n} \theta_{n+1} - Q^+_{n,x} \theta_{n+1}) \theta_{n,x}\big),\\
Q^+_{n-1,x} = \frac{1}{\lambda^2 \theta^2_{n}} \big((\lambda Q^+_{n,x} \theta_{n-1} + \lambda Q^+_{n} \theta_{n-1,x} + Q^+_{n,xx} \theta_{n-1} - Q^+_{n} \theta_{n-1,xx}) \theta_{n} \\
\hphantom{Q^+_{n-1,x} =}{} + (Q^+_{n} \theta_{n-1,x} - \lambda Q^+_{n} \theta_{n-1} - Q^+_{n,x} \theta_{n-1}) \theta_{n,x}\big).
\end{gather*}
Replacing this expressions in the recursion formula \eqref{eq:recQ+} we get
\begin{gather*}
 Q^+_{n+1,x} Q^+_{n-1} - Q^+_{n+1} Q^+_{n-1,x} = \dfrac{\big(\lambda^2 Q^+_{n} + 2\lambda Q^+_{n} Q^+_{n,x} + Q^+_{n} Q^+_{n,xx}\big)(\theta_{n+1,x} \theta_{n-1} - \theta_{n+1} \theta_{n-1,x})}{\lambda^2 \theta_{n}^3} \\
\hphantom{Q^+_{n+1,x} Q^+_{n-1} - Q^+_{n+1} Q^+_{n-1,x} =}{} + \dfrac{\big( \lambda Q^{+\, 2}_{n} + Q^+_{n} Q^+_{n,x}\big)( \theta_{n+1} \theta_{n-1,xx} - \theta_{n+1,xx} \theta_{n-1} )}{\lambda^2 \theta_{n}^3} \\
\hphantom{Q^+_{n+1,x} Q^+_{n-1} - Q^+_{n+1} Q^+_{n-1,x} =}{} + \dfrac{ Q^{+\, 2}_{n} ( \theta_{n+1,xx} \theta_{n-1,x} - \theta_{n+1,x} \theta_{n-1,xx})}{\lambda^2 \theta_{n}^2}.
\end{gather*}
We want to compute the expressions for $\theta_{n+1}$ and $\theta_{n-1}$ in brackets in terms of $\theta_n$. The first expression is just the relation \eqref{eq:rec_dif}. Now, if we derivate with respect to $x$ expression \eqref{eq:rec_dif}, we find the second one
\begin{gather}\label{eq:rec_dif derivada}
\theta_{n+1,xx} \theta_{n-1} - \theta_{n+1} \theta_{n-1,xx} = 2 (2n+1) \theta_{n} \theta_{n,x}.
\end{gather}
In order to compute
\begin{gather}\label{eq:parentesis 3}
\theta_{n+1,xx} \theta_{n-1,x} - \theta_{n+1,x} \theta_{n-1,xx}
\end{gather}
we use relation \eqref{eq:rec2}. We have
\[ \theta_{n+1,xx} =2 \dfrac{\theta_{n+1,x} \theta_{n,x}}{\theta_{n}} - \dfrac{\theta_{n+1} \theta_{n,xx}}{\theta_{n}} \qquad \textrm{and} \qquad
\theta_{n-1,xx} = 2 \dfrac{\theta_{n-1,x} \theta_{n,x}}{\theta_{n}} - \dfrac{\theta_{n-1} \theta_{n,xx}}{\theta_{n}}. \]
Replacing both expressions in \eqref{eq:parentesis 3} we get the third one
\begin{gather}\label{eq:parentesis 3 ya}
\theta_{n+1,xx} \theta_{n-1,x} - \theta_{n+1,x} \theta_{n-1,xx} = \dfrac{\theta_{n,xx}}{\theta_{n}} (\theta_{n+1,x} \theta_{n-1} - \theta_{n+1} \theta_{n-1,x}) = (2n+1)\theta_{n} \theta_{n,xx}.
\end{gather}
Applying the expressions \eqref{eq:rec_dif}, \eqref{eq:rec_dif derivada} and \eqref{eq:parentesis 3 ya} we get
\begin{gather*}
 Q^+_{n+1,x} Q^+_{n-1} - Q^+_{n+1} Q^+_{n-1,x} = \\
 = (2n+1) \dfrac{ \big(\lambda^2 Q^{+\, 2}_{n} +2 \lambda Q^+_{n} Q^+_{n,x} + Q^+_{n} Q^+_{n,xx}\big) \theta_{n} -2 \lambda Q^{+\, 2}_{n} \theta_{n,x} - 2Q^+_{n} Q^+_{n,x} \theta_{n,x} +Q^{+ \, 2}_{n} \theta_{n,xx}}{\lambda^2 \theta_{n}}.
\end{gather*}
Finally, the expression \eqref{eq:q+1,x r} for $Q^+_{n,xx}$ yields to
\[ Q^+_{n+1,x} Q^+_{n-1} - Q^+_{n+1} Q^+_{n-1,x} = (2n+1)Q^{+\, 2}_{n}.\]
Analogously, the second recursion formula can be proved. So we have established our result.
\end{proof}

\begin{Remark}By Lemmas \ref{lem-Qn} and \ref{polinomios} for $F= \nC(\lambda, t_r)$ and $a= \lambda, \, b=-1$, we can conclude from this theorem that the functions $Q^\pm_{n} (x, t_r, \lambda)$ are polynomials of~$x$ and $\lambda$ with coefficients in $\nC( t_r)$ for all $n$. Indeed, their degree as polynomials of $\lambda$ is $n$.
Thus, Theorems \ref{soluciones2 r} and \ref{thm-recursionQn} determine the algebraic structure of $\phi^+_{r,n}$ and $\phi^-_{r,n}$.

Since polynomials $Q^{\pm}_n$ are not adjusted to any level of the KdV hierarchy, when we iterate the recurrences~\eqref{eq:recQ+} and~\eqref{eq:recQ-} we will obtain integration constants of $x$ which may depend on~$\lambda$ and $\tau_2, \ldots, \tau_n$. We will denote such integration constants by $\tau^{\pm}_2, \ldots, \tau^{\pm}_n$.
\end{Remark}

\begin{Example}For the first polynomials we find
\[ \begin{matrix} n & \qquad Q^+_{n} & \qquad Q^-_{n} \\[8pt]
 0 & \qquad 1 & \qquad 1 \\[3pt]
 1 & \qquad \lambda x -1 & \qquad \lambda x +1 \\[5pt]
 2 & \qquad \lambda^2 x^3 -3\lambda x^2 +3x + \tau^+_2 & \qquad \lambda^2 x^3 +3\lambda x^2 +3x + \tau^-_2 \\[5pt]
 3 & \qquad Q^+_{3} & \qquad Q^-_{3} \end{matrix} \]
where
\begin{gather}
 Q^+_3 = \lambda^3 x^6 - 6\lambda^2 x^5 +15\lambda x^4 -15x^3 + 5\lambda x^3 \tau^+_2 - 15 x^2 \tau^+_2 - \big(\lambda \tau^+_3 +5(\tau^+_2)^2 \big) x +\tau^+_3,\nonumber\\
 Q^-_3 = \lambda^3 x^6 + 6\lambda^2 x^5 +15\lambda x^4 +15x^3 + 5\lambda x^3 \tau^-_2 + 15 x^2 \tau^-_2 + \big(\lambda \tau^-_3 +5(\tau^-_2)^2 \big) x +\tau^-_3.\label{eq-Q3+-}
\end{gather}
\end{Example}

\subsection[Examples of fundamental matrices for the case $E \neq 0$]{Examples of fundamental matrices for the case $\boldsymbol{E \neq 0}$}

We can compute fundamental matrices for system~\eqref{eq:systemkdv1E r} for any $n$ using recursion formulas~\eqref{eq:recQ+} and~\eqref{eq:recQ-}.

\begin{Example}\label{ex-sol-E} We present explicit computations using SAGE for the fundamental solutions of the system~\eqref{eq:systemkdv1E r} when $E = -\lambda^2 \neq 0$ for same potentials as in Example \ref{ex:E=0}.

1. We first expose examples of unadjusted fundamental solutions:{\samepage
\begin{gather*} \begin{matrix}n & \quad \phi^+_{r,n} & \phi^-_{r,n} \\[5pt]
 0 & e^{\lambda x +(-1)^r \lambda^{2r+1} t_r} & e^{-\lambda x -(-1)^r \lambda^{2r+1} t_r} \\[3pt]
 1 & e^{\lambda x +(-1)^r \lambda^{2r+1} t_r} \dfrac{\lambda x -1}{x} & e^{-\lambda x -(-1)^r \lambda^{2r+1} t_r} \dfrac{\lambda x +1}{x} \\[13pt]
 2 & e^{\lambda x +(-1)^r \lambda^{2r+1} t_r} \dfrac{\lambda^2 x^3 -3\lambda x^2 +3x + \tau^+_2}{x^3 + \tau_2} & e^{-\lambda x -(-1)^r \lambda^{2r+1} t_r} \dfrac{\lambda^2 x^3 +3\lambda x^2 +3x + \tau^-_2}{x^3 + \tau_2}
\\[13pt] 3 & e^{\lambda x +(-1)^r \lambda^{2r+1} t_r} \dfrac{Q^+_3 (\lambda,x, t_r) }{x^6 + 5x^3 \tau_2 + x \tau_3 -5 \tau_2^2} & e^{-\lambda x -(-1)^r \lambda^{2r+1} t_r}\dfrac{Q^-_3 (\lambda,x, t_r)}{x^6 + 5x^3 \tau_2 + x \tau_3 -5 \tau_2^2}
\end{matrix} \end{gather*}
where $Q^+_3$ and $Q^-_3 $ are the ones given in \eqref{eq-Q3+-}.}

2. Next, we expose fundamental solutions for potentials which are solutions of the first level of the KdV hierarchy, KdV$_1$ equation: $ u_{t_1} = \frac{3}{2} uu_x - \frac{1}{4} u_{xxx}$. We also show the explicit choice of the functions $\tau^{\pm}_i$. The choice of functions $\tau_i$ is the same as in Example~\ref{ex:E=0}:
\begin{gather*} \begin{matrix}n & \phi^+_{1,n} & \phi^-_{1,n} & \!\!\!\!\big(\tau^{\pm}_2, \ldots,\tau^{\pm}_{n}\big)\! \\[5pt]
 0\! & e^{\lambda x -\lambda^{3} t_1} & e^{-\lambda x + \lambda^{3} t_1} & \\[3pt]
 1\! & e^{\lambda x - \lambda^{3} t_1} \dfrac{\lambda x -1}{x} & e^{-\lambda x +\lambda^3 t_1} \dfrac{\lambda x +1}{x} & \\[13pt]
 2\! & \!\!e^{\lambda x {-} \lambda^{3} t_1} \dfrac{\lambda^2 x^3 {-}3\lambda x^2 {+}3x {+}3\lambda^2 t_1}{x^3 +3t_1}\!\! & \!\!e^{-\lambda x {+}\lambda^3 t_1} \dfrac{\lambda^2 x^3 {+}3\lambda x^2 {+}3x {+}3\lambda^2 t_1}{x^3 +3t_1}\!\! & \!\!\!\!\big(3 \lambda^2 t_1\big)\!
\\[13pt]
3\! & e^{\lambda x - \lambda^{3} t_1} \dfrac{Q^+_3 (\lambda,x, t_1) }{x^6 + 15x^3t_1 -45t_1^2} & e^{-\lambda x +\lambda^3 t_1}\dfrac{Q^-_3 (\lambda,x, t_1)}{x^6 + 15x^3t_1 -45t_1^2} &
\!\!\!\!\big(3 \lambda^2 t_1,\!{-}45 \big(\lambda^3t^2_1 {\pm} t_1\big) \big)\!
\end{matrix} \end{gather*}
where
\begin{gather*}
Q^+_3 (\lambda,x, t_1) = \lambda^3 x^6 - 6\lambda^2 x^5 +15\lambda x^4 -15x^3 + 15\lambda^3 x^3 t_1 - 45\lambda^2 x^2 t_1 + 45\lambda xt_1 \\
\hphantom{Q^+_3 (\lambda,x, t_1) =}{} - 45\lambda^3 t_1^2 - 45t_1,\\
Q^-_3 (\lambda,x, t_1) = \lambda^3 x^6 + 6\lambda^2 x^5 +15\lambda x^4 +15x^3 + 15\lambda^3 x^3 t_1 +45\lambda^2 x^2 t_1 + 45\lambda xt_1 \\
\hphantom{Q^-_3 (\lambda,x, t_1) =}{} - 45\lambda^3 t_1^2 +45t_1.
\end{gather*}
\end{Example}

\section{Spectral curves and Darboux--Crum transformations}\label{sec espec darb}

Let $\Gamma_n \subset \nC^2$ be the spectral curve associated to the stationary Schr\"odinger operator $-\partial_{xx} +u -E$ where $u $ is a s-KdV$_n$ potential. Next we consider the Zariski closure of $\Gamma_n$, say $\overline{\Gamma}_{n}$, in the complex projective plane $ \mathbb{P}^2 $. Let be
\begin{gather*}
p (E, \mu)=\mu^2-R_{2n+1} (E)= \mu^2-\sum_{j=0}^{2n+1} C_{j} E^j =0
\end{gather*} an equation for~$\Gamma_n$. Then an equation for $\overline{\Gamma}_{n}$ is
\begin{gather*}
 p_h (E,\mu,\nu ) = \mu^2 \nu^{2n-1} - \widehat{R}_{2n+1} (E, \nu) =0,
\end{gather*}
where
\begin{gather*}
\widehat{R}_{2n+1} (E, \nu) = \nu^{2n+1} R_{2n+1} \left ( \frac{E}{\nu} \right) = \sum_{j=0}^{2n+1} C_{j} \nu^{2n+1-j} E^j
\end{gather*} is an homogeneous polynomial of degree $2n+1$. Moreover, observe that the singular points of~$\overline{\Gamma}_{n}$ are
\begin{gather*}
 \Sing \big( \overline{\Gamma}_{n} \big)=
 \{(E,0) \colon E \ \text{is a multiple root of}\ R_{2n+1}
 \} \cup \{\ P_\infty =[0:1:0] \},
\end{gather*}
and also
\begin{gather}\label{eq:infty E0}
 \overline{\Gamma}_n \cap \{ E=0 \} = \big\{ [0: \mu : \nu] \in \mathbb{P}^2\colon \mu^2 \nu^{2n-1}= C_0 \nu^{2n+1} \big\}.
\end{gather}

\subsection{Extended Green's function} \label{ssec:ext green}

Following \cite{GH}, we define the Green's function on $\Gamma_n \times \nC$ as
\begin{gather*}
g(E,\mu,x )= \dfrac{\phi_1 \phi_2}{W(\phi_1, \phi_2)},
\end{gather*}
where $\phi_1$ and $\phi_2$ are two independent solutions of Schr\"odinger equation
\begin{gather}\label{eq:schr0 spec}
(L-E) \phi =(-\partial_{xx} +u -E) \phi = 0.
\end{gather}
for the same value of $E$ and $W(\phi_1, \phi_2) $ stands for their Wronskian.

Let
\begin{gather}\label{eq:sigma conj}
 \sigma_+ = \sigma (E, \mu )= \dfrac{i\mu + F_{n,x}/2}{ F_n}, \qquad \sigma_{-} = \sigma (E, -\mu )= \dfrac{-i\mu + F_{n,x}/2}{F_n}
\end{gather}
be functions defined over the spectral curve. We recall the following result:

\begin{Lemma}[{\cite[Lemma 1.8]{GH}}]\label{lm:chapter3 GH}
Let $u$ be solution of s-KdV$_n$ equation \eqref{eq:skdvn2}. Let $\phi_1$ and $\phi_2$ be solutions of Schr\"odinger equation \eqref{eq:schr0 spec} for this potential and with corresponding functions over the spectral curve $\sigma_+$ and $\sigma_-$ defined by~\eqref{eq:sigma conj}. Then $\sigma_+$ and $\sigma_-$ are solutions of the Riccati type equation
\begin{gather}\label{eq:riccati GH}
 \sigma^2 + \sigma_x = u -E.
\end{gather}
Moreover, the following identities are satisfied
\begin{gather}
\sigma_+ + \sigma_- = \dfrac{F_{n,x}}{F_n} = \dfrac{(\phi_1 \phi_2)_x}{\phi_1 \phi_2}, \qquad
\sigma_+ - \sigma_- = \dfrac{2i\mu}{F_n} = - \dfrac{W(\phi_1, \phi_2)}{\phi_1 \phi_2},\nonumber \\
\sigma_+ \cdot \sigma_- = \dfrac{H_n}{F_n} = \dfrac{\phi_{1,x} \phi_{2,x}}{\phi_1 \phi_2},\label{eq:GH 1}
\end{gather}
where $W(\phi_1, \phi_2) = \phi_1 \phi_{2,x} - \phi_{1,x} \phi_{2}$ denotes the Wronskian of $\phi_1$ and $\phi_2$.
\end{Lemma}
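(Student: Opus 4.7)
The plan is to verify the Riccati equation by direct substitution, then read off the three algebraic identities from the very definition of $\sigma_\pm$, and finally translate everything to $\phi_1,\phi_2$ via the standard Schrödinger--Riccati correspondence.

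First I would check that $\sigma_+$ (the case $\sigma_-$ being identical after the sign change $\mu\mapsto -\mu$) satisfies $\sigma^2+\sigma_x=u-E$. Since $u$ is stationary, the point $(E,\mu)\in\Gamma_n$ is a constant of $x$, so only $F_n,F_{n,x},F_{n,xx}$ carry $x$-dependence. A direct expansion gives
\begin{gather*}
\sigma_+^{\,2}+\sigma_{+,x}
=\frac{-\mu^2+i\mu F_{n,x}+F_{n,x}^2/4}{F_n^2}
+\frac{F_n F_{n,xx}/2-i\mu F_{n,x}-F_{n,x}^2/2}{F_n^2}
=\frac{-\mu^2+F_n F_{n,xx}/2-F_{n,x}^2/4}{F_n^2}.
\end{gather*}
The spectral curve equation~\eqref{eq:spectral curve} reads $\mu^2=F_n H_n-F_{n,x}^2/4$, so the numerator becomes $F_n F_{n,xx}/2-F_n H_n$. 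Using the identity $H_n=(E-u)F_n+F_{n,xx}/2$ from~\eqref{eq:H_r}, this simplifies to $(u-E)F_n^2$, and dividing by $F_n^2$ gives $u-E$, as required. The core of the argument is this single computation; the main (though still mild) obstacle is bookkeeping the three ingredients (definition of $\sigma_\pm$, spectral curve, defining relation of $H_n$) in the right order.

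Next, the three identities in~\eqref{eq:GH 1} follow from the definition \eqref{eq:sigma conj} by elementary algebra. The sum is $\sigma_++\sigma_-=F_{n,x}/F_n$ by cancellation of the $\pm i\mu$ terms, the difference is $2i\mu/F_n$ by cancellation of the $F_{n,x}/2$ terms, and the product equals
\begin{gather*}
\sigma_+\sigma_-=\frac{(F_{n,x}/2)^2-(i\mu)^2}{F_n^2}=\frac{F_{n,x}^2/4+\mu^2}{F_n^2}=\frac{F_n H_n}{F_n^2}=\frac{H_n}{F_n},
\end{gather*}
again using $\mu^2=F_n H_n-F_{n,x}^2/4$.

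Finally, to identify the right-hand sides with the expressions in $\phi_1,\phi_2$, I would invoke the standard reduction of the Schrödinger equation~\eqref{eq:schr0 spec} to the Riccati equation~\eqref{eq:riccati GH} via the logarithmic derivative: if $\phi$ solves $(L-E)\phi=0$ then $\sigma:=(\log\phi)_x$ satisfies $\sigma^2+\sigma_x=u-E$, and conversely every solution of the Riccati equation arises this way up to a nonzero multiplicative constant. Thus we may choose $\phi_1,\phi_2$ with $\sigma_+=\phi_{1,x}/\phi_1$ and $\sigma_-=\phi_{2,x}/\phi_2$; these are linearly independent because $\sigma_+-\sigma_-=2i\mu/F_n\not\equiv 0$ off the branch locus $\mu=0$. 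Substituting these logarithmic derivatives, the sum becomes $(\phi_1\phi_2)_x/(\phi_1\phi_2)$, the difference becomes $-W(\phi_1,\phi_2)/(\phi_1\phi_2)$, and the product becomes $\phi_{1,x}\phi_{2,x}/(\phi_1\phi_2)$, matching the right-hand sides of~\eqref{eq:GH 1} and completing the proof.
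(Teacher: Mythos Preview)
Your proof is correct. Note, however, that the paper does not supply its own proof of this lemma: it is quoted verbatim from \cite[Lemma~1.8]{GH} and followed only by historical remarks about Hermite and the second symmetric power equation~\eqref{symmp}. Your direct verification---plugging the definition~\eqref{eq:sigma conj} into the Riccati equation and simplifying via the spectral-curve relation $\mu^2=F_nH_n-F_{n,x}^2/4$ together with~\eqref{eq:H_r}, then reading off the sum/difference/product identities and translating them through the logarithmic-derivative correspondence $\sigma=(\log\phi)_x$---is exactly the standard argument one finds in the cited reference, so there is nothing to contrast.
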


We remark that this lemma is essentially a reformulation of a classic result that goes back to Hermite when he was studying closed form solutions for Lam\'e equation~\cite{HE}. In~\cite{WW} call this approach the Lindeman--Stieljes theory but, as far as we know, this approach was used for the first time by Hermite, and then by others: Halphen, Brioschi, Crawford, Stieljes, \dots. The method used that the product of solutions $X=\phi_1 \phi_2$ is a solution of the second symmetric power of the Schr\"odinger equation
\begin{gather} \label{symmp} (-\partial_{xxx}-4(u-E)\partial_x-2u_x)X=0.\end{gather}
Then the relations~\eqref{eq:GH 1} connect the solutions of the Riccati equation with that of the second symmetric power. The fact that there is a connection between the solutions of the second symmetric product and the Riccati equation of the Schr\"odinger equation is relevant for the differential Galois theory, although we will not use explicitely this connection in this paper. Furthermore it is interesting to point out that the solutions of the Lam\'e equation obtained by Hermite in~\cite{HE}, are associated to other algebro-geometric solutions of KdV, finite-gap solutions with regular spectral curves, see~\cite{MoRuZu} and references therein.
As far as we know, the relevance of the equation~\eqref{symmp} for the KdV equation was considered for the first time by Gel'fand and Dikii in their fundamental paper about the asymptotic behaviour of the resolvent of the Schr\"odinger equation associated to the KdV equation~\cite{GD}.

By Lemma \ref{lm:chapter3 GH}, the Green's function can be rewritten as
\begin{gather}\label{eq-Green1}
g(E,\mu,x )= \dfrac{i F_n (E,x)}{2 \mu} = \dfrac{1}{\sigma_- - \sigma_+}.
\end{gather}
Observe that $g$ is well defined whenever $\mu\not=0$, i.e., for energy levels such that $R_{2n+1} (E) \not=0$.

Next, let define a extension of $g$ on $\overline{\Gamma}_{n} \times \nC_x $ as
\begin{gather*}
g_h (E,\mu,\nu, x )= \dfrac{i \nu^n F_n (E/ \nu,x)}{2 \mu \nu^{n-1}}, \qquad \text{for} \quad [E:\mu :\nu ] \in \overline{\Gamma}_{n} \setminus \{ \mu\nu=0 \}.
\end{gather*}
We call $g_h$ {\it the homogenized Green's function}. Next we will show that $g_h $ is well defined and also that it extends $g$, that is $g_h (E,\mu,1,x)=g(E,\mu,x )$ for $(E,\mu,x)\in \Gamma_n \times \nC_x $. To do that, observe that
\[ g_h (E, \mu, 1, x) = g(E,\mu,x ) \qquad \text{and} \qquad g_h (a E,a \mu,a \nu, x )=g_h (E,\mu,\nu, x ), \]
for any $a\in\nC$, $a\not= 0$. Moreover, we have that
\begin{gather}\label{eq:chapter3 Fn homog}
 \widehat{F}_n (E, \nu,x): = \nu^n F_n (E/ \nu,x) = \sum_{j=0}^n f_{n-j} \nu^{n-j} E^j
\end{gather}
is an homogeneous polynomial in $E$ of degree $n$ and then
\begin{gather*}
g_h (E,\mu,\nu, x )= \dfrac{i \widehat{F}_n (E, \nu,x) }{2 \mu \nu^{n-1}}, \qquad \text{for} \quad [E:\mu:\nu ]\in \overline{\Gamma}_{n}.
\end{gather*}
Also, we get the following formula
\begin{gather}\label{eq:curva hatF}
 \mu^2 \nu^{2n-2} = \nu^{2n} R_{2n+1} (E/\nu) = \dfrac{\nu \widehat{F}_n\widehat{F}_{n,xx}}{2} - (u - E/\nu) \widehat{F}_n^2 - \dfrac{\nu^2 \widehat{F}^2_{n,x}}{4},
\end{gather}
where
\begin{gather}\label{eq:chapter3 Fnx homog}
 \widehat{F}_{n,x} = \nu^{n-1} F_{n,x} (E/\nu) \qquad \text{and} \qquad \widehat{F}_{n,xx} = \nu^{n-1} F_{n,xx} (E/\nu)
\end{gather}
are homogeneous polynomials in $E$ and $\nu$ of degree $n-1$.

Now, consider equation \eqref{eq:kdv est}
\[ 0= \dfrac{F_{n,xxx}}{2} - 2(u-E) F_{n,x} -u_x F_n, \]
after multiplication by $F_n$ and integration, this equation reads
\begin{gather*}
c = \dfrac{F_n F_{n,xx}}{2} -(u-E) F^2_n - \dfrac{F^2_{n,x}}{4},
\end{gather*}
where $c$ is a integration constant. By~\eqref{eq-Green1} we have the following differential relation for the function~$g$:
\begin{gather*}
\dfrac{1}{2}g g_{xx} -(u-E) g^2 - \dfrac{1}{4} g^2_{x} = - \dfrac{1}{4},
\end{gather*}
since $g_x = (\sigma_+ + \sigma_-)g$ and $g_{xx} = 2 (u-E + \sigma_+ \sigma_- ) g$.

Now let define the extensions of $\sigma_+$ and $\sigma_-$ on $\overline{\Gamma}_{n} \times \nC_x $ as
\begin{gather}\label{eq:sigmah conj}
 (\sigma_+ )_h= \dfrac{i\mu \nu^{n-1} +\nu \widehat{F}_{n,x}/2}{ \widehat{F}_n}, \qquad
 (\sigma_{-} )_h= \dfrac{-i\mu \nu^{n-1} +\nu \widehat{F}_{n,x}/2}{\widehat{F}_n},
\end{gather}
where we have used previous notation. Notice that the functions $(\sigma_+ )_h$ and $(\sigma_- )_h$ are solutions of the Riccati type equation
\[ ((\sigma_{\pm})_h)^2 + ((\sigma_{\pm})_x )_h = u -E/\nu. \]
Moreover we have that the function
\begin{gather*}
g_h= \dfrac{i \widehat{F}_n (E, \nu,x) }{2 \mu \nu^{n-1}} =\dfrac{1}{(\sigma_- )_h - (\sigma_+ )_h }
\end{gather*}
is a solution of
\begin{gather*}
\dfrac{1}{2} g_h (g_{xx})_h -(u-E/ \nu) g_h^2 - \dfrac{1}{4} \big(g^2_{x}\big)_h = - \dfrac{1}{4}.
\end{gather*}

\subsubsection{Transformed Green's functions}

Now, we analyze how Darboux--Crum transformations change Green's functions $g$ and $g_h$. For that, we will use solutions of the Riccati type equation \eqref{eq:riccati GH} as a esential tool.

Let $u$ be solution of s-KdV$_n$ equation \eqref{eq:skdvn2}. Let $\phi_1$ and $\phi_2$ be solutions of Schr\"odinger equation~\eqref{eq:schr0 spec} for this potential and energy level $E$. Next we consider $\phi_0$ a solution of Schr\"odinger equation for $u$ and $E_0$, with $E_0 \neq E$ and choose as corresponding point of the spectral curve $(E_0, \mu_0)$. Recall that after applying a Darboux--Crum transformation with $\phi_0$ to $u$, $\phi_1$ and $\phi_2$, we get
\begin{gather*}
{\rm DT}(\phi_0) u = u- 2 \sigma_{0,x}, \qquad
{\rm DT}(\phi_0) \phi_1 = \phi_{1,x} - \sigma_0 \phi_1, \qquad
{\rm DT}(\phi_0) \phi_2 = \phi_{2,x} - \sigma_0 \phi_2,
\end{gather*}
where $\sigma_0 = (\log \phi_0)_x$ is a solution of the Riccati equation $ \sigma^2 + \sigma_x = u -E_0$. By Lemma~\ref{lm:chapter3 GH}, the function $\sigma^0$ equals
\begin{gather}\label{eq:chapter3 sigma0}
\sigma^0 = \sigma (E_0, \mu_0) = \dfrac{i\mu_0 + F^0_{n,x}/2}{F^0_n},
\end{gather}
 where $F_n^0 = F_n (E_0)$, is a solution of the same Riccati equation for $E=E_0$. Thus, we conclude that we can perform a Darboux transformation using $\sigma^0$ instead of $\sigma_0$. The transformed functions
\begin{gather*}
\widetilde{\phi}_1 = \phi_{1,x} - \sigma^0 \phi_1 \qquad \text{and} \qquad \widetilde{\phi}_2 = \phi_{2,x} - \sigma^0 \phi_2
\end{gather*}
are solutions of the Schr\"odinger equation for potential
\[ \widetilde{u} = u- 2 \sigma^0_x. \]

Now, we take the functions $\sigma_1 = ( \log \phi_1)_x$ and $\sigma_2 = ( \log \phi_2)_x$, which are solutions of the Riccati equation \eqref{eq:riccati GH} for $E \neq E_0$. Then, by equations \eqref{eq:GH 1}, we get the equalities
\begin{gather}
\label{eq:chapter3 sigma+-1} \sigma_+ - \sigma_- = \dfrac{2i\mu}{F_n} =- \dfrac{W(\phi_1, \phi_2)}{\phi_1 \phi_2} = \dfrac{\phi_{1,x}}{\phi_1} - \dfrac{\phi_{2,x}}{\phi_2} = \sigma_1 - \sigma_2, \\
\label{eq:chapter3 sigma+-2} \sigma_+ + \sigma_- = \dfrac{F_{n,x}}{F_n} = \dfrac{\phi_1 \phi_{2,x} + \phi_{1,x} \phi_2}{\phi_1 \phi_2} = \dfrac{\phi_{1,x}}{\phi_1} + \dfrac{\phi_{2,x}}{\phi_2} = \sigma_1 + \sigma_2, \\
\label{eq:chapter3 sigma+-3} \sigma_+ \cdot \sigma_- = \dfrac{\phi_{1,x} \phi_{2,x}}{\phi_1 \phi_2} = \dfrac{\phi_{1,x}}{\phi_1} \dfrac{\phi_{2,x}}{\phi_2} = \sigma_1 \sigma_2.
 \end{gather}

Next we define the transformed Green's function
\begin{gather*}
\widetilde{g}(E, \mu, x) = \dfrac{\widetilde{\phi}_1 \widetilde{\phi}_2}{W(\widetilde{\phi}_1,\widetilde{\phi}_2)}.
\end{gather*}
The relations \eqref{eq:chapter3 sigma+-1}--\eqref{eq:chapter3 sigma+-3} link the Green's functions as follows
\[ \widetilde{g}(E, \mu, x) = \dfrac{ \big(\sigma_1 - \sigma^0\big)\big(\sigma_2 - \sigma^0\big)}{(E- E_0) } \cdot \dfrac{ \phi_1 \phi_2}{W(\phi_1, \phi_2)} = \dfrac{ \big(\sigma_+ - \sigma^0\big) \big(\sigma_- - \sigma^0\big)}{(E- E_0)} g(E,\mu,x). \]
Hence we obtain a rational presentation of $\widetilde{g}$ as a consequence of the formulas~\eqref{eq:chapter3 sigma0} and~\eqref{eq:sigma conj}. We write this formula in \eqref{eq:chapter3 green funct TD}.

\begin{Proposition}\label{prop-Green1}
The Green's function associated to the transformed Schr\"odinger operator explicitly reads
\begin{gather}\label{eq:chapter3 green funct TD}
\widetilde{g}(E, \mu, x) = \dfrac{i \left (\mu^2 \big(F_n^0\big)^2 -\mu_0^2 F_n^2 -i\mu_0 F_n \big(F_n^0 F_{n,x} - F^0_{n,x} F_n\big) + \frac{(F_n^0 F_{n,x} - F^0_{n,x} F_n)^2}{4} \right ) }{2 \mu (E-E_0) F_n \big(F_n^0\big)^2}.
\end{gather}
\end{Proposition}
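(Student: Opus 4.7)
The proof is essentially a direct algebraic substitution into the boxed identity
\[
\widetilde{g}(E,\mu,x)=\frac{(\sigma_+-\sigma^0)(\sigma_--\sigma^0)}{E-E_0}\,g(E,\mu,x),
\]
which is already derived in the paragraph immediately preceding the statement (via the relations \eqref{eq:chapter3 sigma+-1}--\eqref{eq:chapter3 sigma+-3}). So the plan is to start from this identity, plug in the explicit rational formulas for $\sigma_\pm$, $\sigma^0$, and $g$, and collect terms so as to match the right-hand side of \eqref{eq:chapter3 green funct TD}.

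First, I would write $g(E,\mu,x)=iF_n/(2\mu)$ from \eqref{eq-Green1} and, using \eqref{eq:sigma conj} and \eqref{eq:chapter3 sigma0}, put
\[
\sigma_\pm-\sigma^0=\frac{\pm i\mu\, F_n^0-i\mu_0\, F_n+\tfrac12\bigl(F_{n,x}F_n^0-F_{n,x}^0F_n\bigr)}{F_n\,F_n^0}.
\]
To lighten the notation I would introduce the abbreviation $A:=F_n^0F_{n,x}-F_{n,x}^0F_n$ (which is exactly the combination appearing in \eqref{eq:chapter3 green funct TD}), so that the common numerator above becomes $\pm i\mu F_n^0-i\mu_0F_n+A/2$.

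Next I would multiply the two factors, grouping them as $(Y+X)(-Y+X)=X^2-Y^2$ with $Y=i\mu F_n^0$ and $X=-i\mu_0 F_n+A/2$. This gives immediately
\[
(\sigma_+-\sigma^0)(\sigma_--\sigma^0)=\frac{\mu^2(F_n^0)^2-\mu_0^2F_n^2-i\mu_0F_n\,A+A^2/4}{F_n^2(F_n^0)^2},
\]
with no hidden cancellations needed. Multiplying by $g/(E-E_0)=iF_n/\bigl(2\mu(E-E_0)\bigr)$ and simplifying one power of $F_n$ then reproduces \eqref{eq:chapter3 green funct TD} on the nose.

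There is really no conceptual obstacle here: the only point that requires attention is the algebraic bookkeeping of the cross term $-i\mu_0 F_n\cdot A$ (its sign comes from the product of the two $X$'s, not from the $X^2-Y^2$ trick), and checking that the normalization $i/(2\mu(E-E_0))$ is correctly inherited from $g$ and from the prefactor $(E-E_0)^{-1}$ in the Darboux identity. Since both $F_n$ and $F_n^0$ are nonvanishing on a Zariski-open set of the spectral curve and $E\ne E_0$ by assumption, every division above is legitimate, and no further case analysis is required.
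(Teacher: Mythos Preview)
Your proof is correct and follows exactly the approach the paper indicates: the paragraph preceding the proposition already establishes $\widetilde g=(\sigma_+-\sigma^0)(\sigma_--\sigma^0)(E-E_0)^{-1}g$ and then says the formula follows by substituting \eqref{eq:sigma conj} and \eqref{eq:chapter3 sigma0}, which is precisely the algebraic computation you carry out (your $(X+Y)(X-Y)$ grouping is the cleanest way to organize it). There is nothing to add.
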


\begin{Remark}
Observe that for $E_0=0$ the formula \eqref{eq:chapter3 green funct TD} becomes
\begin{gather*}
\widetilde{g}(E, \mu, x) = \dfrac{i \left (\mu^2 f_n^2 -\mu_0^2 F_n^2 -i\mu_0 F_n (f_n F_{n,x} - f_{n,x} F_n) + \frac{(f_n F_{n,x} - f_{n,x} F_n)^2}{4} \right ) }{2 \mu E F_n f_n^2}.
\end{gather*}
\end{Remark}

We will use the following result from \cite{GH}.

\begin{Proposition}[{\cite[Lemma G.1]{GH}}]\label{lemma-GH}
Let $u$ be solution of s-KdV$_n$ equation, let $(E_0, \mu_0)$ and $(E, \mu)$ be two different points of $\Gamma_n$. Then the transformed Green's function explicitly reads
\begin{gather*}
\widetilde{g}(E, \mu, x) = \dfrac{ \big(\sigma_+ - \sigma^0\big)\big(\sigma_- - \sigma^0\big)}{(E- E_0)} \dfrac{ i F_n}{2 \mu } = \dfrac{i \widetilde{F}_{\widetilde{n}} (E,x)}{2 \widetilde{\mu}},
\end{gather*}
where $\widetilde{F}_{\widetilde{n}}$ is a polynomial in $E$ of degree $\widetilde{n}$ and $\widetilde{\mu}$ is such that $\Gamma_{\widetilde{n}}\colon \widetilde{\mu}^2 - \widetilde{R}_{2\widetilde{n} +1} =0$ for some polynomial $\widetilde{R}_{2\widetilde{n} +1} (E)$ of degree $2\widetilde{n} +1$, with $0 \leq \widetilde{n} \leq n+1$.
\end{Proposition}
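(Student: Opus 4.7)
The plan is to combine the identities gathered before the statement with the fact that Darboux--Crum transformations preserve the stationary KdV hierarchy, shifting the level by at most one. For the first equality, I would expand $\widetilde{\phi}_i=\phi_{i,x}-\sigma^0\phi_i$ to obtain
\[
\widetilde{\phi}_1\widetilde{\phi}_2 = (\sigma_1-\sigma^0)(\sigma_2-\sigma^0)\phi_1\phi_2 = (\sigma_+-\sigma^0)(\sigma_--\sigma^0)\phi_1\phi_2
\]
using \eqref{eq:chapter3 sigma+-1}--\eqref{eq:chapter3 sigma+-3}. A short calculation exploiting $\phi_{i,xx}=(u-E)\phi_i$ and the Riccati relation $(\sigma^0)^2+\sigma^0_x=u-E_0$ then gives $W(\widetilde{\phi}_1,\widetilde{\phi}_2)=(E-E_0)W(\phi_1,\phi_2)$. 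Dividing the two expressions and substituting $g=iF_n/(2\mu)$ immediately produces the first equality.

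For the second equality, the key fact to establish is that $\widetilde{u}=u-2\sigma^0_x$ is itself a stationary KdV potential, say an s-KdV$_{\widetilde{n}}$ potential for some integer $\widetilde{n}$. Since $u$ is s-KdV$_n$, i.e., $f_{n+1,x}(u)=0$, the behaviour of the $f_j$ under Darboux transformations described in Theorem~\ref{lemma: darboux fi}, together with the Riccati equation for $\sigma^0$, forces $f_{\widetilde{n}+1,x}(\widetilde{u})=0$ for some such $\widetilde{n}$. One then attaches to $\widetilde{u}$ its own spectral curve $\widetilde{\mu}^2=\widetilde{R}_{2\widetilde{n}+1}(E)$, of degree $2\widetilde{n}+1$ in $E$ by Corollary~\ref{cor:curva y pot}, and applies Lemma~\ref{lm:chapter3 GH} to the transformed Schr\"odinger operator: this automatically puts $\widetilde{g}$ into the claimed form $i\widetilde{F}_{\widetilde{n}}(E,x)/(2\widetilde{\mu})$ with $\widetilde{F}_{\widetilde{n}}$ a polynomial in $E$ of degree $\widetilde{n}$.

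The bound $0\le\widetilde{n}\le n+1$ I would extract from a degree count in $E$ on the explicit rational formula of Proposition~\ref{prop-Green1}. Since $R_{2n+1}(E)$ is monic of degree $2n+1$, the numerator $N$ in that formula has exact degree $2n+1$, dominated by $\mu^2(F_n^0)^2$, while the polynomial part of the denominator, $(E-E_0)F_n(F_n^0)^2$, has degree $n+1$. Squaring the identity $\widetilde{F}_{\widetilde{n}}\,\mu(E-E_0)F_n(F_n^0)^2=\widetilde{\mu}\,N$ and matching degrees in $E$ then constrains $\widetilde{n}\le n+1$. The main obstacle I expect is the step showing that the transformed potential is indeed of s-KdV$_{\widetilde{n}}$ type: the iteration of Theorem~\ref{lemma: darboux fi} entangles the $f_j(u)$ with derivatives of $\sigma^0$, and one has to exploit carefully both the Riccati equation for $\sigma^0$ and the stationary identity $f_{n+1,x}(u)=0$ to close the recursion and pin down the value of $\widetilde{n}$.
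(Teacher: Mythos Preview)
The paper does not supply its own proof of this proposition: it is quoted from \cite[Lemma~G.1]{GH} and used as a black box. The first equality is, however, effectively derived in the text immediately preceding Proposition~\ref{prop-Green1}, and your computation of $\widetilde{\phi}_1\widetilde{\phi}_2$ and $W(\widetilde{\phi}_1,\widetilde{\phi}_2)$ reproduces that argument correctly.

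Your plan for the second equality runs into a genuine difficulty. You propose to first show, via Theorem~\ref{lemma: darboux fi}, that $\widetilde{u}$ satisfies some s-KdV$_{\widetilde{n}}$ equation, and then invoke Lemma~\ref{lm:chapter3 GH}. But Theorem~\ref{lemma: darboux fi} only gives $f_j(\widetilde{u})=f_j(u)+A_j$ with the recursion $A_{j,x}+2\sigma^0 A_j+2f_{j,x}(u)=0$; combined with $f_{n+1,x}(u)=0$ this yields $A_{n+1}=c/\phi_0^2$ for a constant $c$, not $f_{\widetilde{n}+1,x}(\widetilde{u})=0$. Closing the recursion to pin down $\widetilde{n}$ is precisely what the paper does \emph{not} attempt abstractly: instead, in Theorems~\ref{thm:curva y DT1} and~\ref{thm:curva y DT2} it computes $\widetilde{g}$ explicitly case by case (regular point with $\mu_0\neq0$, regular with $\mu_0=0$, affine singular, $P_\infty$), identifies the resulting $\widetilde{F}_{\widetilde{n}}$ and $\widetilde{\mu}$ by hand using Corollaries~\ref{cor-G1}--\ref{cor-G4}, and only then reads off $\widetilde{n}$ and applies Corollary~\ref{cor:curva y pot}. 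So the logical flow in the paper is the reverse of yours.

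Your degree-count argument for $\widetilde{n}\le n+1$ also does not work as written. Squaring $\widetilde{F}_{\widetilde{n}}\,\mu(E-E_0)F_n(F_n^0)^2=\widetilde{\mu}\,N$ and using $\mu^2=R_{2n+1}$, $\widetilde{\mu}^2=\widetilde{R}_{2\widetilde{n}+1}$ gives degrees $2\widetilde{n}+(2n+1)+2+2n$ on the left and $(2\widetilde{n}+1)+2(2n+1)$ on the right, which match identically and impose no constraint on $\widetilde{n}$. The bound really comes from the explicit identification of $\widetilde{F}_{\widetilde{n}}$ in each case, not from a global degree estimate.
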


Next, for the homogeneized Green's function, choose the point of the spectral curve $[E_0 : \mu_0 : \nu_0]$. We define the extension of $\sigma^0$ on $\overline{\Gamma}_{n} \times \nC_x $ as
\begin{gather*}
 \big(\sigma^0 \big)_h (E_0,\mu_0, \nu_0)= \dfrac{i\mu_0 \nu_0^{n-1} +\nu_0 \widehat{F}^0_{n,x}/2}{ \widehat{F}^0_n},
\end{gather*}
where $\widehat{F}^0_n = \widehat{F}_n (E_0, \nu_0,x) $ for $\widehat{F}_n (E, \nu,x) $ defined by \eqref{eq:chapter3 Fn homog} and $\widehat{F}^0_{n,x} = \widehat{F}_{n,x} (E_0, \nu_0,x) $, for $\widehat{F}_{n,x}$ defined in \eqref{eq:chapter3 Fnx homog}. Notice that when $\nu_0 =0$ function $\big(\sigma^0 \big)_h$ vanishes. So, whenever $\nu_0 =0$ we define
\[ (\sigma^0 )_h (E_0,\mu_0, 0) := 0, \qquad \text{for} \quad [E_0 : \mu_0 :0] \in \overline{\Gamma}_n. \]

Using above notation we have the following results.

\begin{Proposition}\label{prop-Green2}
Let assume $C_0 = R_{2n+1} (0)\not=0$.
For $E_0=0$ and $\mu_0 \neq 0$, the homogeneized Green's function associated to the transformed Green's function $\widetilde{g}$ for $-\partial_{xx} +\widetilde{u} -E $ explicitly reads
\begin{gather*}
(\widetilde{g})_h (E, \mu, \nu, x) = \dfrac{i \left ( \frac{\nu^2 \widehat{F}_{n,xx}}{2} +(E- \nu u ) \widehat{F}_n + \frac{\nu f_{n,x}^2 \widehat{F}_n}{4 f_n^2} - \frac{\nu^2 f_{n,x} \widehat{F}_{n,x}}{2 f_n}
- \frac{\nu C_0 \widehat{F}_n}{f_n^2}\right ) }{2 \mu E \nu^{n-1} } \nonumber \\
\hphantom{(\widetilde{g})_h (E, \mu, \nu, x) =}{} + \dfrac{ C_0 \nu_0 ( \nu f_n \widehat{F}_{n,x}- f_{n,x} \widehat{F}_n) }{2 \mu E \nu^{n-2} \mu_0 f_n^2 },
 \end{gather*}
where $\widehat{F}_n (E, \nu,x) $ is defined by~\eqref{eq:chapter3 Fn homog} and $\widehat{F}_{n,x} (E, \nu,x)$, $\widehat{F}_{n,xx} (E, \nu,x)$ are defined by~\eqref{eq:chapter3 Fnx homog}.
\end{Proposition}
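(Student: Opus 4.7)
The plan is to specialise the formula \eqref{eq:chapter3 green funct TD} of Proposition \ref{prop-Green1} to $E_0=0$ and then homogenise in the projective coordinates $(E,\mu,\nu)$. Setting $E_0=0$ gives $F_n^0=F_n(0)=f_n$ and $F_{n,x}^0=f_{n,x}$, so \eqref{eq:chapter3 green funct TD} reduces to the expression already displayed in the Remark following Proposition \ref{prop-Green1}; writing $A = f_n F_{n,x} - f_{n,x} F_n$ for brevity, this is
$$\widetilde{g}(E,\mu,x) = \dfrac{i\bigl[\mu^2 f_n^2 - \mu_0^2 F_n^2 - i\mu_0 F_n A + A^2/4\bigr]}{2\mu E F_n f_n^2}.$$

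Next I would eliminate $\mu^2$ via the spectral curve equation \eqref{eq:spectral curve}, $\mu^2 = F_n F_{n,xx}/2 - (u-E)F_n^2 - F_{n,x}^2/4$, and expand $A^2 = f_n^2 F_{n,x}^2 - 2 f_n f_{n,x} F_n F_{n,x} + f_{n,x}^2 F_n^2$. The crucial cancellation is that the term $-F_{n,x}^2 f_n^2/4$ produced by substituting $\mu^2 f_n^2$ exactly kills the leading $f_n^2 F_{n,x}^2/4$ coming from $A^2/4$; once this is done every remaining summand in the numerator carries a factor of $F_n$, which cancels the $F_n$ in the denominator. I would then invoke the geometric constraint $\mu_0^2 = C_0\nu_0^2$, obtained by reading \eqref{eq:infty E0} as the defining equation of $\overline{\Gamma}_n \cap \{E=0\}$; this converts the $\mu_0^2 F_n^2$ contribution to one involving $C_0$ rescaled by $\nu_0^2$, and rewrites the linear $\mu_0$ factor as $C_0\nu_0/\mu_0$. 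It is precisely this step that injects the $\nu_0$-dependence visible in the statement, recovering the affine case $\nu_0=1$ as the specialisation $\mu_0 = \pm\sqrt{C_0}$.

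Finally I would homogenise using the rule that produced $g_h$ from $g$, namely $(\widetilde{g})_h(E,\mu,\nu,x) = \nu\cdot\widetilde{g}(E/\nu,\mu,x)$, together with the identities $F_n(E/\nu)=\widehat{F}_n/\nu^n$, $F_{n,x}(E/\nu)=\widehat{F}_{n,x}/\nu^{n-1}$, $F_{n,xx}(E/\nu)=\widehat{F}_{n,xx}/\nu^{n-1}$ read off from \eqref{eq:chapter3 Fn homog}--\eqref{eq:chapter3 Fnx homog}. Collecting powers of $\nu$ term by term yields exactly the two displayed fractions, with denominators $2\mu E\nu^{n-1}$ and $2\mu E\nu^{n-2}\mu_0 f_n^2$.

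The main obstacle is purely bookkeeping: verifying the $F_{n,x}^2$-cancellation above and then tracking the $\nu$-powers through the substitution $E\mapsto E/\nu$, paying attention to the fact that $F_{n,x}$ and $F_{n,xx}$ are of degree $n-1$ in $E$ while $F_n$ is of degree $n$. The only \emph{conceptual} ingredient beyond Proposition \ref{prop-Green1} is the relation $\mu_0^2 = C_0\nu_0^2$, which encodes the position of the fibre $\{E=0\}$ inside the projective spectral curve $\overline{\Gamma}_n$.
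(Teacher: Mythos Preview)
Your proposal is correct and uses essentially the same ingredients as the paper: start from Proposition~\ref{prop-Green1}, specialise $E_0=0$ (so $F_n^0=f_n$), eliminate $\mu^2$ via the spectral-curve identity, observe the cancellation of the $F_{n,x}^2$ terms so that $F_n$ factors out, and invoke $\mu_0^2=C_0\nu_0^2$ from~\eqref{eq:infty E0}. The only difference is the order of operations: the paper first homogenises \eqref{eq:chapter3 green funct TD} simultaneously in both the running point $(E,\mu,\nu)$ and the base point $(E_0,\mu_0,\nu_0)$, and only then sets $E_0=0$ and applies \eqref{eq:curva hatF} and $\mu_0^2=C_0\nu_0^2$; you instead specialise in affine coordinates first and homogenise at the end. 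Both routes arrive at the same expression, and neither requires any idea the other lacks. One small caution: your rule $(\widetilde{g})_h(E,\mu,\nu,x)=\nu\cdot\widetilde{g}(E/\nu,\mu,x)$ is correct for the $(E,\mu,\nu)$ variables but leaves the base-point coordinates implicit; since you inject $\nu_0$ through $\mu_0^2=C_0\nu_0^2$ \emph{before} homogenising in $(E,\mu,\nu)$, you are tacitly homogenising in $(\mu_0,\nu_0)$ as well, which is exactly what the paper does explicitly.
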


\begin{Remark}Formula
\[
\dfrac{\nu^2 \widehat{F}_{n,xx}}{2} +(E- \nu u ) \widehat{F}_n + \dfrac{\nu f_{n,x}^2 \widehat{F}_n}{4 f_n^2} - \dfrac{\nu^2 f_{n,x} \widehat{F}_{n,x}}{2 f_n}
- \dfrac{\nu C_0 \widehat{F}_n}{f_n^2}
\]
is an homogeneous polynomial in $E$ and $\nu$ of degree $n+1$.
\end{Remark}

\begin{proof} First, consider the transformed Green's function $\widetilde{g}$ given by \eqref{eq:chapter3 green funct TD}. Then, the homogenized Green's function is obtained by the homogenization process as
\begin{gather*}
(\widetilde{g})_h (E, \mu, \nu, x) = \left ( \dfrac{ \big(\sigma_+ - \sigma^0\big)\big(\sigma_- - \sigma^0\big)}{(E- E_0)} \dfrac{ i F_n}{2 \mu } \right )_h \\
\hphantom{(\widetilde{g})_h (E, \mu, \nu, x)}{} = \dfrac{i \left (\mu^2 \nu^{2n-2} \big(\widehat{F}_n^0\big)^2 + \frac{(\nu \widehat{F}_n^0 \widehat{F}_{n,x} -\nu_0 \widehat{F}^0_{n,x} \widehat{F}_n)^2}{4}\right )}{2 \mu \nu^{n-1} (E/ \nu -E_0/ \nu_0) \widehat{F}_n \big(\widehat{F}_n^0\big)^2} \\
\hphantom{(\widetilde{g})_h (E, \mu, \nu, x)=}{} - \dfrac{ i \mu_0 \nu_0^{n-1} \big( \mu_0 \nu_0^{n-1} \widehat{F}_n^2 +i \widehat{F}_n \big(\nu \widehat{F}_n^0 \widehat{F}_{n,x}- \nu_0 \widehat{F}^0_{n,x} \widehat{F}_n\big) \big)}{2 \mu \nu^{n-1} (E/ \nu -E_0/ \nu_0) \widehat{F}_n \big(\widehat{F}_n^0\big)^2},
\end{gather*}
where $\widehat{F}_n (E, \nu,x) $ is defined by~\eqref{eq:chapter3 Fn homog}, $\widehat{F}_{n,x} (E, \nu,x)$ is defined in \eqref{eq:chapter3 Fnx homog}, $\widehat{F}^0_n = \widehat{F}_n (E_0, \nu_0,x) $ and $\widehat{F}^0_{n,x} = \widehat{F}_{n,x} (E_0, \nu_0,x) $. In particular, for $E_0 =0$, we get
\begin{gather*}
 (\widetilde{g})_h (E, \mu, \nu, x) = \dfrac{i \left (\mu^2 \nu^{2n-2} f_n^2 + \frac{(\nu f_n \widehat{F}_{n,x} -f_{n,x} \widehat{F}_n)^2}{4}\right )}{2 \mu E \nu^{n-2} \widehat{F}_n f^2_n} \\
 \hphantom{(\widetilde{g})_h (E, \mu, \nu, x) =}{} - \dfrac{i \mu^2_0 \widehat{F}_n }{2 \mu E \nu^{n-2} \nu_0^{2} f^2_n} + \dfrac{ \mu_0 (\nu f_n \widehat{F}_{n,x}- f_{n,x} \widehat{F}_n) }{2 \mu E \nu^{n-2} \nu_0 f^2_n},
\end{gather*}
since $\widehat{F}_n (0, \nu_0,x) = \nu_0^n f_n$ and $\widehat{F}_{n,x} (0, \nu_0,x) = \nu_0^{n-1} f_{n,x}$. Considering~\eqref{eq:curva hatF} we get the following expression
\begin{gather*}
(\widetilde{g})_h (E, \mu, \nu, x) = \dfrac{i \left ( \frac{\nu^2 \widehat{F}_{n,xx}}{2} +(E- \nu u ) \widehat{F}_n + \frac{\nu f_{n,x}^2 \widehat{F}_n}{4 f_n^2} - \frac{\nu^2 f_{n,x} \widehat{F}_{n,x}}{2 f_n} \right ) }{2 \mu E \nu^{n-1} } \\
\hphantom{(\widetilde{g})_h (E, \mu, \nu, x) =}{} - \dfrac{ i \mu_0^2 \widehat{F}_n }{2 \mu E\nu^{n-2} \nu_0^{2} f_n^2 } +\dfrac{ \mu_0 (\nu f_n \widehat{F}_{n,x}- f_{n,x} \widehat{F}_n) }{2 \mu E \nu^{n-2} \nu_0 f_n^2 }.
\end{gather*}
Moreover, by \eqref{eq:infty E0} we have that $\mu^2_0 = C_0 \nu_0^2$, and then
\begin{gather*}
 (\widetilde{g})_h (E, \mu, \nu, x) = \dfrac{i \left ( \frac{\nu^2 \widehat{F}_{n,xx}}{2} +(E- \nu u ) \widehat{F}_n + \frac{\nu f_{n,x}^2 \widehat{F}_n}{4 f_n^2} - \frac{\nu^2 f_{n,x} \widehat{F}_{n,x}}{2 f_n} \right ) }{2 \mu E \nu^{n-1} } \\
\hphantom{(\widetilde{g})_h (E, \mu, \nu, x) =}{} - \dfrac{ i C_0 \widehat{F}_n }{2 \mu E \nu^{n-2} f_n^2 } + \dfrac{ C_0 \nu_0 (\nu f_n \widehat{F}_{n,x}- f_{n,x} \widehat{F}_n) }{2 \mu E
 \nu^{n-2} \mu_0 f_n^2 }.
\end{gather*}
And then the result follows.
\end{proof}

\begin{Proposition}\label{prop-Green3}
Let assume $C_0 = R_{2n+1} (0)=0$. For $E_0=0$ and $\mu_0 \neq 0$, the homogeneized Green's function associated to the transformed Green's function $\widetilde{g}$ for $-\partial_{xx} +\widetilde{u} -E $ explicitly reads
\begin{gather*}
(\widetilde{g})_h (E, \mu, \nu, x) = \dfrac{i\left (\frac{\nu^2 \widehat{F}_{n,xx}}{2} +(E- \nu u ) \widehat{F}_n \right )}{2 \mu E\nu^{n-1}},
\end{gather*}
where $\widehat{F}_n (E, \nu,x) $ is defined by~\eqref{eq:chapter3 Fn homog} and $\widehat{F}_{n,xx} (E, \nu,x)$ is defined in~\eqref{eq:chapter3 Fnx homog}.
\end{Proposition}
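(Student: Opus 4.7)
The plan is to parallel the derivation of Proposition~\ref{prop-Green2}, adjusting for the structural consequences of the hypothesis $C_0=0$. First, I identify which point of $\overline{\Gamma}_n$ is being used to perform the Darboux--Crum transformation. Since $[E_0:\mu_0:\nu_0]\in\overline{\Gamma}_n$ with $E_0=0$, it must satisfy $\mu_0^2\nu_0^{2n-1}=C_0\nu_0^{2n+1}$. Combining this with $C_0=0$ and $\mu_0\neq 0$ forces $\nu_0=0$, so the only admissible possibility is the point at infinity $P_\infty=[0:1:0]$. By the convention fixed in Section~\ref{ssec:ext green}, this is precisely the case in which $(\sigma^0)_h(E_0,\mu_0,0)=0$, so the contribution of the transforming Riccati solution drops out entirely from the homogenized formula.

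With this in hand, the natural starting point is the homogenization of the product formula
\[
\widetilde g(E,\mu,x)=\frac{(\sigma_+ - \sigma^0)(\sigma_- - \sigma^0)}{E-E_0}\cdot\frac{iF_n}{2\mu}.
\]
Setting $E_0=0$ and $(\sigma^0)_h=0$ collapses it to
\[
(\widetilde g)_h(E,\mu,\nu,x)=\frac{(\sigma_+)_h(\sigma_-)_h}{E/\nu}\cdot\frac{i\widehat F_n}{2\mu\nu^{n-1}}.
\]

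The key intermediate computation is to evaluate $(\sigma_+)_h(\sigma_-)_h$ using the explicit formulas~\eqref{eq:sigmah conj}, which yields
\[
(\sigma_+)_h(\sigma_-)_h=\frac{\mu^2\nu^{2n-2}+\nu^2\widehat F_{n,x}^2/4}{\widehat F_n^2}.
\]
Substituting the curve relation~\eqref{eq:curva hatF} into the numerator simplifies this to
\[
(\sigma_+)_h(\sigma_-)_h=\frac{\nu\widehat F_{n,xx}}{2\widehat F_n}-u+\frac{E}{\nu}.
\]

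Plugging this back into the expression for $(\widetilde g)_h$, multiplying through and gathering the three resulting terms over the common denominator $2\mu E\nu^{n-1}$, produces precisely the asserted formula. I do not anticipate any real obstacle beyond routine bookkeeping: the only conceptual point is the identification of the transformation point with $P_\infty$, which is forced by $C_0=0$, and which cleanly eliminates the awkward terms of Proposition~\ref{prop-Green2} proportional to $C_0$ or to $\mu_0/\nu_0$; these simply never appear here because the convention $(\sigma^0)_h=0$ has been built into the setup from the outset.
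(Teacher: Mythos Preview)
Your proposal is correct and follows essentially the same approach as the paper: first deduce $\nu_0=0$ from $C_0=0$ and $\mu_0\neq 0$ via~\eqref{eq:infty E0}, so that $(\sigma^0)_h=0$; then evaluate $(\sigma_+)_h(\sigma_-)_h$ from~\eqref{eq:sigmah conj} and simplify using the curve relation~\eqref{eq:curva hatF}. The paper's proof is more terse but the logic and computations are identical to yours.
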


\begin{Remark}
Formula
\[
\frac{\nu^2 \widehat{F}_{n,xx}}{2} +(E- \nu u ) \widehat{F}_n
\]
is an homogeneous polynomial in $E$ and $\nu$ of degree $n+1$.
\end{Remark}

\begin{proof}When $C_0 =0$ we have that $\nu_0= 0$ by \eqref{eq:infty E0}, since $\mu_0 \neq 0$. So, $\big(\sigma^0\big)_h =0$. Hence, the homogeneized Green's function in this case is
\begin{gather*}
 (\widetilde{g})_h (E, \mu, \nu, x) = \dfrac{(\sigma_+ )_h (\sigma_-)_h}{E/\nu} \dfrac{ i \widehat{F}_n}{2 \mu\nu^{n-1} } \\
 \hphantom{(\widetilde{g})_h (E, \mu, \nu, x)}{} = \dfrac{i \left ( \frac{\mu^2 \nu^{2n-2} + \nu^2 \widehat{F}^2_{n,x}/4}{\widehat{F}_n} \right )}{2 \mu E\nu^{n-2}} = \dfrac{i\left (\frac{\nu^2 \widehat{F}_{n,xx}}{2} +(E- \nu u ) \widehat{F}_n \right )}{2 \mu E\nu^{n-1}},
\end{gather*}
by \eqref{eq:sigmah conj} and \eqref{eq:curva hatF}.
\end{proof}

\subsection{Darboux--Crum transformations for the spectral curve}\label{ssec-DT-C}

In this subsection we present how Darboux--Crum transformations affect the spectral curve $\Gamma_n $. We observe that the action of the transformation ${\rm DT}(\phi_0 )$ strongly depends on the type of point~$P$ in the spectral curve we use to construct $\phi_0$. In fact, if~$P$ is a regular point, the curve associated with the transformed potential is the same; in the other cases the new curve is a~blowing-down or a~blowing-up of~$\Gamma_n$.

\begin{Theorem}[I]\label{thm:curva y DT1} Let $(E_0, \mu_0 ) \in {\Gamma}_n $ and $u$ be a solution of s-KdV$_n$ equation. Let $\phi_0$ be a~solution of Schr\"odinger equation for energy $E_0$ and potential $u$, i.e., $\phi_{0,xx} = (u- E_0) \phi_0$. Let $\widetilde{u} = u- 2 (\log \phi_0)_{xx}$ be the Darboux--Crum transformation of $u$. Then, $\widetilde{u}$ is a solution of s-KdV$_{\widetilde{n}}$ equation for
\[
\widetilde{n} = \begin{cases}
n & \textrm{if } (E_0, \mu_0) \textrm{ is a regular point of } \Gamma_n,\\
n-1 & \textrm{if } (E_0, \mu_0) \textrm{ is an affine singular point of } \Gamma_n.
\end{cases}
\]
Furthermore, the spectral curve associated to $\widetilde{u}$ is $\Gamma_{\widetilde{n}} : \widetilde{\mu}^2 - \widetilde{R}_{2\widetilde{n} +1} =0$, with
\[
\widetilde{R}_{2\widetilde{n} +1} =
 \begin{cases}
R_{2n+1} & \textrm{if } (E_0, \mu_0) \textrm{ is a regular point of } \Gamma_n,\\
(E - E_0)^{-2}R_{2n+1} & \textrm{if } (E_0, \mu_0) \textrm{ is an affine singular point of } \Gamma_n. \end{cases} \]
\end{Theorem}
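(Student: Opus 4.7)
The plan is to combine Proposition~\ref{lemma-GH} with Corollary~\ref{cor:curva y pot}. The former guarantees that the transformed Green's function has shape $\widetilde{g} = i\widetilde{F}_{\widetilde{n}}(E,x)/(2\widetilde{\mu})$ on a curve $\widetilde{\mu}^2 - \widetilde{R}_{2\widetilde{n}+1}(E) = 0$, and the latter translates the degree of the spectral polynomial into the s-KdV level of the potential. It will therefore suffice to identify $\widetilde{R}_{2\widetilde{n}+1}$ explicitly and to confirm that its degree equals $2\widetilde{n}+1$ in each of the two cases.

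First I would derive a compact factored form for $\widetilde{g}$. Expanding $(\sigma_+ - \sigma^0)(\sigma_- - \sigma^0)$ with the explicit $\sigma_\pm = (\pm i\mu + F_{n,x}/2)/F_n$ from~\eqref{eq:sigma conj} and substituting the spectral curve identity $R_{2n+1}(E) = F_n H_n - F_{n,x}^2/4$ implicit in~\eqref{eq:spectral curve}, one reaches
\[
\widetilde{g} = \frac{i\,P(E,x)}{2\mu\,(E-E_0)}, \qquad F_n \cdot P = Y^2 + R_{2n+1}(E),
\]
where $Y := F_{n,x}/2 - \sigma^0 F_n$ has degree $n$ in $E$ and $P := H_n - \sigma^0 F_{n,x} + (\sigma^0)^2 F_n$ is a polynomial in $E$ of degree $n+1$.

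Next I would compute the order of vanishing of $P$ at $E_0$. From $\sigma^0 F_n^0 = i\mu_0 + F_{n,x}^0/2$ one reads off $Y(E_0,x) = -i\mu_0$, and since $R_{2n+1}(E_0) = \mu_0^2$ the key identity forces $F_n^0 \cdot P(E_0,x) = (-i\mu_0)^2 + \mu_0^2 = 0$, so $(E-E_0) \mid P$ unconditionally. Differentiating once in $E$ and evaluating at $E_0$ gives
\[
F_n^0 \cdot P_E(E_0, x) = -2i\mu_0\, Y_E(E_0,x) + R'_{2n+1}(E_0).
\]

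Finally I would separate the two cases. In the regular case, I take $\widetilde{\mu} = \mu$ and $\widetilde{R}_{2n+1} = R_{2n+1}$: then $\widetilde{F}_n := P/(E-E_0)$ has degree $n$, the spectral curve is unchanged, and $\widetilde{n} = n$. For an affine singular point one has $\mu_0 = 0$ and $R'_{2n+1}(E_0) = 0$, so the displayed formula forces $P_E(E_0,x) = 0$ and hence $(E-E_0)^2 \mid P$; simultaneously $(E-E_0)^2 \mid R_{2n+1}$, so setting $\widetilde{\mu} = \mu/(E-E_0)$ produces $\widetilde{R}_{2n-1}(E) = R_{2n+1}(E)/(E-E_0)^2$ of degree $2n-1 = 2(n-1)+1$ and $\widetilde{F}_{n-1} := P/(E-E_0)^2$ of degree $n-1$; an application of Corollary~\ref{cor:curva y pot} then yields that $\widetilde{u}$ is a solution of s-KdV$_{n-1}$. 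The main technical obstacle will be the derivation of the compact identity $F_n \cdot P = Y^2 + R_{2n+1}$, which is the linchpin distilling the unwieldy expression of Proposition~\ref{prop-Green1} into a form where both degree counts and orders of vanishing at $E_0$ become transparent.
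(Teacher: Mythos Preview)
Your argument is correct and is a genuinely different route from the paper's own proof. The paper proceeds by splitting into three sub-cases (regular with $\mu_0\neq 0$, regular with $\mu_0=0$, singular) and invokes four auxiliary appendix results (Corollaries~\ref{cor-G1}--\ref{cor-G4}) to manipulate the numerator of~\eqref{eq:chapter3 green funct TD} separately in each case. You instead compress everything into the single algebraic identity $F_n\cdot P=Y^2+R_{2n+1}$, which is exactly the relation $\mu^2=F_nH_n-F_{n,x}^2/4$ rewritten after the substitutions $Y=F_{n,x}/2-\sigma^0 F_n$ and $P=H_n-\sigma^0 F_{n,x}+(\sigma^0)^2F_n$. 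From this one identity both the divisibility $(E-E_0)\mid P$ (via $Y(E_0)=-i\mu_0$) and the second divisibility $(E-E_0)^2\mid P$ at a singular point (via one $E$-derivative) drop out immediately, so you avoid the sub-case $\mu_0=0$ versus $\mu_0\neq0$ that the paper has to treat separately. What the paper's approach buys is that the intermediate polynomial $P_n$ of Corollary~\ref{cor-G1} is the same object appearing in Proposition~\ref{cor: darboux fi}, so their computation ties back visibly to the Darboux machinery of Section~\ref{sect-novikovs DT}; your approach is leaner but makes that connection less explicit.

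One small point worth making explicit in your write-up: the conclusion $P(E_0,x)=0$ from $F_n^0\cdot P(E_0,x)=0$ uses that $F_n^0=F_n(E_0,x)$ is not the zero function of $x$ (its leading term in $x$-independent coefficients is $E_0^n$ plus lower terms, and in any case it is a nontrivial differential polynomial in $u$); the same remark applies to reading off $P_E(E_0,x)=0$ in the singular case.
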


The idea of the proof is to compute the Green's function \eqref{eq:chapter3 green funct TD} associated to $\widetilde{u}$ and interpret the result by means of Lemma~\ref{lemma-GH}.

\begin{proof} First, we suppose that $(E_0, \mu_0)$ is a regular point and $\mu_0 \neq 0$. In this case, we compute
\[ \big(\sigma_+ - \sigma^0\big)\big(\sigma_- - \sigma^0\big) = \dfrac{\mu^2 \big(F_n^0\big)^2 -\mu_0^2 F_n^2 -i\mu_0 F_n \big(F_n^0 F_{n,x} - F^0_{n,x} F_n\big) + \frac{(F_n^0 F_{n,x} - F^0_{n,x} F_n)^2}{4}}{F_n^2 \big(F_n^0\big)^2}.
\]
We use Corollaries \ref{cor-G1} and \ref{cor-G2} to rewrite the expressions $F_n^0 F_{n,x} - F^0_{n,x} F_n $ and $\mu^2 \big(F_n^0\big)^2 -\mu_0^2 F_n^2$.
This yields to the equality
\[ \big(\sigma_+ - \sigma^0\big)\big(\sigma_- - \sigma^0\big) = (E - E_0) \dfrac{\frac{P_{n,x}}{2} + F_n F^0_{n} - P_n \sigma^0}{F_n F_n^0}.\]
Finally, we replace this expression in the Green's function \eqref{eq:chapter3 green funct TD}:
\[ \widetilde{g}(E, \mu, x) = \dfrac{iF_n \big(\sigma_+ - \sigma^0\big)\big(\sigma_- - \sigma^0\big)}{2 \mu (E - E_0)} = \dfrac{i \left (F_n + \frac{P_{n,x}}{2 F^0_n} - \frac{P_n \sigma^0}{F^0_n} \right )}{2 \mu} = \dfrac{i \widetilde{F}_{\widetilde{n}}}{2 \mu}. \]
Since $\widetilde{F}_{\widetilde{n}} = F_n + \frac{P_{n,x}}{2 F^0_n} - \frac{P_n \sigma^0}{F^0_n}$ is a polynomial in $E$ of degree $n$, by means of Lemma \ref{lemma-GH}, we conclude that $ \widetilde{n} =n$ and $\widetilde{\mu} = \mu$. Thus, $\widetilde{R}_{2\widetilde{n}+1} = R_{2n+1}$.

Now, we suppose that $(E_0, \mu_0)$ is a regular point and $\mu_0 = 0$. In this case, we have that $R^0_{2n+1} = R_{2n+1} (E_0) =0$ and $R^0_{2n+1, E} = \partial_E ( R_{2n+1} )(E_0) \neq 0$, thus,
\[
\mu^2 = R_{2n+1} (E) = (E-E_0) M_{2n},
\]
where $M_{2n} (E)$ is a polynomial in $E$ of degree $2n$ such that $M_{2n} (E_0) \neq 0$. Hence for $\mu_0=0$, $ \mu^2 = (E-E_0) M_{2n} $ and Corollary \ref{cor-G1}, the equality
 \eqref{eq:chapter3 green funct TD} becomes
\[
\widetilde{g}(E, \mu, x) = \dfrac{i \left ((E-E_0) M_{2n} \big(F_n^0\big)^2 + \frac{(E- E_0)^2 P_n^2}{4} \right ) }{2 \mu (E-E_0) F_n \big(F_n^0\big)^2} = \dfrac{i \left (\frac{M_{2n} }{F_n } + \frac{(E- E_0) P_n^2}{4F_n \big(F_n^0\big)^2} \right ) }{2 \mu }.
\]
Now Corollary \ref{cor-G3} guarantees that
\[
\frac{M_{2n} }{F_n } + \frac{(E- E_0) P_n^2}{4F_n \big(F_n^0\big)^2}
\]
is a polynomial in $E$ of degree $n$. By Lemma \ref{lemma-GH}, we obtain that $\widetilde{n}= n$, $\widetilde{\mu} = \mu$ and $\widetilde{R}_{2\widetilde{n}+1} = R_{2n+1}$. Therefore, for regular points $\widetilde{R}_{2\widetilde{n}+1}$ is a polynomial of degree $2n+1$ in $E$. By Corollary~\ref{cor:curva y pot}, we conclude that $\widetilde{u}$ is solution of a s-KdV$_{n}$ equation. Thus, a Darboux--Crum transformation with a regular point preserves the spectral curve and the level of the s-KdV hierarchy.

Next, we suppose that $(E_0, \mu_0)$ is a singular point of $\Gamma_n$, i.e., $\mu_0 =0$, $R^0_{2n+1} = R_{2n+1} (E_0) $ $=0$ and $R^0_{2n+1, E} = \partial_E ( R_{2n+1} )(E_0) = 0$, thus,
\[ \mu^2 = R_{2n+1} (E) = (E-E_0)^2 Z_{2n-1}, \]
where $Z_{2n-1} (E)$ is a polynomial in $E$ of degree $2n-1$. Hence for $\mu_0=0$, $ \mu^2 = (E-E_0)^2 Z_{2n-1} $ and \ref{cor-G1}, the equality
 \eqref{eq:chapter3 green funct TD} becomes
\[ \widetilde{g}(E, \mu, x) = \dfrac{i \left ((E-E_0)^2 Z_{2n-1} \big(F_n^0\big)^2 + \frac{(E- E_0)^2 P_n^2}{4} \right ) }{2 \mu (E-E_0) F_n \big(F_n^0\big)^2} = \dfrac{i \left (\frac{Z_{2n-1} }{F_n } + \frac{P_n^2}{4F_n \big(F_n^0\big)^2} \right ) }{2 (E-E_0)^{-1} \mu }. \]
Now Corollary \ref{cor-G4} guarantees that
\[
\frac{Z_{2n-1} }{F_n } + \frac{P_n^2}{4F_n \big(F_n^0\big)^2}
\]
is a polynomial in $E$ of degree $n$. By Lemma \ref{lemma-GH}, we obtain that $\widetilde{n} = n-1$ and $\widetilde{\mu} = (E-E_0)^{-1} \mu$. Therefore, $\widetilde{R}_{2\widetilde{n}+1} = (E-E_0)^{-2} R_{2n+1}$ is a polynomial of degree $2n-1$ in $E$. By Corollary \ref{cor:curva y pot}, we conclude that $\widetilde{u}$ is solution of a s-KdV$_{n-1}$ equation. So, a Darboux--Crum transformation with a singular point induces a blow-up in the spectral curve in this singular point and reduces the level of the s-KdV hierarchy in one.
\end{proof}

Next, we will proceed to establish the situation at the point of infinity $P_\infty = [0: 1 :0] $ of the spectral curve. For that,
we will need to work with the Zariski closure in $ \mathbb{P}^2 $ of the spectral curve to understand its behaviour under Darboux transformations for the energy level $ E_0 =0 $. In addition, we will use the blowing-up map in $ \mathbb{P}^2 $ to control the KdV level of the transformed potential $\widetilde{u}$.

Let $\pi\colon \widetilde{\mathbb{P}^2} \to \mathbb{P}^2$ be the blowing-up of $\mathbb{P}^2$ with center $[0:0:1]$. Hence, if $[E:\mu :\nu]$ are homegeneous coordinates in $\mathbb{P}^2$, then the new ones are denoted by $[\widetilde{E}:\widetilde{\mu} :\widetilde{\nu}]$, and $\pi$ is given by
\begin{gather*}
E= \widetilde{E}, \qquad \mu E = \widetilde{\mu}, \qquad \nu = \widetilde{\nu}.
\end{gather*}

\begin{Theorem}[II]\label{thm:curva y DT2} Let $P_\infty = [0: 1 :0] $ be the infinity point of $\overline{\Gamma}_n$, and $u$ a solution of the s-KdV$_n$ equation. Let $\phi_0$ be a solution of Schr\"odinger equation for $P_\infty$ $($in particular $E_0 =0)$ and potential $u$, i.e., $\phi_{0,xx} -u \phi_0 = 0$. Let $\widetilde{u} = u- 2 (\log \phi_0)_{xx}$ be the Darboux--Crum transformation of $u$. Then, $\widetilde{u}$ is solution of the s-KdV$_{n+1}$ equation.
Futhermore, the spectral curve associated to $\widetilde{u}$ is $\Gamma_{n+1} : \widetilde{\mu}^2 - \widetilde{R}_{2n+3}(E) =0$, with $\widetilde{R}_{2n+3} =E^2 R_{2n+1} (E)$.
\end{Theorem}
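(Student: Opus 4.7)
The plan is to adapt the strategy of Theorem~\ref{thm:curva y DT1} to the point at infinity, using the homogenized Green's function machinery of Subsection~\ref{ssec:ext green} together with the blowing-up $\pi$ introduced just before the statement. Since $P_\infty=[0:1:0]$ has $\nu_0=0$, the convention $(\sigma^0)_h(E_0,\mu_0,0):=0$ laid down before Proposition~\ref{prop-Green2} is in force, so the computation of the transformed homogenized Green's function goes through exactly as in the proof of Proposition~\ref{prop-Green3}: starting from the factorisation
\begin{gather*}
(\widetilde g)_h=\frac{(\sigma_+)_h(\sigma_-)_h}{E/\nu}\cdot\frac{i\widehat F_n}{2\mu\nu^{n-1}},
\end{gather*}
together with \eqref{eq:sigmah conj} and the homogenized curve equation \eqref{eq:curva hatF}, one obtains
\begin{gather*}
(\widetilde g)_h(E,\mu,\nu,x)=\frac{i\bigl(\tfrac{\nu^2\widehat F_{n,xx}}{2}+(E-\nu u)\widehat F_n\bigr)}{2\mu E\,\nu^{n-1}},
\end{gather*}
a formula valid at $P_\infty$ independently of whether $C_0$ vanishes.

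Next, I would apply the blowing-up $\pi$ given by $E=\widetilde E$, $\mu E=\widetilde\mu$, $\nu=\widetilde\nu$. The awkward factor $E$ in the denominator is absorbed into $\mu E=\widetilde\mu$, and dehomogenizing at $\nu=1$ produces
\begin{gather*}
\widetilde g=\frac{i\widetilde F_{n+1}(E,x)}{2\widetilde\mu},\qquad \widetilde F_{n+1}(E,x)=\tfrac{1}{2}F_{n,xx}+(E-u)F_n,
\end{gather*}
which by \eqref{eq:H_r} is exactly $H_n(E,x)$, a polynomial in $E$ of degree precisely $n+1$ with leading coefficient~$1$. Lemma~\ref{lemma-GH}, applied now to the Schr\"odinger operator with transformed potential $\widetilde u$, then forces $\widetilde n=n+1$ and places $\widetilde u$ on a spectral curve of the form $\widetilde\mu^2=\widetilde R_{2n+3}(E)$.

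The identification $\widetilde R_{2n+3}(E)=E^2R_{2n+1}(E)$ drops out of the blow-up itself: from $\widetilde\mu=\mu E$ and $\mu^2=R_{2n+1}(E)$ we get $\widetilde\mu^2=E^2R_{2n+1}(E)$. Because $R_{2n+1}$ has degree exactly $2n+1$, this new polynomial has degree exactly $2n+3=2(n+1)+1$, and Corollary~\ref{cor:curva y pot} applied at level $n+1$ concludes that $\widetilde u$ solves the s-KdV$_{n+1}$ equation.

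The step that requires the most care is the very first one: justifying that the convention $(\sigma^0)_h=0$ at $P_\infty$ is the correct extension of $\sigma^0$ for the Darboux--Crum datum $\phi_0$ associated, in the Baker--Akhiezer sense, to the point at infinity of $\overline\Gamma_n$, and that the resulting $(\widetilde g)_h$ is genuinely the homogenized Green's function of $-\partial_{xx}+\widetilde u-E$ so that Lemma~\ref{lemma-GH} may legitimately be applied after the blow-up. Once this conceptual point is settled, the rest reduces to the algebraic manipulation sketched above.
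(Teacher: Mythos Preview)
Your proposal is correct and follows essentially the same route as the paper: compute the homogenized transformed Green's function at $P_\infty$ via the convention $(\sigma^0)_h=0$, then factor through the blow-up $\pi$ to obtain a Green's function of the form $i\widetilde F_{n+1}/(2\widetilde\mu)$ on the strict transform, and conclude with Lemma~\ref{lemma-GH} and Corollary~\ref{cor:curva y pot}. Your version is in fact more explicit than the paper's own proof---you identify $\widetilde F_{n+1}=H_n$ directly from \eqref{eq:H_r} and spell out how $\widetilde\mu=\mu E$ yields $\widetilde R_{2n+3}=E^2R_{2n+1}$, whereas the paper simply asserts $(\widetilde g)_h=G_h\circ\pi$ and that $G_h$ is a Green's function for $\Gamma_{n+1}$---and the conceptual caveat you flag about justifying the convention $(\sigma^0)_h=0$ as the correct limit for the Baker--Akhiezer datum at $P_\infty$ is a point the paper also leaves implicit.
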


\begin{proof}
First, consider the homogeneized Green's function associated to the transformed Green's function $\widetilde{g}$. Then, by Propositions \ref{prop-Green2} and \ref{prop-Green3}, $(\widetilde{g})_h $ is a well defined rational function on $\overline{\Gamma}_n$. But also we have
\[ (\widetilde{g})_h = G_h \circ \pi \quad \text{on the spectral curve}. \]
 Moreover $G_h$ is a Green function for the curve defined by $\widetilde{\mu}^2 - \widetilde{R}_{2n+3}(\widetilde{E}) =0$, where $\widetilde{R}_{2n+3} (\widetilde{E}) = E^{2} R_{2n+1} (E)$; that is, for $\Gamma_{n+1} $, the strict transform of $\Gamma_{n}$. Observe that $\widetilde{R}_{2n+3} = E^{2} R_{2n+1}$ is a~polynomial of degree $2n+3$ in $E$. Then, by Corollary \ref{cor:curva y pot}, we conclude that $\widetilde{u}$ is solution of a~s-KdV$_{n+1}$ equation.
\end{proof}

Finally we can rewrite Theorems \ref{thm:curva y DT1} and \ref{thm:curva y DT2} to establish how the spectral curve $\overline{\Gamma}_n$ behaves under Darboux--Crum transformations.

\begin{Theorem}\label{thm:curva y DT}
Let $P=[E_0: \mu_0 :\nu_0]$ be a point in $ \overline{\Gamma}_n $, and $u$ a solution of s-KdV$_n$ equation. Let $\phi_0$ be a solution of Schr\"odinger equation for $E_0$ and potential $u$, say $\phi_{0,xx} = (u- E_0) \phi_0$. Consider $\widetilde{u} = u- 2 (\log \phi_0)_{xx}$ the Darboux--Crum transformation of $u$. Then, $\widetilde{u}$ is solution of s-KdV$_{\widetilde{n}}$ equation for
\[ \widetilde{n} = \begin{cases}
n+1 & \textrm{if } P= [0: 1 :0], \\
n & \textrm{if } P \textrm{ is a regular point of } \Gamma_n,\\
n-1 & \textrm{if } P\textrm{ is an affine singular point of } \Gamma_n.
\end{cases} \]
Futhermore, the spectral curve associated to $\widetilde{u}$ is $\Gamma_{\widetilde{n}}\colon \widetilde{\mu}^2 - \widetilde{R}_{2\widetilde{n} +1} =0$, with
\[ \widetilde{R}_{2\widetilde{n} +1} = \begin{cases}
E^2 R_{2n+1} & \textrm{if } P= [0: 1 :0], \\
R_{2n+1} & \textrm{if } P \textrm{ is a regular point of } \Gamma_n,\\
(E - E_0)^{-2}R_{2n+1} & \textrm{if } P \textrm{ is an affine singular point of } \Gamma_n. \end{cases} \]
\end{Theorem}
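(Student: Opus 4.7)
The plan is to observe that Theorem \ref{thm:curva y DT} is a consolidation of the two previously proved theorems \ref{thm:curva y DT1} and \ref{thm:curva y DT2}, plus a short verification that the three cases stated exhaust all points $P\in\overline{\Gamma}_n$. So the proof reduces to case analysis based on the homogeneous coordinates $[E_0:\mu_0:\nu_0]$ of $P$.

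First I would separate the infinity locus from the affine chart. The line at infinity is $\{\nu=0\}$, and intersecting with $\overline{\Gamma}_n$, defined by $\mu^2\nu^{2n-1}-\widehat{R}_{2n+1}(E,\nu)=0$, forces $C_{2n+1}E^{2n+1}=0$ and hence $E=0$; since we cannot also have $\mu=0$, the unique point at infinity on $\overline{\Gamma}_n$ is $P_\infty=[0:1:0]$. Thus any $P\in\overline{\Gamma}_n$ is either $P_\infty$ or an affine point, in which case we normalize $\nu_0=1$ and write $P=(E_0,\mu_0)\in\Gamma_n$.

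In the affine case, I apply Theorem \ref{thm:curva y DT1} directly: $\widetilde u$ is a solution of s-KdV$_n$ with unchanged spectral curve when $(E_0,\mu_0)$ is a regular affine point, and $\widetilde u$ solves s-KdV$_{n-1}$ with spectral polynomial $(E-E_0)^{-2}R_{2n+1}(E)$ when $(E_0,\mu_0)$ is an affine singular point. In the remaining case $P=P_\infty$, which forces $E_0=0$ and therefore $\phi_0$ to satisfy $\phi_{0,xx}=u\phi_0$, I apply Theorem \ref{thm:curva y DT2}: $\widetilde u$ solves s-KdV$_{n+1}$ and the new spectral polynomial is $E^2R_{2n+1}(E)$. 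Assembling these three cases gives exactly the statement of Theorem \ref{thm:curva y DT}.

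The only point requiring a little care is the bookkeeping at $P_\infty$: one must justify why choosing the infinity point rather than an affine point with $E_0=0$ gives a \emph{different} spectral curve for $\widetilde u$, since both choices use a solution of $\phi_{0,xx}=u\phi_0$. This is precisely the content of Remark \ref{rk:spec process} and of the homogenization machinery developed in Section \ref{ssec:ext green}: the homogenized Green's function $(\widetilde g)_h$ descends through the blow-up map $\pi$ and picks up the factor $E^{2}$ in $\widetilde R_{2n+3}=E^2R_{2n+1}$ when $\nu_0=0$, whereas at an affine point with $E_0=0$ one falls into the affine case of Theorem \ref{thm:curva y DT1}. Hence the distinction between the three cases is genuine and the classification is exhaustive, completing the proof.
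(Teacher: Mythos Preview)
Your proposal is correct and matches the paper's approach exactly: the paper itself introduces Theorem~\ref{thm:curva y DT} with the sentence ``Finally we can rewrite Theorems~\ref{thm:curva y DT1} and~\ref{thm:curva y DT2}'' and gives no separate proof, so the result is intended precisely as the consolidation you describe. Your additional verification that $P_\infty=[0{:}1{:}0]$ is the unique point of $\overline{\Gamma}_n$ on the line $\nu=0$, and hence that the three cases are exhaustive, is a small but welcome completeness check that the paper leaves implicit.
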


\begin{Example} Next we apply the previous theorem to a rational s-KdV$_2$ potential.

Take the s-KdV$_2$ potential $u= \frac{6}{x^2}$ in the Schr\"odinger equation \eqref{eq:schr0 spec}. The spectral curve associated to this potential is $\Gamma_2\colon \mu^2 -E^5 =0$. When $E=0$, we have the fundamental solutions $\phi_1 = x^{-2}$ and $\phi_2 = x^3$. We consider the Darboux transformations of $u$ with these solutions
\begin{gather*}
{\rm DT}(\phi_1) u = u - 2(\log \phi_1)_{xx} = \frac{2}{x^2} = \widetilde{u}_1 \qquad \text{and} \qquad {\rm DT}(\phi_2) u = u - 2(\log \phi_2)_{xx} = \frac{12}{x^2} = \widetilde{u}_3.
\end{gather*}
We have that potential $\widetilde{u}_1$ is a solution of s-KdV$_1$ equation. It is well known that the spectral curve associated to this potential is $\Gamma_1\colon \mu^2 -E^3 =0$, the blowing-up of~$\Gamma_2$ at $(0,0)$. Furthermore, potential $\widetilde{u}_3$ is a solution of s-KdV$_3$ equation, and its associated spectral curve $\Gamma_3$ is the blowing-down of $\Gamma_2$, that is $\Gamma_3\colon \mu^2 -E^7 =0$.

Now, we take a regular value of $E$ in $\Gamma_2$, for instance, $E=-1$. Then, a solution of the Schr\"odinger equation \eqref{eq:schr0 spec} for this value of~$E$ is $\phi^+ = \frac{e^x (x^2 -3x +3)}{x^2}$. The Darboux transformation of $u$ with this solution reads
\[ {\rm DT}(\phi^+) u = u - 2(\log \phi^+)_{xx} = \frac{6(x-1)\big(x^3 -3x^2 +3x -3\big)}{x^2 \big(x^2 -3x +3\big)^2} = \widetilde{u}. \]
Then this transformed potential is a solution of s-KdV$_2$ equation and the spectral curve associated to this potential is still $\Gamma_2\colon \mu^2 -E^5 =0$.

We sum up this example in the following diagram:
\begin{gather*} \xymatrix{ & **[r] (\widetilde{u}_2, \Gamma_3) \ \text{$\begin{array}{l} \text{therefore, $\phi_2$ is} \\ \text{a solution for $P_{\infty}$,}\end{array}$} \\ (u, \Gamma_2) \ar[ur]^{{\rm DT}(\phi_2)} \ar[r]^{\qquad \ {\rm DT}(\phi^+)} \ar[dr]_{{\rm DT}(\phi_1)} & **[r] (\widetilde{u}, \Gamma_2) \ \text{$\begin{array}{l} \text{therefore, $\phi^+$ is a solution} \\ \text{for a regular point,}\end{array}$} \\ & **[r] (\widetilde{u}_1, \Gamma_1) \ \text{$\begin{array}{l} \text{therefore, $\phi_1$ is a solution} \\ \text{for the affine singular point $(0,0)$.}\end{array}$} }
\end{gather*}
\end{Example}

\begin{Remark} The importance of Theorem \ref{thm:curva y DT} lies in the fact that we need to introduce the homogenized Green's function to state it. This new function is the essential tool that allows us to include in our study the point of infinity $P_\infty$ of the affine curve $\Gamma_n$. As far as we know, this is a new approach to the understanding of the spectral curve under Darboux transformations.

Similar problems to our result \ref{thm:curva y DT} were treated by several authors, see \cite[Theorem~5]{EhKn} and~\cite[Theorem~G.2]{GH}. In~\cite{EhKn}, F.~Ehlers and H.~Kn\"orrer studied the action of the Darboux transformations on the spectral curves by means of the eigenfunctions of the centralizer of the Schr\"odinger operator.
\end{Remark}

\subsection[Spectral curves and KdV hierarchy in $1+1$ dimensions]{Spectral curves and KdV hierarchy in $\boldsymbol{1+1}$ dimensions}\label{sec espectral}

In this section we will show how the points of the spectral curves in the stationary setting are related with the solutions of the Schr\"odinger operator with rational potential in the $1+1$ KdV hierarchy.

Recall that the rational soliton $u_{r,n}$ restricted to $t_r = 0$ is the well known $n$-soliton $u_{n}^{(0)} (x) = n(n+1) x^{-2} $. Let $\Gamma_n$ be its affine spectral curve. This complex plane curve has a defining equation
\[
 p_n (E,\mu ) = \mu^2 -E^{2n+1}.
\]
 Our goal was to obtain the algebraic structure of a fundamental matrix of the Schr\"odinger ope\-ra\-tor $-\partial_x^2 +u_{r,n}-E$ by means of the system~\eqref{eq:systemkdv1E rn}. For this purpose we need to use a~parametric representation of the spectral curve $\Gamma_n$. Observe that $\Gamma_n$ is a rational singular plane curve, nevertheless we can have a global parametrization in the sense given in~\cite{BN}. In fact, we have considered the parametrization
\[
 \chi (\lambda ) =\big( {-}\lambda^2, \, i\lambda^{2n+1} \big)
\]
and then $E=-\lambda^2$ as was taken since Section~\ref{sec-fundamental}. Observe that the unique affine singular point of the spectral curve is reached for $\lambda=0$. Hence, whenever $\lambda\not=0$ we obtain regular points on $\Gamma_n$ and we can get the desired description of the fundamental matrix $\mathcal{B}^{(r)}_{n,\lambda}$ as is given in Theorem~\ref{soluciones2 r}. On the other hand, at the singular point $ \chi (0 )=(0,0)$ the fundamental matrix for the system~\eqref{eq:systemkdv1E rn} must be obtained in a specific way, see Theorem~\ref{soluciones1}.

The fundamental solutions $\phi_{1,r,n} (x,t_r)$, $\phi_{2,r,n}(x,t_r)$ obtained in Theorem \ref{soluciones1} were used as source to perform Darboux transformations. In particular, for $t_r=0 $, we get the functions
\[
\phi^{(0)}_{1,n}(x)= \phi_{1,r,n} (x,t_r=0), \qquad \phi^{(0)}_{2,n}(x)=\phi_{2,r,n} (x,t_r=0)
\]
and the corresponding potentials are transformed as is indicated in the following diagram:
\begin{gather} \label{eq:diag DT y curva}
\begin{split}&
\xymatrix @R=1.5em{ u^{(0)}_{n-1} \ar[d] & & \ar[ll]_{{\rm DT}(\phi^{(0)}_{1,n})} u^{(0)}_{n} \ar[rr]^{{\rm DT}(\phi^{(0)}_{2,n})} \ar[d] & & u^{(0)}_{n+1} \ar[d] \\ \Gamma_{n-1} && \Gamma_{n} && \Gamma_{n+1}. }
\end{split}
\end{gather}
This situation is a particular case of a more general one that has been obtained in Theorem~\ref{thm:curva y DT}. The diagram \eqref{eq:diag DT y curva} has its time dependent counterpart (see~\eqref{eq:crum unr} and~\eqref{eq:crum un-1r})
\begin{gather*} 
\xymatrix @R=1.0em{ u_{r,n-1} & & \ar[ll]_{{\rm DT}(\phi_{1,r,n})} u_{r,n} \ar[rr]^{{\rm DT}(\phi_{2,r,n})} & & u_{r,n+1}. }
\end{gather*}
The fundamental matrix $\mathcal{B}^{(r)}_{n,0}$ associated to the functions $\phi_{1,r,n} $ and $\phi_{2,r,n} $ can not be changed by the same Darboux transformations used for the potentials since there is a loss of independent solutions; in fact we have the following diagram:
\begin{gather*} 
\xymatrix @R=0.8em{ & & \phi_{1,r,n} \ar[rr]^{{\rm DT}(\phi_{2,r,n})} & & \phi_{1,r,n+1}, \\ \phi_{2,r,n-1} & & \ar[ll]_{ \ {\rm DT}(\phi_{1,r,n})} \phi_{2,r,n}. & & }
\end{gather*}

On the other hand, whenever the point on the spectral curve is a regular point, that is $\lambda\not=0$, we have obtained the behaviour of the fundamental matrices $\mathcal{B}^{(r)}_{j,\lambda}$, for $j= n-1, n, n+1$, as it is encoded in the following diagram:
 \begin{gather*} 
\xymatrix @R=0.8em{ \phi^+_{r,n-1} & & \ar[ll]_{ \ {\rm DT}(\phi_{1,r,n})} \phi^+_{r,n} \ar[rr]^{{\rm DT}(\phi_{2,r,n}) \ } & & \phi^+_{r,n+1}, \\ \phi^-_{r,n-1} & & \ar[ll]_{ \ {\rm DT}(\phi_{1,r,n})} \phi^-_{r,n} \ar[rr]^{{\rm DT}(\phi_{2,r,n}) \ } & & \phi^-_{r,n+1}.}
\end{gather*}

All these situations are reflected in the time dependent frame coming from the stationary one, as we have seen. In particular, in the lack of specialization process from $\mathcal{B}^{(r)}_{n, \lambda}$ to $\mathcal{B}^{(r)}_{n, 0}$. According to Theorem \ref{thm: det E}, we have that $\det \mathcal{B}^{(r)}_{n, \lambda} = -2 \lambda^{2n+1}$, whereas we have $\det \mathcal{B}^{(r)}_{n, 0} = 2n+1 $.

\begin{Remark}We notice then that, despite functions $\phi^{(0)}_{1,n}$
and $\phi^{(0)}_{2,n}$ are fundamental solutions of the Schr\"odinger equation for $E=0$, they are not solutions for the same point of the spectral curve. Therefore, for each singular point of this spectral curve we can only compute one fundamental solution by means of Darboux transformations.

On the other hand, the stationary functions corresponding to $\phi^+_{r,n}$ and $\phi^-_{r,n}$, namely,
\begin{gather*} (\phi^+_{n})^{(0)} (x, \lambda) = \phi^+_{r,n} (x,\lambda, t_r=0) \qquad \text{and} \qquad (\phi^-_{n})^{(0)} (x, \lambda) = \phi^-_{r,n} (x,\lambda, t_r=0),
\end{gather*} are fundamental solutions at regular points of the spectral curve, since they are solutions of the Schr\"odinger equation for $E\neq 0$. In fact, one of them, say $(\phi^+_{n})^{(0)} (x, \lambda)$, is a solution for the point $(E,\mu)$, and the other one, say $(\phi^-_{n})^{(0)} (x, \lambda)$, is a solution for the conjugated point $(E,-\mu)$ of the spectral curve. Then, for each value of $E=-\lambda^2 $, the fundamental matrix $\mathcal{B}^{(r)}_{n, \lambda}$ shows the solutions at conjugated points on the corresponding spectral curve.
\end{Remark}

Next we have computed an explicit example to illustrate the relationship between spectral curves and KdV hierarchy in $1+1$ dimensions for rational solitons.

 \begin{Example} Consider the case $r=1$
 and $n=2$. Let $u_{1,2} (x,t_1 )=\frac{6x(x^3 -6t_1 )}{(x^3 +3t_1 )^2}$ be the KdV$_1$ rational soliton obtained by taking $(\tau_2,\tau_3 )=(3t_1,0)$. Then, the corresponding stationary potential is given by $u^{(0)}_{2} (x)=u^{(0)}_{1,2} (x) =u_{1,2} (x,t_r = 0)= \frac{6}{x^2}$ (see Lemma~\ref{lema:pot estacionarios}). Its spectral curve is $\Gamma_2\colon p_2 (E,\mu)=\mu^2-E^{5}$.

 Futhermore, the stationary Schr\"odinger operator presents two types of solutions a priori. In fact, when $E=0$, the solutions are
\begin{gather*}
 \phi^{(0)}_{1,2} := \phi_{1,1,2} (x, t_r=0)=x^{-2}, \qquad \phi^{(0)}_{2,2} := \phi_{2,1,2} (x, t_r=0)= x^{3},
\end{gather*}
where
\begin{gather*}
\phi_{1,1,2} (x, t_r)= \dfrac{x}{x^3 + 3t_1}, \qquad \phi_{2,1,2} (x, t_r)=\dfrac{x^6 + 15x^3t_1 -45t_1^2}{x^3 + 3t_1}
\end{gather*}
as they were computed in Example~\ref{ex:E=0}. In this case, we have the following diagram:
\begin{gather*} 
\xymatrix @R=1.5em{ u^{(0)}_{1} =2/x^{2}\ar[d] & & \ar[ll]_{{\rm DT}(\phi^{(0)}_{1,2})} u^{(0)}_{2}=6/x^{2} \ar[rr]^{{\rm DT}(\phi^{(0)}_{2,2})} \ar[d] & & u^{(0)}_{3} =12/x^{2}\ar[d] \\ \mu^2-E^{3}=0 && \mu^2-E^{5}=0 && \mu^2-E^{7}=0. }
\end{gather*}
When energy $E\not=0$, in Example \ref{ex-sol-E} we have computed the solutions
\begin{gather*}
 \phi_{1,2}^+ = e^{\lambda x - \lambda^{3} t_1} \dfrac{\lambda^2 x^3 -3\lambda x^2 +3x +3\lambda^2 t_1}{x^3 +3t_1}, \qquad
 \phi_{1,2}^- = e^{-\lambda x +\lambda^3 t_1} \dfrac{\lambda^2 x^3 +3\lambda x^2 +3x +3\lambda^2 t_1}{x^3 +3t_1},
\end{gather*}
where we have adjusted parameters $\tau_{2}^+ =3 \lambda^2 t_1 =\tau_{2}^-$. Next, take $t_1=0$ to obtain
\begin{gather*}
 \phi_{2}^+ (x,\lambda) =\phi_{1,2}^+ (x,t_r=0,\lambda)= e^{\lambda x } \frac{\lambda^2 x^3 -3\lambda x^2 +3x }{x^3 }, \\
 \phi_{2}^- (x,\lambda) =\phi_{1,2}^- (x,t_r=0,\lambda)= e^{-\lambda x } \frac{\lambda^2 x^3 +3\lambda x^2 +3x }{x^3 }.
\end{gather*}
These functions are solutions of the Schr\"odinger operator for the stationary potential $u^{(0)}_{2}=6/x^{2}$ whenever $E\not=0$. Observe that $\phi_{2}^+ (x,0)= 3/x^2 =\phi_{2}^- (x,0)$, and then they are no longer independent (see Example~\ref{rem:tabla E no 0 taus} for the general case).

Next, we will show how the Darboux transformations act on time dependent potentials and solutions. First recall that for any potential $u$, we have defined the Darboux transformation as
\[
{\rm DT}(\phi_{i,r,n} )u= u-2\left( \log \phi_{i,r,n} \right)_{xx}, \qquad i=1, 2.
\]
Next, we perform the Darboux transformations by means of $\phi_{1,1,2}$ and $\phi_{2,1,2}$ to our initial potential $u_{1,2} $. In these cases we have obtained
\begin{gather*} 
\xymatrix { u_{1,1} =\dfrac{2}{x^2 } & & u_{1,2} =\dfrac{6x\big(x^3\! -6t_1 \big)}{\big(x^3\! +3t_1 \big)^2} \ar[ll]_{{\rm DT}(\phi_{1,1,2}) \quad \ \ } \ar[rr]^{{\rm DT}(\phi_{2,1,2}) \qquad \quad \ } & & u_{1,3} =\dfrac{6x \big(2x^9\! + 675x^3t_1^2\! +1350t_1^3\big)}{\big(x^6\! + 15x^3t_1 -45t_1^2\big)^2}.\!\! }
\end{gather*} 
Then, we must consider the Schr\"odinger operators
\[
-\partial_x^2 +u_{1,j}(x,t_1 )-E, \qquad j=1,2,3.
\]
Their solutions $\phi_{1,j}^+$ and $\phi_{1,j}^-$ were given in Example \ref{ex-sol-E}.

It should be noted that if the energy is not zero, these solutions inherit the same behaviour as their corresponding potentials when the Darboux transformations ${\rm DT}(\phi_{1,1,2})$ and ${\rm DT}(\phi_{2,1,2})$ act on them. Hence we obtain the following diagram:
\begin{gather*}
\xymatrix @R=0.1em{ \phi^+_{1,1}= \dfrac{e^{\lambda x - \lambda^{3} t_1}(\lambda x -1)}{x} & & \ar[ll]_{{\rm DT}(\phi_{1,1,2}) \qquad \quad } \phi^+_{1,2} = \dfrac{e^{\lambda x - \lambda^{3} t_1}(\lambda^2 x^3 -3\lambda x^2 +3x +3\lambda^2 t_1 )}{x^3 +3t_1}\\ \ar[rr]^{\qquad \quad {\rm DT}(\phi_{2,1,2})} & & \phi^+_{1,3} = \dfrac{e^{\lambda x - \lambda^{3} t_1}Q^+_3 (\lambda,x, t_1) }{x^6 + 15x^3t_1 -45t_1^2}, \\
\phi^-_{1,1} = \dfrac{e^{-\lambda x +\lambda^3 t_1}(\lambda x +1)}{x} & & \ar[ll]_{{\rm DT}(\phi_{1,1,2}) \qquad \quad } \phi^-_{1,2} = \dfrac{e^{-\lambda x +\lambda^3 t_1}(\lambda^2 x^3 +3\lambda x^2 +3x +3\lambda^2 t_1)}{x^3 +3t_1} \\
\ar[rr]^{ \quad \qquad {\rm DT}(\phi_{2,1,2})} & & \phi^-_{1,3} = \dfrac{e^{-\lambda x +\lambda^3 t_1} Q^-_3 (\lambda,x, t_1)}{x^6 + 15x^3t_1 -45t_1^2}, }
\end{gather*}
where
\begin{gather*}
Q^+_3 (\lambda,x, t_1) = \lambda^3 x^6 - 6\lambda^2 x^5 +15\lambda x^4 -15x^3 + 15\lambda^3 x^3 t_1 - 45\lambda^2 x^2 t_1 + 45\lambda xt_1 \\
\hphantom{Q^+_3 (\lambda,x, t_1) =}{} - 45\lambda^3 t_1^2 - 45t_1,\\
Q^-_3 (\lambda,x, t_1) = \lambda^3 x^6 + 6\lambda^2 x^5 +15\lambda x^4 +15x^3 + 15\lambda^3 x^3 t_1 +45\lambda^2 x^2 t_1 + 45\lambda xt_1 \\
\hphantom{Q^-_3 (\lambda,x, t_1) =}{} - 45\lambda^3 t_1^2 +45t_1.
\end{gather*}

The zero energy case is essentially different from the point of view of the Darboux transformations. We only can partially obtain the previous diagram:
\begin{gather*} \hspace*{-10mm}\xymatrix @R=0.1em{ & & \phi_{1,1,2} = \dfrac{x}{x^3\! + 3t_1} \ar[rr]^{\!\!\!\!\!\!\!\!\!{\rm DT}(\phi_{2,1,2})} & & \phi_{1,1,3} = \dfrac{x^3\! + 3t_1}{x^6\! + 15x^3t_1\! -45t_1^2}\!\!\! \\ \phi_{2,1,1} = \dfrac{x^3\! +3t_1}{x} & & \ar[ll]_{\!\!\!\!\!\!\!{\rm DT}(\phi_{1,1,2}) \ } \phi_{2,1,2} = \dfrac{x^6\! + 15x^3t_1\! -45t_1^2}{x^3\! + 3t_1}. & & }
\end{gather*}
To compute fundamental matrices associated to $u_{1,1}$ and $u_{1,3}$ we have to use Theorem \ref{soluciones1} (see Example~\ref{ex:E=0}).
\end{Example}

\section{Differential Galois groups }\label{galo}

In this section we study the Picard--Vessiot extensions of the differential systems \eqref{eq:systemkdvr} and \eqref{eq:systemkdv1E r}, obtained for energy levels $E=0$ and $E\not=0$ respectively. We recall that the base differential field is $K_r = \nC ( x, {t}_r )$ with field of constants $\nC$.

We point out that the behaviour that they present depend strongly on the affine point $P=(E,\mu)$ of the corresponding spectral curve.
They present a similar behaviour when the point $P=(E,\mu)$ is a regular point of $\Gamma_n$.

A fundamental matrix for $E = 0$ can be also computed. However, it is not obtained by a~specialization process from the fundamental matrix obtained for a regular point.

We obtain the Picard--Vessiot extensions given by $\mathcal{B}^{(r)}_{n,0}$ and $\mathcal{B}^{(r)}_{n,\lambda}$ and compute their corresponding differential Galois group, say $\mathcal{G}^{(r)}_{n,0}$ and $\mathcal{G}^{(r)}_{n,\lambda}$ respectively.

\subsection[Case $E=0$]{Case $\boldsymbol{E=0}$}

For this case we have the fundamental matrix
\[ \mathcal{B}^{(r)}_{n,0} = \begin{pmatrix} \phi_{1,r,n} & \phi_{2,r,n} \\ \phi_{1,r,n,x} & \phi_{2,r,n,x} \end{pmatrix},\]
where $\phi_{1,r,n}$, $\phi_{1,r,n,x}$, $\phi_{2,r,n}$, $\phi_{2,r,n,x} $ are rational functions in $x$, $t$, hence they are in~$K_r$. So, the Picard--Vessiot field is again~$K_r$. Thus, the differential Galois group is the trivial group, \mbox{$ \mathcal{G}^{(r)}_{n,0} = \{ \textrm{id}_2 \} $}.

\subsection[Case $E \neq 0$]{Case $\boldsymbol{E \neq 0}$}

In this case, we compute the differential extension given for each value of $\lambda \neq 0$. For this, we fix a value of $\lambda$ different from zero, $\lambda = \lambda_0$, then the point $P=(E_0,\mu_0)$ is a regular point of $\Gamma_n$, that is $E_0 \neq 0$. The fundamental matrix is
\[
\mathcal{B}^{(r)}_{n,\lambda_0} = \begin{pmatrix} \phi^+_{r,n} (\lambda_0) & \phi^-_{r,n} (\lambda_0) \\[3pt] \phi^+_{r,n,x} (\lambda_0) & \phi^-_{r,n,x} (\lambda_0) \end{pmatrix},
\]
for $\phi^+_{r,n} (\lambda_0)$, $\phi^+_{r,n,x} (\lambda_0)$, $\phi^-_{r,n} (\lambda_0)$ and $\phi^-_{r,n,x} (\lambda_0) \in K_r(\eta_r)$, with $\eta_r = e^{\lambda_0 x +(-1)^r \lambda_0^{2r+1} t_r}$. Then, the Picard--Vessiot field is $L_r = K_r(\eta_r)$.

To compute the differential Galois group $\mathcal{G}^{(r)}_{n,\lambda_0}$ in this case, we just have to compute the action of $\mathcal{G}^{(r)}_{n,\lambda_0}$ on $\eta_r$. For this, let $\sigma$ in $\mathcal{G}^{(r)}_{n,\lambda_0}$ be an automorphism of the differential Galois group, then
\begin{gather*}
\left( \dfrac{\sigma (\eta_r)}{\eta_r} \right)_x = \dfrac{\sigma (\lambda_0 \eta_r) - \lambda_0 \sigma (\eta_r)}{\eta_r} = \dfrac{\lambda_0 \sigma (\eta_r) - \lambda_0 \sigma (\eta_r)}{\eta_r} = 0,\\
\left( \dfrac{\sigma (\eta_r)}{\eta_r} \right)_{t_r} = \dfrac{\sigma ((-1)^r \lambda_0^{2r+1} \eta_r) - (-1)^r \lambda_0^{2r+1} \sigma (\eta_r)}{\eta_r} \\
\hphantom{\left( \dfrac{\sigma (\eta_r)}{\eta_r} \right)_{t_r}}{} = \dfrac{(-1)^r \lambda_0^{2r+1} \sigma (\eta_r) - (-1)^r \lambda_0^{2r+1} \sigma (\eta_r)}{\eta_r} = 0.
\end{gather*}
Therefore $\frac{\sigma (\eta_r)}{\eta_r}$ is a constant in $K_r$. Hence $\sigma (\eta_r) = c \cdot \eta_r$ for some $c\in \nC$. As a consequence we get that, for each $\lambda_0$ and every $n$, the differential Galois group is isomorphic to the multiplicative group, say
\[
\mathcal{G}^{(r)}_{n,\lambda_0} \simeq G_m = \left \{ \begin{pmatrix} c & 0 \\ 0 & c^{-1} \end{pmatrix}\colon c \in \nC^* \right \}.
\]

\begin{Remark}Since the Galois groups $\mathcal{G}^{(r)}_{n,\lambda_0}$ are obtained for a particular value of $\lambda$ by especialization process, they do not depend on~$\lambda$. For a spectral study of the Picard--Vessiot extensions see \cite{MoRuZu}.
\end{Remark}

\subsection{Global behaviour of the differential Galois groups}

Let us consider the family of linear algebraic groups $\big\{ \mathcal{G}^{(r)}_{n,\lambda} \big\}_{\lambda\in \C}$. Then for each point in $\Gamma_n$ we have found a linear algebraic group. As a result of our constructions we have a sheave structure of groups on the regular points of $\Gamma_n$
\[
\Gamma_n \setminus \text{Sing} (\ \Gamma_n ) \ni \big({-}\lambda^2, i \lambda^{2n+1}\big) \longrightarrow \mathcal{G}^{(r)}_{n,\lambda}.
\]
For each $\lambda \in \nC$, the situation is encoded in the following diagram
\begin{gather*} 
\xymatrix@R=1.4em{ \mathcal{G}_{n-1} \ar[d] & & \mathcal{G}_{n} \ar[d] & & \mathcal{G}_{n+1} \ar[d] \\ \Gamma_{n-1}^* \ar@{<.}[d] \ar[rr]^{\text{Blowing-up} } & & \Gamma_{n}^* \ar[rr]^{\text{Blowing-up}} \ar@{<.}[d] & &\Gamma_{n+1}^* \ar@{<.}[d] \\ L_{n-1} & & \ar[ll]_{{\rm DT}(\phi^{(0)}_{1,n})} L_{n} \ar[rr]^{{\rm DT}(\phi^{(0)}_{2,n})}&& L_{n+1}. }
\end{gather*}

We observe {\it the invariance of the Galois groups with respect to:
\begin{itemize}\itemsep=0pt
\item The time of each level $r$ of the KdV hierarchy once it have been adjusted to the level of the KdV hierarchy. Observe that we are constructing the field of coefficients $K_r$.
\item Generic values of the spectral parameter, i.e., moving along the regular points of the spectral curve.
\item Darboux transformations. \end{itemize}}

\appendix

\section{Auxiliary results}

We establish a series of easy corollaries of the result of Proposition~\ref{prop-Green1}. They are necessary in the Section~\ref{ssec-DT-C}. We use the same notation as in Section~\ref{ssec:ext green}.

\begin{Corollary}\label{cor-G1}
We have
\[
F_n^0 F_{n,x} - F^0_{n,x} F_n = (E - E_0) P_n,
\]
where $P_n$ is a polynomial in $E$ of degree at most $n-1$. In particular for $E_0 =0$ we obtain
\[
f_n F_{n,x} - f_{n,x} F_n = E P_n.
\]
\end{Corollary}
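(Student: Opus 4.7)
The plan is to exploit the polynomial structure of $F_n$ in $E$ together with a simple evaluation argument. By the definition \eqref{eq:F_r}, we have $F_n = \sum_{j=0}^{n} f_{n-j}\, E^j$, where the coefficients $f_{n-j}$ are differential polynomials in $u^{(0)}$ and, crucially, do not depend on $E$. Hence $F_{n,x} = \sum_{j=0}^{n} f_{n-j,x}\, E^j$ is also a polynomial of degree $n$ in $E$, while $F_n^{0} = F_n|_{E=E_0}$ and $F_{n,x}^{0} = F_{n,x}|_{E=E_0}$ are constants in $E$ (depending on $E_0$).

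First, I would view the combination
\[
\Delta(E) := F_n^{0}\, F_{n,x} - F^{0}_{n,x}\, F_n
\]
as a polynomial in the indeterminate $E$ of degree at most $n$, with coefficients in the differential ring generated by $u^{(0)}$ and $E_0$. Next, I would substitute $E = E_0$: since then $F_n$ specializes to $F_n^{0}$ and $F_{n,x}$ specializes to $F^{0}_{n,x}$, the expression reduces to $F_n^{0} F^{0}_{n,x} - F^{0}_{n,x} F_n^{0} = 0$. Therefore, by the factor theorem applied in the polynomial ring $R[E]$ (with $R$ the ring of coefficients described above), we conclude that $(E - E_0)$ divides $\Delta(E)$, so there exists a polynomial $P_n \in R[E]$ with
\[
\Delta(E) = (E - E_0)\, P_n,
\]
and comparing degrees in $E$ yields $\deg_E P_n \leq n-1$, as claimed.

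The second statement follows by specializing $E_0 = 0$: then $F_n^{0} = f_n$ and $F^{0}_{n,x} = f_{n,x}$, so the identity reads $f_n F_{n,x} - f_{n,x} F_n = E\, P_n$.

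There is really no hard step here; the only thing worth being mindful of is checking that the coefficients $f_{n-j}$ are indeed independent of $E$ (so that the factor theorem can be applied cleanly in $R[E]$), and keeping track that the resulting quotient polynomial $P_n$ depends on both $E$ and $E_0$, hence its coefficients lie in the ring of differential polynomials of $u^{(0)}$ extended by $E_0$.
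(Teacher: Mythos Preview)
Your argument is correct and is a cleaner route than the one the paper takes. The paper expands both $F_n$ and $F_n^0$ as finite sums in $E$ and $E_0$, multiplies them out to obtain
\[
F_n^0 F_{n,x} - F^0_{n,x} F_n = \sum_{i\neq j} \bigl(E_0^i E^j - E_0^j E^i\bigr) f_{n-i} f_{n-j,x},
\]
and then factors each bracket $E_0^i E^j - E_0^j E^i$ by hand as $(E-E_0)$ times an explicit polynomial in $E$ and $E_0$. You bypass this computation entirely by observing that $\Delta(E)$ is a polynomial of degree at most $n$ in $E$ with coefficients independent of $E$, evaluating $\Delta(E_0)=0$, and invoking the factor theorem (valid here because $E-E_0$ is monic, so division with remainder works in $R[E]$ for any commutative ring $R$). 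What the paper's approach buys is an explicit closed formula for $P_n$ in terms of the $f_{n-i}$; what your approach buys is brevity and transparency. Since the explicit formula for $P_n$ is not actually used in the subsequent corollaries (only its existence and degree bound are needed), your version loses nothing for the purposes of the paper.
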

\begin{proof}Since $ F_n = \sum\limits_{l=0}^n f_{n-l} E^l$ and $F^0_n = \sum\limits_{l=0}^n f_{n-l} E_0^l $, we have that
\begin{gather}
 F_n^0 F_{n,x} - F^0_{n,x} F_n = \sum_{i,j=0}^n f_{n-i} f_{n-j,x} E_0^i E^j - \sum_{i,j=0}^n f_{n-i} f_{n-j,x} E_0^j E^i \nonumber \\
\hphantom{F_n^0 F_{n,x} - F^0_{n,x} F_n}{} = \sum_{\substack{i,j=0\\i\ne j}}^n \big(E_0^i E^j - E_0^j E^i\big) f_{n-i} f_{n-j,x}.\label{eq:producto}
\end{gather}
We factor the term $E_0^i E^j - E_0^j E^i$:
\[
E_0^i E^j - E_0^j E^i = (E - E_0) (E E_0)^{\min (i,j)} (-1)^{\textrm{sign} (i,j)} \left (\sum_{k=0}^{|j-i|-1} E^k E_0^{|j-i|-1-k} \right ),
\]
and replace it in \eqref{eq:producto}. We get
\begin{gather*}
F_n^0 F_{n,x} - F^0_{n,x} F_n = \nonumber \\
\qquad{} = (E - E_0) \sum_{\substack{i,j=0\\i\ne j}}^n (E E_0)^{\min (i,j)} (-1)^{\textrm{sign} (i,j)} \left (\sum_{k=0}^{|j-i|-1} E^k E_0^{|j-i|-1-k} \right ) f_{n-i} f_{n-j,x} \nonumber \\
\qquad{} = (E - E_0) P_n,
\end{gather*}
for $P_n$ a polynomial in $E$ of degree at most $n-1$, as it is stated.
\end{proof}

\begin{Corollary}\label{cor-G2}We have
\[ \mu^2 \big(F_n^0\big)^2 -\mu_0^2 F_n^2 = (E - E_0) \left ( \dfrac{F_n F_n^0 P_{n,x}}{2} + F_n^2\big(F_n^0\big)^2 - \dfrac{P_n (F_{n} F_{n,x}^0 + F_{n,x} F_n^0)}{4} \right ),
\]
where $P_n$ is the polynomial obtained in Corollary~{\rm \ref{cor-G1}}. In particular for $E_0 =0$ we obtain
\begin{gather*} \mu^2 f_n^2 -\mu_0^2 F_n^2 = E \left ( \dfrac{F_n f_n P_{n,x}}{2} + F_n^2(f_n)^2 - \dfrac{P_n (F_{n} f_{n,x} + F_{n,x} f_n)}{4} \right ).
\end{gather*}
\end{Corollary}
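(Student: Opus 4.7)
The plan is to expand $\mu^2(F_n^0)^2 - \mu_0^2 F_n^2$ by substituting in the defining equation of the spectral curve and then regrouping the resulting terms so that Corollary~\ref{cor-G1} can be applied. Specifically, from equation~\eqref{eq:spectral curve} applied to the two points $(E,\mu)$ and $(E_0,\mu_0)$ we have
\begin{gather*}
\mu^2 = \frac{F_n F_{n,xx}}{2} + (E-u)F_n^2 - \frac{F_{n,x}^2}{4}, \qquad
\mu_0^2 = \frac{F_n^0 F_{n,xx}^0}{2} + (E_0-u)(F_n^0)^2 - \frac{(F_{n,x}^0)^2}{4}.
\end{gather*}

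Multiplying the first by $(F_n^0)^2$, the second by $F_n^2$, and subtracting splits the left-hand side into three natural pieces: a ``second derivative'' piece proportional to $F_n F_n^0 (F_{n,xx} F_n^0 - F_{n,xx}^0 F_n)$, a potential/energy piece that immediately factors as $(E-E_0) F_n^2(F_n^0)^2$ (the $u$-terms cancel), and a ``first derivative'' piece proportional to $(F_{n,x}^0)^2 F_n^2 - F_{n,x}^2 (F_n^0)^2$.

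The key step is then to extract the factor $(E-E_0)$ from the first and third pieces. For the first piece I would differentiate the identity $F_n^0 F_{n,x} - F_{n,x}^0 F_n = (E-E_0)P_n$ from Corollary~\ref{cor-G1} with respect to~$x$; the two cross terms $F_{n,x}^0 F_{n,x}$ cancel, leaving $F_n^0 F_{n,xx} - F_{n,xx}^0 F_n = (E-E_0)P_{n,x}$, which is exactly what is needed. For the third piece I would factor it as a difference of squares, $(F_{n,x}^0 F_n - F_{n,x} F_n^0)(F_{n,x}^0 F_n + F_{n,x} F_n^0)$, and apply Corollary~\ref{cor-G1} directly to the first factor to pull out $-(E-E_0)P_n$.

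Assembling these three contributions yields precisely
\[
\mu^2 (F_n^0)^2 -\mu_0^2 F_n^2 = (E-E_0)\left(\frac{F_n F_n^0 P_{n,x}}{2} + F_n^2(F_n^0)^2 - \frac{P_n(F_n F_{n,x}^0 + F_{n,x} F_n^0)}{4}\right),
\]
and the special case $E_0=0$ follows by noting that $F_n^0\big|_{E_0=0}=f_n$ and $F_{n,x}^0\big|_{E_0=0}=f_{n,x}$ in view of~\eqref{eq:F_r}. There is no serious obstacle: the only thing to be mindful of is the sign when factoring the difference of squares and the bookkeeping needed to verify that the $u$-dependent terms cancel. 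The proof is essentially a guided algebraic manipulation in which Corollary~\ref{cor-G1} (and its $x$-derivative) does all of the real work.
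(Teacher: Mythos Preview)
Your proposal is correct and follows essentially the same approach as the paper's own proof: substitute the spectral-curve identity~\eqref{eq:spectral curve} for $\mu^2$ and $\mu_0^2$, split into the three pieces you describe, apply Corollary~\ref{cor-G1} to the difference-of-squares factor and its $x$-derivative to the second-derivative piece, and collect the common factor $(E-E_0)$. The only cosmetic difference is that the paper writes $-(u-E)$ where you write $(E-u)$.
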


\begin{proof}By \eqref{eq:spectral curve} we have
\begin{gather*}
\mu^2 = R_{2n+1} = \dfrac{F_n F_{n,xx}}{2} - (u - E) F_n^2 - \dfrac{F_{n,x}^2}{4}, \\
\mu_0^2 = R_{2n+1} (E_0)= \dfrac{F^0_n F^0_{n,xx}}{2} - (u - E_0) \big(F_n^0\big)^2 - \dfrac{\big(F^0_{n,x}\big)^2}{4}.
\end{gather*}
Hence,
\begin{gather*}
 \mu^2 \big(F_n^0\big)^2\! -\!\mu_0^2 F_n^2 = \dfrac{F_n F_n^0}{2} \big(F_{n,xx}F_n^0\! -\! F^0_{n,xx}F_n\big)\! + \dfrac{F_{n}^2 \big(F_{n,x}^0\big)^2\! -\! F_{n,x}^2 \big(F_n^0\big)^2}{4}+ (E \!-\! E_0) F_n^2\big(F_n^0\big)^2 \\
 \hphantom{\mu^2 \big(F_n^0\big)^2\! -\!\mu_0^2 F_n^2}{} = \dfrac{F_n F_n^0}{2} \big(F_{n,xx}F_n^0 - F^0_{n,xx}F_n\big) + (E - E_0) F_n^2\big(F_n^0\big)^2 \\
\hphantom{\mu^2 \big(F_n^0\big)^2\! -\!\mu_0^2 F_n^2=}{} + \dfrac{\big(F_{n} F_{n,x}^0 - F_{n,x} F_n^0\big)\big(F_{n} F_{n,x}^0 + F_{n,x} F_n^0\big)}{4}.
 \end{gather*}
As $ F_n^0 F_{n,xx} - F^0_{n,xx}F_n = (F_n^0 F_{n,x} - F^0_{n,x}F_n)_x = (E-E_0) P_{n,x}$, by Corollary \ref{cor-G1} we obtain
\begin{gather*} \mu^2 \big(F_n^0\big)^2 -\mu_0^2 F_n^2 = (E - E_0) \left ( \dfrac{F_n F_n^0 P_{n,x}}{2} + F_n^2\big(F_n^0\big)^2 - \dfrac{P_n \big(F_{n} F_{n,x}^0 + F_{n,x} F_n^0\big)}{4} \right ).\tag*{\qed}
\end{gather*}\renewcommand{\qed}{}
\end{proof}

Now, let $(E_0, \mu_0)$ be a regular point of $\Gamma_n$ and $\mu_0 = 0$. In this case, we have that $R^0_{2n+1} = R_{2n+1} (E_0) =0$ and $ \partial_E ( R_{2n+1} )(E_0) \neq 0$, thus,
\begin{gather}\label{eq:M2n}
 \mu^2 = R_{2n+1} (E) = (E-E_0) M_{2n},
\end{gather}
where $M_{2n} (E)$ is a polynomial in $E$ of degree $2n$ such that $M_{2n} (E_0) \neq 0$.

\begin{Corollary}\label{cor-G3}
Let $(E_0, \mu_0)$ be a regular point of $\Gamma_n$ and $\mu_0 = 0$. We have that
\[
\frac{M_{2n} }{F_n } + \frac{(E- E_0) P_n^2}{4F_n \big(F_n^0\big)^2}
\]
is a polynomial in $E$ of degree $n$, with $P_n$ the polynomial obtained in Corollary~{\rm \ref{cor-G1}} and $M_{2n}$ the polynomial defined in \eqref{eq:M2n}.
\end{Corollary}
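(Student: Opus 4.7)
The plan is to combine the two previous corollaries algebraically and observe that the potentially problematic denominators $F_n$ and $(F_n^0)^2$ disappear, leaving an expression that is manifestly a polynomial in $E$ of degree $n$.

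First I would apply Corollary \ref{cor-G2} with $\mu_0 = 0$, obtaining
\[
\mu^2 (F_n^0)^2 = (E - E_0)\left[\frac{F_n F_n^0 P_{n,x}}{2} + F_n^2 (F_n^0)^2 - \frac{P_n (F_n F_{n,x}^0 + F_{n,x} F_n^0)}{4}\right].
\]
Since $\mu^2 = (E-E_0) M_{2n}$ by \eqref{eq:M2n}, I can cancel the factor $(E-E_0)$ on both sides to get a clean identity for $M_{2n} (F_n^0)^2$. Dividing that identity by $F_n (F_n^0)^2$ expresses $M_{2n}/F_n$ as a sum of $F_n$ plus two rational correction terms involving $P_n$, $P_{n,x}$, $F_n^0$ and $F_{n,x}^0$.

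Next I would add the remaining term $(E-E_0) P_n^2 / [4 F_n (F_n^0)^2]$. Using Corollary \ref{cor-G1} in the form $(E-E_0) P_n = F_n^0 F_{n,x} - F_{n,x}^0 F_n$, I would rewrite $(E-E_0) P_n^2 = P_n (F_n^0 F_{n,x} - F_{n,x}^0 F_n)$, so that the combined numerator
\[
(E-E_0) P_n^2 - P_n (F_n F_{n,x}^0 + F_{n,x} F_n^0)
\]
collapses to $-2 P_n F_n F_{n,x}^0$. The factor $F_n$ then cancels, and the whole expression simplifies to
\[
\frac{M_{2n}}{F_n} + \frac{(E-E_0)P_n^2}{4 F_n (F_n^0)^2} = F_n + \frac{F_n^0 P_{n,x} - P_n F_{n,x}^0}{2 (F_n^0)^2} = F_n + \frac{1}{2}\,\partial_x\!\left(\frac{P_n}{F_n^0}\right).
\]

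Finally, the conclusion follows by inspection of $E$-degrees. Since $F_n^0 = F_n(E_0,x)$ is independent of $E$, the second term $\tfrac{1}{2}\partial_x(P_n/F_n^0)$ is a polynomial in $E$ of degree at most $n-1$ (with coefficients rational in $x$), inherited from the degree bound on $P_n$ given by Corollary \ref{cor-G1}. Adding $F_n$, which is a polynomial of degree exactly $n$ in $E$, produces a polynomial of degree $n$ as claimed. I expect the only mild obstacle is bookkeeping in the cancellation of the two $P_n$-terms after invoking Corollary \ref{cor-G1}; everything else is a straightforward substitution.
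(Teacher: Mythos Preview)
Your argument is correct, and it takes a genuinely different route from the paper's proof. The paper expands $M_{2n}$ via the spectral curve identity $\mu^2 = \tfrac{1}{2}F_n F_{n,xx} - (u-E)F_n^2 - \tfrac{1}{4}F_{n,x}^2$ and expands $P_n^2$ via Corollary~\ref{cor-G1}, combines them over the common denominator $4(E-E_0)(F_n^0)^2$, and then checks by direct substitution that the resulting degree-$(n+1)$ numerator vanishes at $E=E_0$ (this is where $\mu_0=0$ enters), leaving a quotient of degree $n$. You instead invoke Corollary~\ref{cor-G2} with $\mu_0=0$, which has already absorbed that vanishing, and after the cancellation $(E-E_0)P_n^2 - P_n(F_nF_{n,x}^0+F_{n,x}F_n^0)=-2P_nF_nF_{n,x}^0$ you arrive at the explicit closed form
\[
\frac{M_{2n}}{F_n}+\frac{(E-E_0)P_n^2}{4F_n(F_n^0)^2}=F_n+\tfrac{1}{2}\,\partial_x\!\Big(\frac{P_n}{F_n^0}\Big),
\]
from which the degree claim is immediate. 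Your version is slightly more economical and yields a more informative formula; the paper's version is more self-contained in that it does not rely on Corollary~\ref{cor-G2}.
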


\begin{proof}We have
\begin{gather*}
 M_{2n} = \dfrac{\mu^2}{E-E_0} = \dfrac{F_n F_{n,xx}}{2(E-E_0)} - \dfrac{(u-E) F^2_n}{E-E_0} - \dfrac{F^2_{n,x}}{4(E-E_0)}, \\
 P^2_n = \dfrac{\big(F^0_{n} F_{n,x} - F^0_{n,x} F_{n}\big)^2}{(E-E_0)^2} = \dfrac{\big(F^0_{n}\big)^2 F^2_{n,x} + \big(F^0_{n,x}\big)^2 F^2_{n} -2 F^0_{n}F_{n} F_{n,x} F^0_{n,x} }{(E-E_0)^2}.
\end{gather*}
We replace these expressions in the formula and we get
\[ \frac{M_{2n} }{F_n } + \frac{(E- E_0) P_n^2}{4F_n \big(F_n^0\big)^2} = \dfrac{2 \big(F^0_n\big)^2 F_{n,xx} -4(u-E) \big(F^0_n\big)^2 F_n + \big(F^0_{n,x}\big)^2 F_n - 2 F^0_n F^0_{n,x} F_{n,x}}{4(E-E_0) \big(F^0_n\big)^2}. \]
The numerator of this function is a polynomial in $E$ of degree $n+1$ and has a root in $E=E_0$ as can be easily verified replacing $E$ by $E_0$:
\[ 2 \big(F^0_n\big)^2 F^0_{n,xx} -4(u-E^0) \big(F^0_n\big)^3 - \big(F^0_{n,x}\big)^2 F^0_n = 4 F^0_n \mu_0^2 =0. \]
So, we get that
\[ 2 \big(F^0_n\big)^2 F_{n,xx} -4(u-E) \big(F^0_n\big)^2 F_n + \big(F^0_{n,x}\big)^2 F_n - 2 F^0_n F^0_{n,x} F_{n,x} = (E-E_0) Q_n, \]
where $Q_n$ denotes a polynomial in $E$ of degree $n$. Hence
\[
\frac{M_{2n} }{F_n } + \frac{(E- E_0) P_n^2}{4F_n \big(F_n^0\big)^2} = \dfrac{ Q_n}{4\big(F^0_n\big)^2}
\]
and then the result follows.
\end{proof}

Next, let $(E_0, \mu_0)$ be a singular point of $\Gamma_n$. In this case, $\mu_0 =0$, $R^0_{2n+1} = R_{2n+1} (E_0) $ $=0$ and $ \partial_E ( R_{2n+1} )(E_0) = 0$, thus,
\begin{gather}\label{eq:Z2n}
 \mu^2 = R_{2n+1} (E) = (E-E_0)^2 Z_{2n-1},
\end{gather}
where $Z_{2n-1} (E)$ is a polynomial in $E$ of degree $2n-1$ such that $Z_{2n-1} (E_0) \neq 0$.

\begin{Corollary}\label{cor-G4}
Let $(E_0, \mu_0)$ be a singular point of $\Gamma_n$. We have that
\[
\frac{Z_{2n-1} }{F_n } + \frac{P_n^2}{4F_n \big(F_n^0\big)^2}
\]
is a polynomial in $E$ of degree $n-1$, with $P_n$ the polynomial obtained in Corollary~{\rm \ref{cor-G1}} and $Z_{2n-1}$ the polynomial defined in~\eqref{eq:Z2n}.
\end{Corollary}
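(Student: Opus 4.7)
The plan is to follow the template of Corollary \ref{cor-G3}, pulling one additional factor of $(E-E_0)$ out of the numerator by exploiting the extra vanishing condition $R'_{2n+1}(E_0) = 0$ that characterises a singular point. From the proof of Corollary \ref{cor-G3} we already have the identity
\begin{gather*}
\frac{M_{2n}}{F_n} + \frac{(E-E_0) P_n^2}{4F_n \big(F_n^0\big)^2} = \frac{N(E)}{4\big(F_n^0\big)^2 (E-E_0)},
\end{gather*}
where $N(E) := 2\big(F_n^0\big)^2 F_{n,xx} - 4(u-E)\big(F_n^0\big)^2 F_n + \big(F_{n,x}^0\big)^2 F_n - 2 F_n^0 F_{n,x}^0 F_{n,x}$ has degree $n+1$ in~$E$ and satisfies $N(E_0) = 4 F_n^0\, \mu_0^2$. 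In the present singular situation $E_0$ is a double root of $R_{2n+1}$, hence $M_{2n} = (E-E_0) Z_{2n-1}$, and dividing the preceding identity by $(E-E_0)$ gives
\begin{gather*}
\frac{Z_{2n-1}}{F_n} + \frac{P_n^2}{4F_n \big(F_n^0\big)^2} = \frac{N(E)}{4\big(F_n^0\big)^2 (E-E_0)^2}.
\end{gather*}

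The heart of the argument is therefore to show that $N(E)$ is divisible by $(E-E_0)^2$, i.e.\ that $N'(E_0)=0$. I would differentiate the spectral-curve identity $R_{2n+1}(E) = \frac{F_n F_{n,xx}}{2} - (u-E) F_n^2 - \frac{F_{n,x}^2}{4}$ with respect to $E$, evaluate at $E_0$, and multiply by $4 F_n^0$. Comparing the result with the explicit expression for $N'(E_0)$ obtained by differentiating the definition of $N$ and using $\partial_E (u-E) = -1$ produces, after rearrangement, the clean identity
\begin{gather*}
N'(E_0) = 4 F_n^0\, R'_{2n+1}(E_0) \;-\; 4\mu_0^2\, \partial_E F_n\big|_{E_0},
\end{gather*}
in which the second term factors out the same combination $2 F_n^0 F_{n,xx}^0 - 4(u-E_0)\big(F_n^0\big)^2 - \big(F_{n,x}^0\big)^2 = 4\mu_0^2$ that made $N(E_0)$ vanish. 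Since both $\mu_0 = 0$ and $R'_{2n+1}(E_0) = 0$ at a singular point, one concludes $N'(E_0) = 0$.

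Consequently $N(E) = (E-E_0)^2 T_{n-1}(E)$ with $\deg T_{n-1} = n-1$ (as $\deg N = n+1$), and the target expression equals $T_{n-1}(E)/\big(4\big(F_n^0\big)^2\big)$, a polynomial in $E$ of degree $n-1$, as claimed. The main obstacle is the bookkeeping needed to verify the identity for $N'(E_0)$; this is the derivative analogue of the equality $N(E_0) = 4 F_n^0\, \mu_0^2$ already used in Corollary~\ref{cor-G3}, is entirely algebraic, and is the unique point at which the singularity hypothesis $R'_{2n+1}(E_0) = 0$ is actually invoked.
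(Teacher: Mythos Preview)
Your proof is correct and is precisely the ``analogous computation'' that the paper alludes to in its one-line proof. You have in fact been more careful than the paper: the key identity $N'(E_0) = 4F_n^0\, R'_{2n+1}(E_0) - 4\mu_0^2\, \partial_E F_n\big|_{E_0}$, which you isolate explicitly, is exactly the derivative analogue of the relation $N(E_0)=4F_n^0\mu_0^2$ used in Corollary~\ref{cor-G3}, and it makes transparent where the singularity hypothesis $R'_{2n+1}(E_0)=0$ enters.
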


\begin{proof}
It follows by an analogous computation to that of Corollary \ref{cor-G3}.
\end{proof}

\subsection*{Acknowledgements}

We kindly thank all members of the Integrability Madrid Seminar for many fruitful discussions: P.~Acosta-Hum\'anez, D.~Bl\'azquez, J.A.~Capit\'an, R.~Hern\'andez Heredero, A.~P\' erez-Raposo, J.~Rojo Montijano and S.~Rueda. Authors gratefully acknowledge the reviewers for their helpful comments and further references which resulted in an improvement of the preliminary manuscript.

\pdfbookmark[1]{References}{ref}
\LastPageEnding


\begin{thebibliography}{99}
\footnotesize\itemsep=0pt

\bibitem{AM}
Adler M., Moser J., On a class of polynomials connected with the {K}orteweg--de
 {V}ries equation, \href{https://doi.org/10.1007/BF01609465}{\textit{Comm. Math. Phys.}} \textbf{61} (1978), 1--30.

\bibitem{Airault}
Airault H., McKean H.P., Moser J., Rational and elliptic solutions of the
 {K}orteweg--de {V}ries equation and a related many-body problem, \href{https://doi.org/10.1002/cpa.3160300106}{\textit{Comm.
 Pure Appl. Math.}} \textbf{30} (1977), 95--148.

\bibitem{BN}
Beardon A.F., Ng T.W., Parametrizations of algebraic curves, \textit{Ann. Acad.
 Sci. Fenn. Math.} \textbf{31} (2006), 541--554.

\bibitem{BEG}
Braverman A., Etingof P., Gaitsgory D., Quantum integrable systems and
 differential {G}alois theory, \href{https://doi.org/10.1007/BF01234630}{\textit{Transform. Groups}} \textbf{2} (1997),
 31--56, \href{https://arxiv.org/abs/alg-geom/9607012}{arXiv:alg-geom/9607012}.

\bibitem{BR1}
Brezhnev Yu.V., What does integrability of finite-gap or soliton potentials
 mean?, \href{https://doi.org/10.1098/rsta.2007.2056}{\textit{Philos. Trans.~R. Soc. Lond. Ser.~A Math. Phys. Eng. Sci.}}
 \textbf{366} (2008), 923--945, \href{https://arxiv.org/abs/nlin.SI/0505003}{arXiv:nlin.SI/0505003}.

\bibitem{BR2}
Brezhnev Yu.V., Spectral/quadrature duality: {P}icard--{V}essiot theory and
 finite-gap potentials, in Algebraic Aspects of {D}arboux Transformations,
 Quantum Integrable Systems and Supersymmetric Quantum Mechanics,
\textit{Contemp. Math.}, Vol.~563, \href{https://doi.org/10.1090/conm/563/11162}{Amer. Math. Soc.}, Providence, RI, 2012,
 1--31, \href{https://arxiv.org/abs/1011.1642}{arXiv:1011.1642}.

\bibitem{BR3}
Brezhnev Yu.V., Elliptic solitons, {F}uchsian equations, and algorithms,
 \href{https://doi.org/10.1090/S1061-0022-2013-01253-2}{\textit{St.~Petersburg Math.~J.}} \textbf{24} (2013), 555--574.

\bibitem{BC1}
Burchnall J.L., Chaundy T.W., Commutative ordinary differential operators,
 \href{https://doi.org/10.1098/rspa.1928.0069}{\textit{Proc. R. Soc. London Ser.~A}} \textbf{118} (1928), 557--583.

\bibitem{Clarkson}
Clarkson P.A., Vortices and polynomials, \href{https://doi.org/10.1111/j.1467-9590.2009.00446.x}{\textit{Stud. Appl. Math.}}
 \textbf{123} (2009), 37--62, \href{https://arxiv.org/abs/0901.0139}{arXiv:0901.0139}.

\bibitem{CRUM}
Crum M.M., Associated {S}turm--{L}iouville systems, \href{https://doi.org/10.1093/qmath/6.1.121}{\textit{Quart.~J. Math.
 Oxford}} \textbf{6} (1955), 121--127.

\bibitem{da1}
Darboux G., Sur une proposition relative aux \'equations lin\'eaires,
 \textit{Comptes Rendus Acad. Sci.} \textbf{94} (1882), 1456--1459.

\bibitem{da2}
Darboux G., Le\c{c}ons sur la th\'{e}orie g\'{e}n\'{e}rale des surfaces.~{II},
 Gauthier-Villars, Paris, 1889.

\bibitem{EhKn}
Ehlers F., Kn\"{o}rrer H., An algebro-geometric interpretation of the
 {B}\"{a}cklund-transformation for the {K}orteweg--de {V}ries equation,
 \href{https://doi.org/10.1007/BF02565842}{\textit{Comment. Math. Helv.}} \textbf{57} (1982), 1--10.

\bibitem{GD}
Gel'fand I.M., Dikii L.A., Asymptotic properties of the resolvent of
 {S}turm--{L}iouville equations and the algebra of {K}orteweg--de {V}ries
 equations, \href{https://doi.org/10.1070/RM1975v030n05ABEH001522}{\textit{Russian Math. Surveys}} \textbf{30} (1975), no.~5, 77--113.

\bibitem{GH}
Gesztesy F., Holden H., Soliton equations and their algebro-geometric
 solutions. {V}ol.~{I}. $(1+1)$-dimensional continuous models,
 \textit{Cambridge Studies in Advanced Mathematics}, Vol.~79, \href{https://doi.org/10.1017/CBO9780511546723}{Cambridge
 University Press}, Cambridge, 2003.

\bibitem{GHZ}
Gu C., Hu H., Zhou Z., Darboux transformations in integrable systems. Theory
 and their applications to geometry, \textit{Mathematical Physics Studies},
 Vol.~26, \href{https://doi.org/10.1007/1-4020-3088-6}{Springer}, Dordrecht, 2005.

\bibitem{HE}
Hermite C., Sur l'\'equation de {L}am\'e, Oeuvres of Charles Hermite.~{III},
 Gauthier-Villars, Paris, 1912.

\bibitem{JMSZ}
Jim\'{e}nez S., Morales-Ruiz J.J., S\'{a}nchez-Cauce R., Zurro M.-A.,
 Differential {G}alois theory and {D}arboux transformations for integrable
 systems, \href{https://doi.org/10.1016/j.geomphys.2016.06.016}{\textit{J.~Geom. Phys.}} \textbf{115} (2017), 75--88.

\bibitem{marshall}
Marshall I., Semenov-Tian-Shansky M., Poisson groups and differential {G}alois
 theory of {S}chroedinger equation on the circle, \href{https://doi.org/10.1007/s00220-008-0539-9}{\textit{Comm. Math. Phys.}}
 \textbf{284} (2008), 537--552, \href{https://arxiv.org/abs/0710.5456}{arXiv:0710.5456}.

\bibitem{Salle}
Matveev V.B., Salle M.A., Darboux transformations and solitons, \textit{Springer Series
 in Nonlinear Dynamics}, \href{https://doi.org/10.1007/978-3-662-00922-2}{Springer-Verlag}, Berlin, 1991.

\bibitem{MoRuZu}
Morales-Ruiz J.J., Rueda S.L., Zurro M.-A., Algebro-geometric solitonic
 solutions and differential {G}alois theory, \href{https://arxiv.org/abs/1708.00431}{arXiv:1708.00431}.

\bibitem{MoRuZu2}
Morales-Ruiz J.J., Rueda S.L., Zurro M.-A., Factorization of {K}d{V}
 {S}chr\"{o}dinger operators using differential subresultants,
 \href{https://arxiv.org/abs/1902.05443}{arXiv:1902.05443}.

\bibitem{Ol}
Olver P.J., Applications of {L}ie groups to differential equations,
 \textit{Graduate Texts in Mathematics}, Vol.~107, \href{https://doi.org/10.1007/978-1-4684-0274-2}{Springer-Verlag}, New York,
 1986.

\bibitem{WW}
Whittaker E.T., Watson G.N., A course of modern analysis, \textit{Cambridge
 Mathematical Library}, \href{https://doi.org/10.1017/CBO9780511608759}{Cambridge University Press}, Cambridge, 1996.

\bibitem{wilson}
Wilson G., On the quasi-{H}amiltonian formalism of the {K}d{V} equation,
 \href{https://doi.org/10.1016/0375-9601(88)90510-5}{\textit{Phys. Lett.~A}} \textbf{132} (1988), 445--450.

\end{thebibliography}
\end{document}